\newcommand{\Env}{\mathit{E}}
\newcommand{\tletr}{{\tt letrec}}
\newcommand{\tin}{{\tt in}}
\newcommand{\wrt}{{w.r.t.\ }}
\newcommand{\eg}{{e.g.\ }}
\newcommand{\ari}{\mathit{ar}}
\newcommand{\env}{\mathit{env}}
\newcommand{\gentzen}[2]{{\displaystyle \frac{#1}{#2}}}
\newcommand{\gentzent}[3]{{\displaystyle \frac{#1}{#2}}\quad #3}
\newcommand{\dotcup}{\ensuremath{\mathaccent\cdot\cup}}
\newcommand{\matcheq}{{\ \unlhd\ }}
\newcommand{\MBA}{\mathbb{A}}
\newcommand{\LRL}{\mathit{LRL}}
\newcommand{\LRLX}{\mathit{LRLX}}
\newcommand{\LRLXAE}{\mathit{LRLXAE}}
\newcommand{\FA}{\mathit{FA}}
\newcommand{\BA}{\mathit{BA}}
\newcommand{\AT}{\mathit{AT}}
\newcommand{\LA}{\mathit{LA}}
\newcommand{\tops}{\mathit{tops}}
\newcommand{\FIX}{\mathit{FIX}}
\newcommand{\Var}{\mathit{Var}}
\newcommand{\AtVar}{\mathit{AtVar}}
\newcommand{\dom}{\mathit{dom}}
\newcommand{\letrecunify}{{\sc {LetrecUnify}}}
\newcommand{\letrecmatch}{{\sc {LetrecMatch}}}
\newcommand{\letrecenvmatch}{{\sc {LetrecEnvMatch}}}
\newcommand{\letrecdagmatch}{{\sc {LetrecDagMatch}}}
\newcommand{\freshdot}{{\#}}
\newcommand{\NP}{\mathit{NP}}
\newcommand{\gen}[1]{\langle #1 \rangle}
\newcommand{\numberSize}{{\#\mathrm{Lr}\lambda\mathrm{FA}}}
\newcommand{\numberVar}{{\#\mathrm{Var}}}
\newcommand{\numberEqsNonX}{{\#\mathrm{EqNonX}}}  
\newcommand{\numberEquations}{{\#\mathrm{Eqs}}}
\newcommand{\LRLXA}{\mathit{LRLXA}}
\newcommand{\letrecunifyA}{{\sc {LetrecUnifyAV}}}
\newcommand{\letrecunifyAB}{{\sc {LetrecUnifyAVB}}}
\newcommand{\val}{\mathit{val}}
\newcommand{\inv}{\mathit{inv}}
\newcommand{\ELIMAB}{\text{ElimAB}}
\newcommand{\bbbr}{\mathbb{R}}
\definecolor{upmaroon}{rgb}{0.48, 0.07, 0.07}
\newcommand{\ignore}[1]{}
\begin{document}

\setcounter{page}{247}
\publyear{22}
\papernumber{2110}
\volume{185}
\issue{3}

  \finalVersionForARXIV

\title{Nominal Unification and Matching of \\ Higher Order Expressions with Recursive Let}


\author{Manfred Schmidt-Schau{\ss}\orcidlink{https://orcid.org/0000-0001-8809-7385}\thanks{Address
                         for correspondence: Goethe-Universit\"at Frankfurt am Main
                               Fachbereich 12: Informatik und Mathematik, Robert-Mayer-Stra{\ss}e 11-15, 60325
                               Frankfurt am Main, Grrmany. \newline \newline
          \vspace*{-6mm}{\scriptsize{Received February 2021; \ accepted April 2022.}}}
 \\
GU Frankfurt, Germany\\
schauss@ki.cs.uni-frankfurt.de
\and Temur Kutsia\orcidlink{https://orcid.org/0000-0003-4084-7380}\\
RISC, JKU Linz, Austria\\
kutsia{@}risc.jku.at
\and
Jordi Levy\orcidlink{https://orcid.org/0000-0001-5883-5746}\\
IIIA - CSIC, Spain\\
levy{@}iiia.scic.es
\and
Mateu Villaret\orcidlink{https://orcid.org/0000-0002-8066-3458}\\
IMA, Universitat de Girona, Spain\\
villaret{@}ima.udg.edu
\and
Yunus Kutz\orcidlink{https://orcid.org/0000-0002-5060-502X} \\
GU Frankfurt, Germany\\
kutz{@}ki.cs.uni-frankfurt.de
}

 \maketitle

 \runninghead{M. Schmidt-Schauss et al.}{Nominal Unification with Recursive Let}

\begin{abstract}
 A sound and complete algorithm for nominal unification of higher-order expressions with a recursive let is described,
 and shown to run in nondeterministic polynomial time. We also explore specializations like nominal letrec-matching for
  expressions, for DAGs, and for garbage-free expressions and determine their complexity.
  We also provide a nominal unification algorithm
 for higher-order expressions with recursive let and atom-variables, where we show that it also runs in
 nondeterministic polynomial time. In addition we prove that there is a guessing strategy for nominal unification with letrec and  atom-variable that is a trade-off
 between exponential
 growth and non-determinism. Nominal matching with variables representing partial letrec-environments is also shown to be in NP.
\end{abstract}

\begin{keywords}
Nominal unification, lambda calculus, higher-order expressions, recursive let,  atom variables 
\end{keywords}

\emph{This paper is an extended version of the conference publication \cite{schmidt-schauss-kutsia-levy-villaret-LOPSTR-16}.}

\ignore{
Manfred: https://orcid.org/0000-0001-8809-7385

Mateu: https://orcid.org/0000-0002-8066-3458

Temur: https://orcid.org/0000-0003-4084-7380  (it seems that there are not contents...)

Yunus: https://orcid.org/0000-0002-5060-502X

Jordi:   https://orcid.org/0000-0001-5883-5746
}

\section{Introduction}

Unification \cite{baadersnyder:2001}  is an operation to make two logical expressions equal by finding substitutions into variables.
There  are numerous applications in computer science, in particular of (efficient) first-order unification, for example in automated reasoning,
type checking and verification.
Unification algorithms are also extended to higher-order calculi with various equivalence relations.
If  equality includes $\alpha$-conversion and $\beta$-reduction and perhaps also $\eta$-conversion
of a (typed or untyped) lambda-calculus,
then unification procedures are known  (see, e.g., \cite{huet:75}), however,
the problem is undecidable \cite{goldfarb:81,levy-veanes:00}.

Our motivation comes from syntactical reasoning on higher-order expressions, with equality being $\alpha$-equivalence of expressions,
and where a unification algorithm is demanded as a basic service.
Nominal unification is the extension of first-order unification with abstractions. It unifies expressions
\wrt $\alpha$-equivalence, and employs permutations
as a mathematically clean treatment of renamings.
It is known that  nominal unification is decidable
\cite{urban-pitts-gabbay:03,DBLP:journals/tcs/UrbanPG04}, where the complexity of the decision problem is
polynomial time \cite{DBLP:journals/tcs/CalvesF08}.
It can be seen also from a higher-order perspective \cite{levy-villaret:12}, as equivalent to
Miller's higher-order pattern unification \cite{DBLP:journals/logcom/Miller91}.
There are efficient algorithms \cite{DBLP:journals/tcs/CalvesF08,levy-villaret:10}, formalizations of nominal unification
\cite{rincon-fernandez-rocha:16}, formalizations with extensions to commutation properties within expressions
\cite{ayala-arvalho-fernandez-nantes:16}, and generalizations of nominal unification to narrowing
\cite{ayala-fernandez-nantes:16}. Equivariant (nominal) unification \cite{cheney-JAR:2010,cheney-diss:2004,aoto-kikuchi:16}
extends nominal unification by permutation-variables, but it can also be seen as a generalization of nominal unification by
permitting abstract names for variables.

We are interested in unification \wrt an additional extension with cyclic let.
To the best of our knowledge, there is no nominal unification algorithm for higher-order expressions permitting general binding  
structures like a cyclic let.  
Higher-order unification could be applied, however, the algorithms are rather general and thus the obtained complexities of specializations
are too high.  Thus we propose to extend and adapt usual nominal unification \cite{urban-pitts-gabbay:03,DBLP:journals/tcs/UrbanPG04} to languages with recursive let.

The motivation and intended application scenario is as follows: constructing  syntactic reasoning algorithms for showing properties
of program transformations on higher-order expressions
in call-by-need functional languages (see for example \cite{moran-sands-carlsson:99,schmidt-schauss-schuetz-sabel:08})
that have a letrec-construct (also called cyclic let) \cite{ariola-klop-short:94} as in Haskell \cite{haskell2010},
(see \eg \cite{cheney:2005} for a discussion on reasoning with more general name binders, and \cite{urban-kaliszyk:12} for a formalization
of general binders in Isabelle).
Extended nominal matching algorithms are necessary for applying program transformations that could be represented as rewrite rules.
Basic properties of  program transformations like commuting properties of conflicting applications or overlaps
can be analyzed in an automated way if there is a nominal unification algorithm of appropriate complexity.
There may be applications also to co-inductive extensions of  logic programming  \cite{simon-mallya-bansal-gupta:06}
and strict functional languages \cite{DBLP:journals/fuin/JeanninKS17a}.
Basically,
overlaps of expressions have to be computed (a variant of critical pairs)
and  reduction steps (under some strategy) have to be performed.
To this end, first an expressive higher-order language is required to represent the meta-notation of expressions.
For example, the meta-notation  $((\lambda x.e_1)~e_2)$ for a beta-redex is made operational
by using unification variables $X_1,X_2$ for $e_1, e_2$.
The scoping of $X_1$ and $X_2$ is different, which can be dealt with by nominal techniques.
In fact,
a more powerful unification algorithm is required for meta-terms employing recursive letrec-environments.

Our main algorithm \letrecunify\ is derived from  first-order unification and nominal unification: From first-order unification
we borrow the decomposition rules,
and the sharing method from Martelli-Montanari-style unification algorithms \cite{martelli-montanari:82}. The adaptations of decomposition
for abstractions and the advantageous use of permutations of atoms is derived from nominal unification algorithms.
Decomposing letrec-expression requires an extension by a permutation of the bindings in the environment, where, however,
one has to take care of scoping. Since in contrast to basic nominal unification, there are nontrivial fixpoints of permutations
(see Example \ref{example:fixpoints-possible}),
novel techniques are required and lead to a surprisingly moderate complexity:
a fixed-point shifting rule (FPS) and a redundancy removing rule (ElimFP) are required. These rules bound the number of
fixpoint equations $X \doteq \pi{\cdot}X$ (where $\pi$ is a permutation) using techniques and results from computations in permutation groups.
The application of these
techniques is indispensable (see Example \ref{example:exponential-FPS}) for obtaining efficiency.

Inspired by the applications in programming languages, we investigate  the notion of garbage-free expressions. The restriction to garbage-free expressions
 permits several optimizations of
 the unification algorithms. The first is that testing $\alpha$-equivalence is polynomial. Another  advantage is that due to the unique correspondence of
 positions for two $\alpha$-equal garbage-free expressions, we show that in this case, fixpoint equations can be replaced by freshness constraints
 (Corollary \ref{cor:no-garbage-no-fixpointeqs}).

As a further extension, we study the possibility to formulate input problems using atom variables as
in \cite{schmidt-schauss-sabel-kutz:19,schmidt-schauss-sabel-fscd:18} in order to take advantage of the potential of less nondeterminism.
The corresponding algorithm {\letrecunifyA} requires permutation expressions and generalizes freshness constraints as further expressibility, and also
other techniques such as explicit compression of permutations. The algorithm runs in NP time.
We added a strategy to really exploit the extended expressivity and the omission of certain nondeterministic choices.

\medskip
\emph{Related Work:} 
Besides the already mentioned related work, we highlight further work. In nominal
commutative unification \cite{DBLP:conf/lopstr/Ayala-RinconSFN17}, one can observe that there are nontrivial fixpoints of permutations. This is similar to what we have in nominal
unification with recursive let (when garbage-freeness is not required), which is not surprising, because, essentially, this phenomenon is related to the lack of the ordering:
in one case among the arguments of a commutative function symbol, in the other case among the bindings of recursive let. Consequently, nominal C-unification reduces to fixpoint
constraints. Those constraints may have infinitely many incomparable solutions expressed in terms of substitutions and freshness constraints (which is the standard way to
represent nominal unifiers). In \cite{DBLP:conf/rta/Ayala-RinconFN18}, the authors proposed to use fixpoint constraints as a primitive notion (instead of freshness constraints)
to axiomatize $\alpha$-equivalence and, hence, use them in the representation of unifiers, which helped to finitely represent solutions of nominal C-unification problems.
The technical report \cite{RISC5292} contains explanations how to obtain a  nominal C-unification algorithm from a letrec unification algorithm and transfers the
NP-completeness result for letrec unification to nominal commutative unification.

An investigation into nominal rewriting and nominal matching is in \cite{Fernandez-Gabbay:07}, where a nominal matching algorithm is
implicitly derived from nominal unification.

 The $\rho_g$-calculus~\cite{DBLP:journals/mscs/BaldanBCK07} integrates term rewriting and lambda calculus, where cyclic, shared terms are permitted.
Such term-graphs are represented as recursion constraints, which resemble to recursive let environments. The evaluation mechanism of the $\rho_g$-calculus is based on matching
for such shared structures. Matching and recursion equations are incorporated in the object level and rules for their evaluation are presented.

Unification of higher-order expressions with recursive let (but without nominal features) has been studied in the context of proving correctness of program transformations
in call-by-need $\lambda$-calculi~\cite{rau-schmidt-schauss:unif:11,rau-schmidt-schauss:unif:10}. Later, in~\cite{schmidt-schauss-sabel:16}, the authors proposed
a more elaborated approach to address semantic properties of program calculi, which involves unification of meta-expressions of higher-order lambda calculi with letrec environments.
This unification problem extends those from~\cite{rau-schmidt-schauss:unif:11,rau-schmidt-schauss:unif:10}: environments are treated as multisets, different kinds of variables
are considered (for letrec environments, contexts, and binding chains), more than one environment variable is permitted, and  non-linear unification problems are allowed.
 Equivalence there is syntactic, in contrast to our nominal approach where equality modulo $\alpha$ is considered. Unlike \cite{schmidt-schauss-sabel:16}, our unification
 problems do not involve context and chain variables,
 but we do have environment variables in matching problems.
 We investigate  an extension of nominal
 letrec unification with atom variables.

\medskip
There are investigations into variants of nominal techniques with a modified view of variables and their renamings and algorithms for the
respective variants of nominal unification
  \cite{DowekG-Gabbay-Mulligan:10}, however, it is unclear whether this can be extended to letrec.

\medskip
{\em Results}: The nominal letrec unification algorithm 
is complete and runs in nondeterministic polynomial time
(Theorem \ref{thm:unification-terminates}, \ref{thm:letrec-unification-in-NP}).
The nominal letrec matching is NP-complete (Theorems \ref{thm:matching-in-NP}, \ref{thm:matching-NP-hard}),
as well as the nominal letrec unification problem
(Theorems \ref{thm:letrec-unification-in-NP}, \ref{thm:matching-NP-hard}).
Nominal letrec matching for DAGs is in NP and outputs substitutions only (subsection \ref{subsec:nom-dag-matching}),
and a very restricted nominal letrec matching problem is already graph-isomorphism hard
(Theorem \ref{thm:matching-GI-hard}).
Nominal unification for garbage-free expressions can be done with simple fixpoint rules (Corollary \ref{cor:no-garbage-no-fixpointeqs}).
In the extension with atom variables, nominal unification can be done using further useful strategies with less nondeterminism
and is NP-complete (Theorem \ref{thm:LRLXA-strategies-complete}).
We construct an algorithm for nominal matching including letrec-environment variables, which runs in NP time (Theorem \ref{thm:letrecenvmatch}).

\medskip
{\em  Structure of the paper.} It starts with a motivating intuition on nominal unification (Sec.  \ref{sec:intuitions}).
After explaining the ground   letrec-language LLR in Sec. \ref{section:language}, the unification algorithm {\letrecunify} for LLR-expression is
described in Sec. \ref{section:algorithm}. Sec. \ref{sec:letrecunify:complete} contains the arguments for soundness and completeness of {\letrecunify}.
Sec. \ref{sec:matching}  describes an improved algorithm for nominal matching on LLR: {\letrecmatch}. Further sections are on extensions.
 Sec.  \ref{sec:letrec-unif-match-hardness} shows Graph-Isomorphism-hardness of nominal letrec matching and unification on
 garbage-free expressions (Theorem \ref{thm:matching-GI-hard}).
Sec. \ref{sec:no-garbage-fixpoints} shows that fixpoint-equations for  garbage-free expressions can be translated into freshness constraints
(Cor. \ref{cor:no-garbage-no-fixpointeqs}). Sec. \ref{section:LRA} considers nominal unification in an extension with atom variables,
 an nominal unification algorithm {\letrecunifyA}  is defined and the differences to  {\letrecunify} are highlighted. It is shown that there is a simple strategy
 such that nominal unification runs in NP time (Theorem \ref{thm:LRLXA-strategies-complete}).
 The last section (Sec.  \ref{sec:atom-letrec-match})  presents
 a nominal matching algorithm {\letrecenvmatch} that is derived from the corresponding nominal unification algorithm   {\letrecunifyA}.
 Sec. \ref{sec:conclusion} concludes the paper.


\section{Some intuitions}\label{sec:intuitions}

In first order unification we have a language of applications of function symbols over a (possible empty) list of arguments $(f e_1\dots e_n)$,
where $n$ is the arity of $f$, and variables $X$. Solutions of equations between terms are substitutions for variables that make both sides of equations
syntactically equal. First order unification problems may be solved using the following two problem transformation rules:\\

(Decomposition) $\gentzen{\Gamma\dotcup \{(f\, e_1 \dots e_n) \doteq (f\, e'_1 \dots e'_n)\}}
       {\Gamma\cup \{e_1\doteq e'_1 \dots e_n\doteq e'_n\}}$\vspace{2mm}\\

(Instantiation) $\gentzen{\Gamma\dotcup \{X \doteq e\}}{[X\mapsto e]\Gamma}$ \begin{minipage}{0.37 \textwidth}\ \ If $X$ does not occur in $e$.\end{minipage}\\

The substitution solving the original set of equation may be easily recovered from the sequence of transformations.
However, the algorithm resulting from these rules is exponential in the worst case.

\medskip
Martelli and Montanari~\cite{martelli-montanari:82} described a set of improved rules that result into an
$O(n\,\log n)$ time algorithm\footnote{The original Martelli and Montanari's algorithm is a bit different. In fact, they do not flatten equations. However, the essence of the algorithm is
basically the same as the one described here.} where $n$ is the size of the input equations. In a first phase the problem is flattened,\footnote{In the flattening process we replace every proper subterm $(f e_1\dots e_n)$ by a fresh variable $X$,
and add the equation $X\doteq (f e_1\dots e_n)$. We repeat this operation (at most a linear number of times) until all proper subterms are variable occurrences.}
resulting into equations where every term is a variable or of the form $(f~X_1 \ldots X_n)$. The second phase is a transformation using the following rules:\\[3mm]
\noindent
 $\begin{array}{@{}ll@{}}
  \begin{minipage}{4cm}(Decomposition) \end{minipage}
   & {\gentzen{\Gamma\dotcup \{(f\, X_1\dots X_n)\doteq (f\, Y_1\dots Y_n)\}}{\Gamma\cup \{X_1\doteq Y_1,\dots,X_n\doteq Y_n\}}} \vspace{3mm} \\

   \begin{minipage}{4cm}  (Variable~ Instantiation) \end{minipage}
    & \gentzen{\Gamma\dotcup \{X \doteq Y\}}{[X\mapsto Y]\Gamma}
 \end{array}
 $
 ~\\[3mm]
\begin{minipage}{4cm}(Elimination)\end{minipage}~ $\gentzen{\Gamma\dotcup \{X \doteq e\}}{\Gamma}$ \begin{minipage}{0.37 \textwidth}\ \ If $X$ neither occurs in $e$ nor in $\Gamma$ \end{minipage}\\

\noindent
(Merge) $\gentzen{\Gamma\dotcup \{X \doteq (f\, X_1\dots X_n) ,X \doteq (f\, Y_1\dots Y_n)\}}{\Gamma\cup \{X \doteq (f\, X_1\dots X_n), X_1\doteq Y_1,\dots,X_n\doteq Y_n\}}$  \\

Notice that in these rules the terms involved in the equations are not modified (they are not instantiated), except by the replacement of a variable
by another in the Variable Instantiation rule. We can define a measure on problems as the number of distinct variables, plus the number of equations,
plus the sum of the arities of the function symbol occurrences. All rules decrease this measure (for instance,
the merge rule increases the number of equations by $n-1$, but removes a function symbol occurrence of arity $n$).
Since this measure is linear in the size of the problem, this proves that the maximal number of rule applications is linear. 
%
The Merge rule is usually described as\\

$\gentzen{\Gamma\dotcup \{X \doteq e_1 ,X \doteq e_2\}}{\Gamma\cup \{X \doteq e_1, e_1\doteq e_2\}}$ \begin{minipage}{0.37 \textwidth}\ \ If $e_1$ and $e_2$ are not variables \end{minipage}  \\

However, this rule does not decrease the proposed measure. We can force the algorithm to, if possible, immediately apply a decomposition of the equation $e_1 \doteq e_2$.
Then, the application of both rules (resulting into the first proposed Merge rule) does decrease the measure.

  \subsection{Nominal unification}

  Nominal unification is an extension of first-order unification in the presence of lambda-binders. Variables of the target language are called atoms,
  and the unification-variables are simply called
  variables.
  Bound atoms can be renamed. For instance,
  $\lambda a.(f\,a)$ is equivalent to $\lambda b.(f\,b)$. We also have permutations of atom names (represented as swappings) applied to
  expressions of the language.
  When these permutations are applied to a variable, this is called a \emph{suspension}. The action of a permutation on a term is simplified until we get a term where permutations
  are innermost and only apply to variables. For instance, $(a\,b) {\cdot} \lambda a.(f\,X\,a\,(f\,b\,c))$, where $(a\,b)$ is a swapping between the atoms $a$ and $b$,
  results into $\lambda b.(f\, (a\,b){\cdot}X\,b\,(f\,a\,c))$.
  As we will see below, we also need a predicate to denote that an atom $a$ cannot occur free in a term $e$, noted $a\# e$.

\medskip
  We can extend the previous first-order unification algorithm to the nominal language modulo $\alpha$-equivalence.
  The decomposition of $\lambda$-expressions distinguishes two cases, when the binder name is the same and when they are distinct and we have to rename one of them:\medskip
	
$\begin{array}{ll@{\qquad}ll} \begin{minipage}{2.5cm} \centering (Decomposition \\ lambda 1) \end{minipage}
 &  \gentzen{\Gamma\dotcup\{\lambda a.s \doteq \lambda a.t\}}
	{\Gamma \cup \{s \doteq t\}}
	 &
 \begin{minipage}{2.5cm} \centering (Decomposition \\ lambda 2) \end{minipage} & \gentzen{\Gamma\dotcup\{\lambda a.s \doteq \lambda b.t\}}
	    {\Gamma \cup \{s \doteq (a~b){\cdot}t, a {\freshdot} t\}}
\end{array}
$

\medskip
As we see in the second rule, we introduce a \emph{freshness
  constraint} that has to be checked or solved, so we need a set of
transformations for this kind of equations.  This set of freshness
constraints is solved in a second phase of the algorithm.

\medskip
As we have said, permutations applied to variables cannot be longer
simplified and result into suspensions. Therefore, now, we deal with suspensions instead of variables, and we do not make any distinction between $X$ and $\mathit{Id}{\cdot} X$.
Variable instantiation distinguishes two cases:\\[2mm]
\noindent
$\begin{array}{@{}ll@{\qquad}ll}
\begin{minipage}{2.5cm} (Variable \\Instantiation) \end{minipage}
& \gentzen{\Gamma\dotcup \{\pi\cdot X \doteq \pi'\cdot Y\}~~~ X \not= Y}{[X\mapsto (\pi^{-1}\circ\pi')\cdot Y]\Gamma}
& \begin{minipage}{1.5cm} (Fixpoint) \end{minipage}
& \gentzen{\Gamma\dotcup \{\pi\cdot X\doteq \pi'\cdot X\}}{\Gamma\cup \{a \# X\mid a\in \dom(\pi^{-1}\circ\pi')\}}
\end{array}
$
 ~\\
 Notice that equations between the same variable $X\doteq X$ that are
trivially solvable in first-order unification, adopt now the form $\pi\cdot X\doteq
\pi'\cdot X$.  This kind of equations are called \emph{fixpoint
  equations} and impose a restriction on the possible instantiations
of $X$, when $\pi$ and $\pi'$ are not the identity.  Namely, $\pi\cdot X \doteq \pi'\cdot X$
is equivalent to $\{a\# X\mid a\in\dom(\pi^{-1}\circ\pi')\}$, where the domain $\dom(\pi)$ is the
set of atoms $a$ such that $\pi(a)\neq a$.

\medskip
From this set of rules we can derive an $O(n^2\,\log n)$ algorithm, similar to the algorithms described in \cite{DBLP:journals/tcs/CalvesF08,levy-villaret:10}.
This algorithm has three phases. First, it flattens all equations. Second, it applies this set of problem transformation rules. Using the same measure as in the first-order
case (considering lambda abstraction as a unary function symbol and not counting the number of freshness equations),
we can prove that the length of problem transformation sequences is always linear. In a third phase, we deal with freshness equations.
Notice that the number of
distinct non-simplifiable freshness equations $a\#X$ is quadratically bounded.

\subsection{Letrec expressions}

  Letrec expressions have the form $(\tletr~a_1.e_1; \ldots; a_n.e_n~\tin~e)$. Variables $a_i$ are binders
  where the scope is
  in all expressions $e_j$ and in $e$. (Here we will use $\alpha$-equivalence in an informal fashion; it is defined in Def. \ref{def:alpha-equivalence-sim}.)
  We view the environment part $a_1.e_1; \ldots; a_n.e_n$ as a multiset. We can rename these binders, obtaining an equivalent expression.
  For instance, $(\tletr~a.(f\,a)~\tin~(g\,a)) \sim (\tletr~b.(f\,b)~\tin~(g\,b))$ \footnote{Here we will use $\sim$ informally as $\alpha$-equivalence, which will be formally
  defined in Def. \ref{def:alpha-equivalence-sim}}.
  Moreover, we can also swap the order of definitions. For instance, $(\tletr~a.f; b.g~\tin~(h\,a\,b)) \sim (\tletr~b.g; a.f~\tin~(h\,a\,b))$.
  Schmidt-Schau\ss\ et al.~\cite{schmidtschauss-sabel-rau-RTA:2013} prove that equivalence of letrec expressions is graph-isomorphism (GI) complete and
  Schmidt-Schau\ss\ and Sabel~\cite{schmidt-schauss-sabel:16} prove that unification is NP-complete.
  The GI-hardness can be elegantly proved by encoding any graph, like $G=(V,E)=(\{v_1,v_2,v_3\},\{(v_1,v_2),(v_2,v_3)\})$, into a letrec expression,
  like $(\tletr~v_1.a; v_2.a; v_3.a$ $\tin~\tletr~e_1.(c\, v_1\, v_2);e_2.(c\, v_2\, v_3)~\tin~a)$.
  Here, $v_i$ represent the nodes and $(c\, v_i\, v_j)$ the edges of the graph.

   \medskip
   There are nontrivial fixpoints of permutations in the letrec-language.
    For example, $(\tletr~a_1.b_1,$ $a_2.b_2, a_3.a_3~\tin~ a_3)$ is a fixpoint of the equation $X\doteq (b_1~b_2)\cdot X$, although
   $b_1$ and $b_2$ are not fresh in the expression, which means $(b_1~b_2){\cdot}(\tletr~a_1.b_1,$ $a_2.b_2, a_3.a_3~\tin~ a_3)$ $\sim$
   $(\tletr~a_1.b_1,$ $a_2.b_2, a_3.a_3~\tin~ a_3)$.
   Therefore, the fixpoint rule of the  nominal algorithm in \cite{urban-pitts-gabbay:03}
   would not be complete in our setting:
   to ensure $X\doteq (b_1~b_2)\cdot X$ we cannot
   require $b_1\# X$ and $b_2\# X$. See also Example~\ref{example:fixpoints-possible}. Hence, fixpoint equations can in general not be replaced by freshness constraints.
   For the general case we need a complex elimination rule, called fixed point shift:\\

   \noindent (FixPointShift) $\gentzen{\Gamma\dotcup \{\pi_1{\cdot}X \doteq \pi'_1{\cdot}X, \ldots,\pi_n{\cdot}X \doteq \pi'_n{\cdot}X, \pi{\cdot}X \doteq e\}}
	{\Gamma \cup  \{\pi_1\pi^{-1}{\cdot}e \doteq \pi'_1\pi^{-1}{\cdot}e, \ldots,\pi_n\pi^{-1}{\cdot}e \doteq \pi'_n\pi^{-1}{\cdot}e\}}$,
%
%
   \hspace*{3mm} \begin{minipage}{0.25 \textwidth} if $X$ neither occurs in $e$ nor in $\Gamma$. \end{minipage}\\

   The substitution is $X \to \pi^{-1}{\cdot}e$.
   This rule can generate an exponential number of equations (see Example~\ref{example:exponential-FPS}). In order to avoid this effect, we will use a property on the number of
   generators of permutation groups (see end of Section~\ref{section:language}).

\medskip
   For the decomposition of letrec expressions we also need to introduce a (don't know) nondeterministic choice. \\[3mm]
  $\gentzen{\Gamma \dotcup\{\tletr ~a_1.s_1;\ldots;a_n.s_n~\tin~r \doteq \tletr ~b_1.t_1;\ldots;b_n.t_n~\tin~r' \} }
         {\mid_{\{\rho\}} (\Gamma \, \cup \{
                   s_1 \doteq \pi{\cdot}t_{\rho(1)},\ldots,s_n \doteq \pi{\cdot}t_{\rho(n)}, r \doteq \pi{\cdot}r'\}}$\\

\smallskip
\noindent   Where the  necessary freshness constraints are $\{a_i \# (\tletr ~b_1.t_1;\ldots;b_n.t_n~\tin~r') \mid i = 1,\ldots,n\}$.
                           The permutation $\rho$  on $\{1,\ldots,n\}$ is chosen using don't-know non-determinism, indicated by the vertical bar and $\{\rho\}$,
                  and $\pi$ is an (atom-)permutation  that extends
               $\{b_{\rho(i)} \mapsto a_i \mid i = 1,\ldots,n)\}$   with $\dom(\pi) \subseteq \{a_1,\ldots,a_n,b_1,\ldots,b_n\}$.

      \smallskip\noindent In Section~\ref{section:algorithm}, we will describe in full detail all the transformation rules of our algorithm.

   \section{The ground language of expressions}\label{section:language}

The very first idea of nominal techniques \cite{urban-pitts-gabbay:03} is to use concrete variable names in  lambda-calculi (also in extensions),
in order to avoid implicit $\alpha$-renamings, and instead use operations for explicitly applying bijective renamings. 
Suppose $s = \lambda {\tt x}.{\tt x}$ and $t = \lambda {\tt y}.{\tt y}$ are concrete (syntactically different) lambda-expressions.
The nominal technique provides explicit name-changes using permutations.
These permutations are applied irrespective of binders. For example $({\tt x}~{\tt y}){\cdot}(\lambda {\tt x}. \lambda {\tt x}.{\tt a})$ results in
$\lambda {\tt y}. \lambda {\tt y}.{\tt a}$.
Syntactic reasoning on higher-order expressions, for example unification of higher-order expressions modulo $\alpha$-equivalence
will be emulated by nominal
techniques on a language with concrete names, where the algorithms require certain extra constraints and operations.
The gain is that all conditions
and substitutions etc. can be computed and thus more reasoning tasks can be automated,
whereas the implicit name conditions under non-bijective renamings 
have
a tendency to complicate (unification-) algorithms and to hide the required conditions on equality/disequality/occurrence/non-occurrence of names.
We will stick to a notation closer to lambda calculi than most other papers on nominal unification, however, note that in general the differences
are only in notation and the constructs like application and abstraction can easily be translated into something equivalent in the other language
 without any loss.

\subsection{Preliminaries}
We define the language $\LRL$ ({\bf L}et{\bf R}ec {\bf L}anguage) of (ground-)expressions, which is a lambda calculus extended with
a recursive let construct. The notation is consistent with  \cite{urban-pitts-gabbay:03}.
The (infinite) set of atoms $\MBA$ is a set of (concrete) symbols $a,b$ which we usually denote in a meta-fashion; so we can use
symbols $a,b$  also with indices (the variables in lambda-calculus).
There is a set ${\cal F}$ of function symbols with arity $\ari(\cdot)$.
The syntax of the expressions $e$ of $\LRL$ is:
\[ e   ::=   a \mid \lambda a.e \mid (f~e_1~\ldots ~e_{\ari(f)}) \mid (\tletr~a_1.e_1; \ldots; a_n.e_n~\tin~e)
\]

We assume that binding atoms $a_1,\ldots,a_n$ in a letrec-expression $(\tletr~a_1.e_1; \ldots;$  $a_n.e_n~\tin~e)$   are pairwise distinct.
Sequences of bindings $a_1.e_1;\ldots; a_n.e_n~$ may be abbreviated as $\env$.
The expressions $(\tletr~a_1.e_1; \ldots; a_n.e_n~\tin~e)$ and $(\tletr~a_{\rho(1)}.e_{\rho(1)}; \ldots; a_{\rho(n)}.e_{\rho(n)}~\tin~e)$
are defined as equivalent for every permutation $\rho$ of $\{1,\ldots,n\}$, i.e. in the following we view the environment $a_1.e_1; \ldots; a_n.e_n$ of a letrec-expression
as a multi-set.  

\medskip
The {\em scope} of atom $a$ in $\lambda a.e$  is standard: $a$ has scope $e$.
The $\tletr$-construct  has a special scoping rule:
in $(\tletr ~a_1.s_1; \ldots;a_n.s_n~\tin~r)$, every atom $a_i$ that is free in some $s_j$  or $r$ is bound by the environment
$a_1.s_1; \ldots;a_n.s_n$.
This defines in $\LRL$  the notion of free atoms
$\FA(e)$, bound atoms $\BA(e)$ in expression $e$, and all atoms $\AT(e)$ in $e$.
For an environment $\env = \{a_1.e_1,\ldots,a_n.e_n\}$, we define the set of letrec-atoms as $\LA(\env) = \{a_1,\ldots,a_n\}$.
Note that this is well-defined, since environments are multisets, but names are meant syntactically.  
We say {\em $a$ is fresh for $e$} iff  $a \not\in \FA(e)$ (also denoted as $a\#e$).
As an example, the expression $(\tletr~a.\mathit{cons}~s_1~b; b.cons~ s_2~ a~\tin ~a)$ represents an infinite list      
$(cons~s_1~(cons~s_2~(cons~s_1 ~(cons~s_2~\ldots))))$, where $s_1,s_2$ are expressions.
The functional application operator in functional languages (which is usally implicit) can be encoded by a binary function {\tt app},
which also allows to deal with partial applications.
Our language $\LRL$ is
a fragment of
core calculi  \cite{moran-sands-carlsson:99,schmidt-schauss-schuetz-sabel:08},
since for example the case-construct is missing, but this could also be represented.

We will use mappings on atoms from $\MBA$. A {\em swapping} $(a~b)$ is a bijective function (on $\LRL$-expressions) that maps an atom $a$ to atom $b$, atom $b$ to $a$,
and is the identity on other atoms.
We will also use finite permutations $\pi$ on atoms from $\MBA$, which could be
represented as a composition of swappings in the algorithms below.
Let $\dom(\pi) = \{a \in \MBA \mid \pi(a) \not= a\}$. Then every finite permutation can be represented by a composition of
at most $(|\dom(\pi)|- 1)$ swappings.
Composition  $\pi_1 \circ \pi_2$ and inverse $\pi^{-1}$ can be immediately computed, where the complexity is polynomial in the size of $\dom(\pi)$.
Permutations $\pi$ operate on expressions simply by recursing on the structure.
For a  letrec-expression this is
$\pi\cdot(\tletr ~a_1.s_1; \ldots;a_n.s_n~\tin~e)$ $=$
$(\tletr ~\pi\cdot a_1.\pi\cdot s_1; \ldots;\pi\cdot a_n.\pi\cdot s_n~\tin~\pi\cdot e)$.
Note that permutations also change names of bound atoms.  

\medskip We will use the following definition (characterization) of $\alpha$-equivalence:
\begin{definition}\label{def:alpha-equivalence-sim}
	The $\alpha$-equivalence $\sim$ on expressions $e \in \LRL$ is defined as follows:
	\begin{itemize}
		\item $a \sim a$.
		\item if $e_i \sim e_i'$ for all $i$, then $(f e_1 \ldots e_n) \sim (f e_1' \ldots e_n')$ for an $n$-ary  $f \in {\cal F}$.
		\item If $e \sim e'$, then $\lambda a.e \sim \lambda a.e'$.
		\item If $a\#e'$ and $e \sim (a~b)\cdot e'$, then $\lambda a.e \sim \lambda b.e'$.    
       \item If there is a permutation $\pi$ on atoms
		 such that
		   \begin{itemize}
		     \item $\dom(\pi) \subseteq  \{a_1,\ldots,a_n\} \cup \{b_1,\ldots,b_n\}$, where $a_i \not= a_j$ and  $b_i \not= b_j$ for all $i \not= j$,
		      \item  $\pi(b_i) = a_i$ for all $i$,
              \item $\{a_1,\ldots,a_n\} \#  (\tletr~b_1.t_1, \ldots,b_n.t_n~\tin ~r')$,  and
              \item $r \sim \pi(r')$ and   $s_i \sim \pi(t_i)$      for $i = 1,\ldots,n$ hold.
            \end{itemize}
       Then $(\tletr~a_1.s_1, \ldots,a_n.s_n~\tin ~r) \sim (\tletr~b_1.t_1, \ldots,b_n.t_n~\tin ~r')$.
	\end{itemize}
\end{definition}
The last phrase includes   $(\tletr~a_1.s_1, \ldots,a_n.s_n~\tin ~r) \sim (\tletr~b_{\rho(1)}.t_{\rho(1)}, \ldots,b_{\rho(n)}.t_{\rho(n)}$~$\tin$ $~r')$
for every permutation $\rho$ on $\{1,\ldots,n\}$, by the definition of syntactic equality that treats the let-environment as a multi-set.

\medskip
Note that $\{a_1,\ldots,a_n\} \#  (\tletr~b_1.t_1, \ldots,b_n.t_n~\tin ~r')$ is equivalent to\\
$(\{a_1,\ldots,a_n\} \setminus \{b_1,\ldots,b_n\})$ $\#$  $(\tletr~b_1.t_1, \ldots,b_n.t_n~\tin ~r')$.
Note also that $\sim$ is identical to $\alpha$-equivalence, i.e., the relation generated by renamings of binding constructs and permutation of bindings
in a letrec. We omit a proof, since it detracts the attention from the main contents. Such a proof is not hard to construct by using that $\alpha$-equivalence holds, if
and only if the graph constructed by replacing bindings by pointing edges, where the  outgoing edges from an environment are unordered and the one from a function
 application are ordered.
 Note that our view is that algorithms work on the syntactic terms as given, and not with equivalence classes modulo $\sim$. 

A nice and  important property that is often implicitly used is:   $e_1 \sim e_2$ is equivalent to $\pi{\cdot} e_1 \sim \pi{\cdot} e_2$ for any (atom-)permutation $\pi$.

In usual nominal unification, the solutions of fixpoint equations $X \doteq \pi{\cdot}X$, i.e. the  sets $\{e \mid \pi \cdot e \sim e\}$
 can be characterized by using finitely many freshness constraints  \cite{urban-pitts-gabbay:03}.
Clearly, all these sets and also all finite intersections are nonempty,
since at least fresh atoms are elements and since $\MBA$ is infinite.
However, in our setting, these sets are nontrivial:
\begin{example}\label{example:fixpoints-possible}   The $\alpha$-equivalence  $(a~b)\cdot (\tletr~c.a;d.b~\tin ~\mathit{True})$ $\sim$
	$(\tletr~ \allowbreak c.a;d.b ~\tin ~\mathit{True})$ holds, which means that there
	are expressions $t$ in $\LRL$ with  $t \sim (a~b)\cdot t$ and  $\FA(t) = \{a,b\}$. This is in contrast to usual nominal unification.
\end{example}

\subsection{Permutation  groups}\label{subsec:permutation-groups}

Below we will use the results on complexity of operations in finite permutation groups, see
\cite{luks:91,Furst-Hopcroft-Luks:80}. We summarize some facts on the so-called symmetric group and its properties.
We consider a set $\{o_1,\ldots,o_n\}$ of distinct objects $o_i$
(in our case atoms),
and the symmetric group $\Sigma(\{o_1,\ldots,o_n\})$ (of size $n!$) of permutations of these objects.
We will also look at its elements, subsets and subgroups. Subgroups of $\Sigma(\{o_1,\ldots,o_n\})$
can always be represented by a set of generators (represented as permutations on $\{o_1,\ldots,o_n\}$).
If $H$ is a set of elements (or generators), then $\gen{H}$ denotes the generated subgroup of $\Sigma(\{o_1,\ldots,o_n\})$.
Some facts are:
\begin{itemize}
\itemsep=0.9pt
	\item A permutation can be represented in space linear in $n$.
	\item Every subgroup of $\Sigma(\{o_1,\ldots,o_n\})$ can be represented by   $\le n^2$ generators.
\end{itemize}
However, elements in a subgroup may not be representable as a product of polynomially many of these generators.

\medskip
The following questions can be answered in polynomial time:
\begin{itemize}
\itemsep=0.9pt
	\item The element-question: $\pi \in G$.
	\item The subgroup question: $G_1 \subseteq G_2$.
\end{itemize}

However, intersection of groups and set-stabilizer
(i.e. $\{\pi \in G \mid \pi(M) = M\}$)
are not known to be computable in polynomial time, since those problems are as hard as graph-isomorphism (see  \cite{luks:91}).\vspace*{-1mm}

\section{A nominal letrec unification algorithm}\label{section:algorithm}
 \subsection{Preparations} \label{subsec:unifalg-preparations}

As an extension of  $\LRL$, there is a countably infinite  set of (unification) variables $\Var$ ranged over by $X,Y$ where we also use indices.
The syntax of the language $\LRLX$ ({\bf L}et{\bf R}ec {\bf L}anguage e{\bf X}tended) is
\[\begin{array}{lcl}
e & ::=&  a \mid X \mid \pi\cdot{}X \mid \lambda a.e \mid (f~e_1~\ldots e_{\ari(f)})~| ~(\tletr~a_1.e_1; \ldots; a_n.e_n~\tin~e) \\
\pi &:=& \emptyset \mid (a~b){\cdot}\pi
\end{array}
\]
$\Var(e)$ is the set of variables $X$ occurring in $e$.

\begin{figure}[!t]\small
	\begin{flushleft} $\inferrule*[right=\mbox{\normalfont  if $a \neq b$}]{\{a{\freshdot}b\} \dotcup \nabla}{\nabla}$ \quad   
		$\inferrule*{\{a{\freshdot}(f~s_1 \ldots s_n)\} \dotcup \nabla}{\{a{\freshdot}s_1,\ldots,a{\freshdot}s_n\} \cup \nabla}$  \quad
		$\inferrule*{\{a{\freshdot}(\lambda a.s)\} \dotcup \nabla}{\nabla}$ \quad
		$\inferrule*[right=\text{\normalfont  if $a \neq b$}]{\{a{\freshdot}(\lambda b.s)\} \dotcup \nabla}{\{a{\freshdot}s\} \cup \nabla}$\\[2mm]  
		$\inferrule*[right=\text{\normalfont \mbox{if $a \in \{a_1,\ldots,a_n\}$}}]
		{\{a{\freshdot}(\tletr~a_1.s_1;\ldots,a_n.s_n~\tin~r)\} \dotcup \nabla}{\nabla}$ \hspace*{1cm}  
		$\inferrule*{\{a{\freshdot}a\} \dotcup \nabla}{\bot}$  \quad
		\\[1mm]
		$\inferrule*[right=\text{\normalfont  \mbox{if $a \not\in \{a_1,\ldots,a_n\}$}}]
		{\{a{\freshdot}(\tletr~a_1.s_1;\ldots,a_n.s_n~\tin~r)\} \dotcup \nabla}
		{\{a{\freshdot}s_1,\ldots a{\freshdot}s_n,a{\freshdot}r\} \cup \nabla}$  ~~~~~~
		$\inferrule*{\{a{\freshdot}(\pi\cdot X)\} \dotcup \nabla}{\{\pi^{-1}(a) {\freshdot} X\} \cup \nabla}$  
	\end{flushleft}\vspace*{-4mm} \normalsize
	\caption{Simplification of freshness constraints in $\LRLX$}\label{fig:def:simplification-freshness}
  \end{figure}

\medskip
The expression $\pi{\cdot}e$ for a non-variable $e$ means an operation, which is performed by shifting $\pi$ down,
using the additional simplification $\pi_1 {\cdot} (\pi_2 {\cdot} e)$ $\to $ $(\pi_1 \circ  \pi_2) {\cdot} e$, 
where after the shift, $\pi$ only occurs in the subexpressions of the form $\pi\cdot{}X$, which are called {\em suspensions}.
Usually, we do not distinguish $X$  and $\mathit{Id}{\cdot}X$, notationally.
A single {\em freshness constraint} in our unification algorithm is of the form $a{\freshdot}e$, where $e$ is an $\LRLX$-expression, and an
{\em atomic} freshness constraint is of the form $a{\freshdot}X$.
A conjunction (or set) of freshness constraints is sometimes called {\em freshness context}.  

\begin{lemma}\label{lemma:simplify-freshness-constraint}
The rules in Fig. \ref{fig:def:simplification-freshness} for simplifying sets of freshness constraints in $\LRLX$ run in polynomial time and the result is
either $\bot$, i.e. fail, or a set of freshness constraints
 where all single constraints are atomic.   
 This constitutes  a polynomial decision algorithm for satisfiability of $\nabla$: If $\bot$ is in the result, then unsatisfiable,
 otherwise satisfiable.
\end{lemma}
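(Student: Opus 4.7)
The plan is to establish three items: the rules always terminate and do so in polynomial time, irreducible non-$\bot$ outputs contain only atomic constraints $a\#X$, and such atomic sets are always satisfiable (so the decision procedure is correct).

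For termination and the polynomial bound, I would introduce a measure on a freshness context, for example
\[
\mu(\nabla) \;=\; \sum_{(a\#e)\in\nabla} \operatorname{size}(e),
\]
where $\operatorname{size}$ counts every occurrence of an atom, function symbol, binder, $\lambda$, $\tletr$, or swapping inside a suspension's permutation. A case analysis on the rules in Figure~\ref{fig:def:simplification-freshness} shows that $\mu$ strictly decreases at every step: the rules for $a\#b$, $a\#(\lambda a.s)$, and the bound-variable letrec case just delete a constraint; the decomposition rules for application and for the unbound letrec case replace one expression of size $1+\sum|s_i|$ by strictly smaller constraints $a\#s_i$ (and $a\#r$); the $\lambda b.s$ rule with $a\neq b$ peels off the binder; and $a\#(\pi\cdot X)\leadsto \pi^{-1}(a)\#X$ replaces the suspension by an atomic constraint of smaller measure. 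Since $\mu(\nabla)$ is linear in the input size and each rule is applicable in polynomial time (the only nontrivial cost is computing $\pi^{-1}(a)$, which is polynomial in $|\dom(\pi)|$), the whole simplification runs in polynomial time. The normal form condition is immediate from inspection: the only shape of constraint for which no rule applies (and which is not $\bot$) is $a\#X$.

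For soundness I would verify that each rule preserves the set of substitutions $\sigma$ satisfying the context, using the standard characterizations of freshness: $a\#(f\,s_1\ldots s_n)\iff \bigwedge_i a\#s_i$; $a\#\lambda a.s$ holds unconditionally; for $a\neq b$, $a\#\lambda b.s\iff a\#s$; for letrec, if $a\in\{a_1,\dots,a_n\}$ then $a$ is bound and so fresh, otherwise $a\#(\tletr\,a_1.s_1;\dots;a_n.s_n\,\tin\,r)\iff a\#s_i$ for all $i$ and $a\#r$; $a\#a$ is unsatisfiable; and for a suspension, $a\in\FA(\pi\cdot t)\iff \pi^{-1}(a)\in\FA(t)$, which gives the last rule. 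Hence $\nabla$ and its simplified form have the same set of solutions, and $\bot$ appears exactly when $\nabla$ is unsatisfiable.

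The decision procedure then follows from the observation that a set consisting solely of atomic constraints $a_1\#X_1,\dots,a_k\#X_k$ is always satisfiable, e.g.\ by substituting each $X_i$ by a concrete atom $c\in\MBA$ distinct from every $a_j$ (such $c$ exists since $\MBA$ is infinite). Combined with soundness, this yields the stated equivalence: the simplification yields $\bot$ iff $\nabla$ is unsatisfiable. The only subtle point in the whole argument is bookkeeping the size contribution of the permutations inside suspensions so that the suspension rule indeed decreases the measure; this is handled by counting swappings in $\pi$ into $\operatorname{size}(\pi\cdot X)$, which I expect to be the sole place where one has to be a little careful.
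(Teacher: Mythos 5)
The paper states this lemma without giving a proof (it is treated as routine), so there is nothing to compare against; your argument — a size measure counting swappings inside suspensions to get termination in polynomially many steps, rule-by-rule preservation of the solution set, inspection that the only irreducible non-$\bot$ constraints are atomic, and satisfiability of any atomic set $\{a_1\#X_1,\dots,a_k\#X_k\}$ by instantiating with a fresh atom (which also matches the paper's remark that such sets are nonempty since $\MBA$ is infinite) — is exactly the standard argument the paper implicitly relies on, and it is correct. The one corner case you flag ($\pi$ the identity in the suspension rule) is harmless since the result is then already atomic and no further rule applies.
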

We can assume in the following algorithms that sets of freshness constraint are immediately simplified.
In the following we will use $\Var(\Gamma, \nabla)$, and  $\Var(\Gamma, e)$  and similar notation for the set of unification-variables occurring in the syntactic
  objects mentioned in the brackets.

\begin{definition}
    An {\em $\LRLX$-unification problem} is a pair $(\Gamma, \nabla)$, where  $\Gamma$ is a set of equations $\{s_1 \doteq t_1, \ldots,s_n \doteq t_n\}$,   
    and $\nabla$ is a set of freshness constraints $\{a_1\freshdot X_1,\ldots,a_m\freshdot X_m\}$.
        \noindent A {\em (ground) solution} of $(\Gamma, \nabla)$  is a substitution $\rho$, mapping variables in $\Var(\Gamma, \nabla)$ to ground expressions,
	such that $s_i\rho \sim t_i\rho$, for $i = 1,\ldots,n$,
	and  $a_j \# (X_j\rho)$, for $j=1,\ldots,m$. \\
      The decision problem is whether there is a ground solution for a given $(\Gamma, \nabla)$ or not.
\end{definition}

For the unification algorithms below, we employ a representation of unifiers as iterated single substitutions which is like a DAG-compression of a substitution.
For example the representation $\{x \mapsto f(y,z), y \mapsto f(a,a), z \mapsto g(b,b)\}$ represents the substitution $\{x \mapsto f(f(a,a),g(b,b)), y \mapsto f(a,a), z \mapsto g(b,b) \}$.

\begin{definition}\label{def:LRLX-unifier}
	Let  $(\Gamma, \nabla)$ be an $\LRLX$-unification problem. We consider triples $(\sigma,\nabla',\FIX)$ as representing general unifiers,
	where $\sigma$ is
	a substitution represented by a sequence of single assignments (which has the effect of a DAG-compression), 
	mapping variables to $\LRLX$-expressions,
	$\nabla'$ is a set of freshness constraints, and  $\FIX$ is a set of fixpoint equations of the form $\pi'{\cdot}X \doteq \pi{\cdot}X$,
	where $X \not\in \dom(\sigma)$.

\medskip
	\noindent A triple $(\sigma,\nabla',\FIX)$ is a {\em unifier} of $(\Gamma, \nabla)$, if
    \begin{itemize}
	 \item[(i)] there exists a ground substitution $\rho$ that solves $(\nabla'\sigma,\FIX)$, i.e.,
	    for every $a{\#}X$ in $\nabla'$, $a{\#}X\sigma\rho$ is valid, and for every fixpoint equation $\pi'{\cdot}X \doteq \pi{\cdot}X \in \FIX$, it holds
	    $\pi'{\cdot}(X\rho) \sim \pi{\cdot}(X\rho)$; and
	 \item[(ii)] for every ground substitution $\rho$ that instantiates all variables in $Var(\Gamma, \nabla)$ and
	   which solves $(\nabla'\sigma,\FIX)$,
	   the ground substitution $\sigma\rho$ is a solution of  $(\Gamma, \nabla)$.
    \end{itemize}
	\noindent A set $M$ of unifiers is {\em complete}, if every solution $\mu$ is covered by at least one unifier,
	i.e., there is some unifier $(\sigma,\nabla',\FIX)$ in $M$, and a ground substitution $\rho$,
	such that $X\mu \sim X\sigma\rho$ for all $X \in \Var(\Gamma, \nabla)$.
\end{definition}

We will employ nondeterministic rule-based algorithms computing unifiers: There are  clearly indicated
disjunctive (don't know nondeterministic) rules, and all other rules are don't care nondeterministic.
This distinction is related to the completeness of the solution algorithms: don't care  means that the rule has several possibilities
where it is sufficient for completeness to take only one. On the other hand, don't know means that for achieving completeness, every possibility
of the rule has to be explored.
The {\em collecting variant} of the algorithm  runs  and collects all solutions from all alternatives of the disjunctive rule(s).
The {\em decision variant} guesses and verifies one possibility and tries to detect the existence of 
a single unifier.

Since we want to avoid the exponential size explosion of the Robinson-style unification, keeping  the good properties
of Martelli Montanari-style algorithms \cite{martelli-montanari:82}, 
we stick to  a set of equations as data structure.
As a preparation for the algorithm,
all expressions in equations are exhaustively flattened as follows:
$(f~t_1 \ldots  t_n) \to (f~X_1 \ldots X_n)$ plus the equations $X_1 \doteq t_1,\ldots,X_n \doteq t_n$.
Also $\lambda a.s$ is replaced by $\lambda a.X$  with equation $X \doteq s$,
and $(\tletr ~a_1.s_1;\ldots,a_n.s_n~\tin~r)$ is replaced by  $(\tletr~ a_1.X_1;\ldots,a_n.X_n~\tin~X)$ with the additional equations
$X_1 \doteq s_1; \ldots ;X_n \doteq s_n;X \doteq r$. The introduced variables $X_i,X$ are fresh ones.
%
Thus, all expressions in equations are of depth at most 1, not counting the permutation
applications in the suspensions.

In the notation of the rules, we use $[e/X]$  as substitution that replaces $X$ by $e$, whereas $\{X \to t\}$ is used for constructing a syntactically represented substitution.
  In the written rules, we  may omit $\nabla$ or $\theta$ if they are not changed.
We will use a  notation ``$|$'' in the consequence part of a rule, usually with a set of possibilities indicated by for example $\{\rho\}$, to denote
disjunctive  (i.e. don't know) nondeterminism.  The only nondeterministic rule that requires exploring all alternatives is rule (6) in Fig. \ref{lrunify-rules-1}.
The other rules can be applied in any order, where it is not necessary to explore alternatives.  

\begin{figure}[!h]\small
\vspace{1mm}
\begin{flushleft}
	$(1)~\gentzen{\Gamma \dotcup\{e \doteq e\},\nabla,\theta}{\Gamma,\nabla,\theta}$  \qquad
	$(2)~\gentzen{\Gamma \dotcup\{\pi_1{\cdot}X \doteq \pi_2{\cdot}Y\},\nabla, \theta \qquad Y \not= X}{\Gamma[\pi_1^{-1}\pi_2{\cdot}Y/X],
	 \nabla[\pi_1^{-1}\pi_2{\cdot}Y/X], \theta \cup \{X \mapsto \pi_1^{-1}\pi_2{\cdot}Y\}}$   \\[2mm]
	%
	$(3)~ \gentzen{\Gamma\dotcup\{(f~s_1 \ldots s_n) \doteq (f~s_1' \ldots s_n')\},\nabla,\theta}   
	{\Gamma \cup \{s_1 \doteq s_1',\ldots,s_n \doteq s_n'\},\nabla,\theta}$\\[2mm]        

	$(4)~ \gentzen{\Gamma\dotcup\{\lambda a.s \doteq \lambda a.t\},\nabla,\theta}   
	{\Gamma \cup \{s \doteq t\},\nabla,\theta}$    
	\qquad 
	$(5)~ \gentzen{\Gamma\dotcup\{\lambda a.s \doteq \lambda b.t\},\nabla,\theta  \qquad   a \not= b}
	{\Gamma \cup \{s \doteq (a~b){\cdot}t\} ,\nabla \cup  \{a {\freshdot} t\},\theta}$
	\\[2mm]

$(6)~ \gentzen{\Gamma \dotcup\{\tletr ~a_1.s_1;\ldots;a_n.s_n~\tin~r \doteq \tletr ~b_1.t_1;\ldots;b_n.t_n~\tin~r' \}, \nabla,\theta}  
         {\begin{array}{l}
           \begin{array}{ll}
         \hspace*{-1mm}    
           {\begin{array}{c}
               \left|\begin{array}{c}
                ~\\
               \end{array}\right.
               \\
           {\hspace*{-3mm}\{\rho\}}   
           \end{array}}
            &
              \begin{array}{l}
                (\Gamma \, \cup \{
                   s_1 \doteq \pi{\cdot}t_{\rho(1)},\ldots,s_n \doteq \pi{\cdot}t_{\rho(n)}, r \doteq \pi{\cdot}r'\}, \\
              ~~~\nabla \cup \{a_i \# (\tletr ~b_1.t_1;\ldots;b_n.t_n~\tin~r') \mid i = 1,\ldots,n\},\theta)
               \end{array}
             \end{array}
                  \\[2mm]
              \mbox{where $\rho$ is a  permutation on $\{1,\ldots,n\}$. The permutation $\pi$ is chosen don't care such that}\\
               \mbox{ it extends $\{b_{\rho(i)} \mapsto a_i \mid i = 1,\ldots,n)\}$   with $\dom(\pi) \subseteq \{a_1,\ldots,a_n,b_1,\ldots,b_n\}$ }   \\
      \end{array}
}
      $
\end{flushleft}\vspace*{-5mm}
\caption{Standard (1,2) and decomposition rules (3,4,5,6)}\label{lrunify-rules-1}\vspace*{-5mm}
\end{figure}

\subsection{Rules of the algorithm \letrecunify}

\begin{figure}[t]\small
\begin{flushleft}
 \mbox{(MMS)} $\gentzen{\Gamma\dotcup \{\pi_1{\cdot}X \doteq e_1, \pi_2{\cdot}X \doteq e_2\},\nabla,\theta}
	{\Gamma \cup  \{\pi_1{\cdot}X \doteq e_1\} \cup \Gamma', \nabla \cup \nabla',\theta}$,~~~
	\begin{minipage}{0.50 \textwidth} if  $e_1, e_2$ are not suspensions,
	  where $\Gamma'$ is the set of equations generated by decomposing $\pi_1^{-1}{\cdot}e_1 \doteq \pi_2^{-1}{\cdot}e_2$ using (3)--(6),
	  and where $\nabla'$ is the corresponding resulting set of freshness constraints. \\
	\end{minipage}
	\\[-1mm]
	
	\mbox{(FPS)}~$\gentzen{\Gamma\dotcup \{\pi_1{\cdot}X \doteq \pi'_1{\cdot}X, \ldots,\pi_n{\cdot}X \doteq \pi'_n{\cdot}X, \pi{\cdot}X \doteq e\}, \nabla, \theta}
	{\Gamma \cup  \{\pi_1\pi^{-1}{\cdot}e \doteq \pi'_1\pi^{-1}{\cdot}e, \ldots,\pi_n\pi^{-1}{\cdot}e \doteq \pi'_n\pi^{-1}{\cdot}e\},  \nabla, \theta \cup \{X \mapsto \pi^{-1}{\cdot}e\}}$, \\
	\hspace*{1.1cm} \begin{minipage}{0.9 \textwidth}  
		If neither $X \in \Var(\Gamma,e)$,
		nor $e$ is a suspension, nor (Cycle) (see Fig.\ref{def:failure-rules}) is applicable.
	     \end{minipage}
	\\[2mm]
	
	\mbox{(ElimFP)}~~$\gentzen{\Gamma\dotcup \{\pi_1{\cdot}X \doteq \pi'_1{\cdot}X, \ldots,\pi_n{\cdot}X \doteq \pi'_n{\cdot}X, \pi{\cdot}X \doteq \pi'{\cdot}X\}, \nabla, \theta}
	{\Gamma\cup \{\pi_1{\cdot}X \doteq \pi'_1{\cdot}X, \ldots,\pi_n{\cdot}X \doteq \pi'_n{\cdot}X\}, \nabla, \theta}$,   \\
	\hspace*{1.5cm} \text{If } $\pi^{-1}\pi' \in \gen{\pi_1^{-1}\pi_1',\ldots,\pi_n^{-1}\pi_n'}.$
	\\[2mm]
	
	\mbox{(Output)}~~$\gentzen{\Gamma, \nabla, \theta}
	{(\theta, \nabla, \Gamma)}$     
	\begin{minipage}{0.45\textwidth} if $\Gamma$ only consists of fixpoint-equations.  \end{minipage}
\end{flushleft}\vspace*{-5mm}
\caption{ Main Rules of \letrecunify}\label{lrunify-rules-2}
\end{figure}

The top symbol of an expression is defined as 
$\tops(f~s_1 \ldots s_n) = f$,  $\tops(a) = a$,
$\tops(\lambda a.s) = \lambda$,  and
$\tops(\tletr~\env~\tin~s) = (\tletr,n)$, where $n$ is the number of bindings in $env$. It is undefined for variables $X$.

\begin{definition} The rule-based algorithm  {\letrecunify}  is defined in the following. Its rules are in Figs. \ref{lrunify-rules-1}, \ref{lrunify-rules-2} and
\ref{lrunify-rules-3}.
\letrecunify\  operates on a tuple $(\Gamma, \nabla, \theta)$, where
$\Gamma$ is a set of flattened equations $e_1 \doteq e_2$, and where we assume that  $\doteq$ is symmetric,
$\nabla$ contains freshness constraints,
and $\theta$ represents the already computed substitution as a list of mappings of the form $X \mapsto e$. Initially $\theta$  is empty.

The final state will be reached, i.e. the output,  when $\Gamma$  only contains fixpoint equations of the form $\pi_1{\cdot}X \doteq \pi_2{\cdot}X$ that are
non-redundant, and the rule (Output) fires. Note that the rule (FPS) represents the usual solution rule if the premise is only a single equation.

\medskip
The rules (1)--(6), and \mbox{(ElimFP)} have  highest priority;  then (MMS) and (FPS).
The rule (Output) (lowest priority) terminates an execution on $\Gamma_0$ by outputting a unifier $(\theta,\nabla',\FIX)$.
A general explanation of the vertical-bar-notation is at the end of Subsection \ref{subsec:unifalg-preparations}.

\medskip
We assume  that the algorithm \letrecunify\ halts if a failure rule (see Fig.\ref{def:failure-rules}) is applicable.
\end{definition}

\begin{figure}[!h]\small
	\begin{flushleft}
	 \mbox{(Clash)}~~ $\gentzen{\Gamma\dotcup \{s \doteq t\},\nabla,\theta~~ ~tops(s) \not= \tops(t) \mbox{ and $s$ and $t$ are not suspensions}}{\bot}$ \\[1mm]
	   \mbox{(Cycle)} ~~ $\gentzen{     
		\Gamma \cup \{\pi_1{\cdot}X_1 \doteq s_1, \ldots, \pi_n{\cdot}X_n \doteq s_n\},\nabla,\theta ~
	      ~~{\left\{ \begin{array}{@{}l} \text{ where }   s_i \text{ are not suspensions and } \\
		   X_{i+1}  \text{ occurs in }  s_i    \text{ for } \\ i = 1,\ldots,n-1 \text{ and }  X_1  \text{ occurs in } s_n.
	      \end{array} \right.}
		 }{ \bot }$ \\[3mm]
	  \mbox{(FailF)} ~~ $\gentzen{\Gamma,\nabla \cup \{a{\freshdot} a \},\theta}{\bot}$   ~~~~~  
 	  \mbox{(FailFS)}     
 	     ~~  $\gentzen{\Gamma,\nabla \cup \{a{\freshdot} X\},\theta \hspace*{6mm}  \text{ and }   a  \text{ occurs free in }  (X\theta)}{\bot}$ 
   \end{flushleft}\vspace*{-4mm}
\caption{Failure Rules of \letrecunify}\label{lrunify-rules-3}\label{def:failure-rules}
\end{figure}

Note that the two rules (MMS) and (FPS), without further precaution, may cause an exponential blow-up in the number
of fixpoint-equations (see Example \ref{example:exponential-FPS}).    
The rule (ElimFP) will bound the number of generated fixpoint equations by exploiting knowledge on operations within permutation groups.

 Note that the application of every rule can be done in polynomial time. In particular rule \mbox{(FailFS)}, since the computation of  $\FA((X) \theta)$
 can be done in polynomial time by iterating over the solution components.

\begin{example}\label{example:exponential-if-rule-missing}
We illustrate the letrec-rule (6) by a ground example  without flattening. Here we use pairs, which are functional expressions,
	i.e., $(s_1,s_2)$ means $f(s_1, s_2)$ for a binary function symbol $f$ in the language.
	
	 Let the equation be:
	$(\tletr~a.(a,b), b.(a,b)~\tin ~  b) \doteq (\tletr~b.(b,c), c.(b,c)~\tin ~  c).$
	The algorithm has to follow two possibilities for $\rho$: the identity, and the swapping $(1~2)$. \\
	We show the computation for the identity (position-)permutation $\rho$, which results in:
    $\pi = \{b \mapsto a;$ $c \mapsto b$; $a \mapsto c \}$, where the third binding that has to be added, such that the result is a bijection is
     $a \mapsto c$, which is not relevant for the result, but unique in this case. \\
	 Decomposition of the equations $(a,b) \doteq \pi{\cdot}(b,c),(a,b) \doteq \pi{\cdot}(b,c), b \doteq \pi{\cdot}c\}$
	 is possible without fail and yields only trivial equations.\\
	 The freshness constraints are $a{\freshdot}(\tletr~b.(b,c), c.(b,c)~\tin ~  c)$ and $b{\freshdot}(\tletr~b.(b,c), c.(b,c)~\tin ~  c)$, which holds.
\end{example}
\begin{example}\label{example:exponential-FPS}
	This example shows that  FPS (together with the standard and decomposition rules) may give rise to an exponential number of equations in the
	size of the original problem.
	Let there be variables $X_i, i = 1,\ldots,n$ and the equations
	$\Gamma =
	\{X_n \doteq \pi{\cdot}X_n,$ $X_n \doteq (f~X_{n-1}~\rho_n{\cdot}X_{n-1}),\ldots,  X_2 \doteq  (f~X_{1}~\rho_{2}{\cdot}X_{1})
	\}
	$
	where $\pi,\rho_1,\ldots,\rho_n$ are permutations.
	We prove that this unification problem may
	give rise to $2^{n-1}$ equations, if the redundancy rule (ElimFP) is not there. \\  
	\begin{tabular}{ll}
		The first step is by (FPS): &\hspace*{0.5mm}
		\begin{minipage}{0.6\textwidth}
			$$
			\left\{
			\begin{array}{rcl}f~X_{n-1}~\rho_n{\cdot}X_{n-1} &\doteq& \pi{\cdot}(f~X_{n-1}~\rho_n{\cdot}X_{n-1}),  \\
			X_{n-1} &\doteq& (f~X_{n-2}~\rho_{n-1}{\cdot}X_{n-2}),  \ldots
			\end{array}
			\right\}
			$$
		\end{minipage}
	\end{tabular}\\
	\begin{tabular}{ll}
		Using decomposition and inversion:&\hspace*{2mm}
		\begin{minipage}{0.5\textwidth}
			$$
			\left\{
			\begin{array}{rcl} X_{n-1}  &\doteq& \pi{\cdot} X_{n-1}, \\
			X_{n-1} &\doteq&\rho_n^{-1}{\cdot}\pi{\cdot} \rho_n{\cdot}X_{n-1}, \\
			X_{n-1} &\doteq &( f~X_{n-2}~\rho_{n-1}{\cdot}X_{n-2}), \ldots
			\end{array}
			\right\}
			$$
		\end{minipage}
	\end{tabular}\\
	\begin{tabular}{ll}
		After (FPS): & ~~~~~~~~
		\begin{minipage}{0.68\textwidth}
			$$
			\left\{
			\begin{array}{@{}rcl@{}} (f~X_{n-2}~\rho_{n-1}{\cdot}X_{n-2})&\doteq& \pi{\cdot} (f~X_{n-2}~\rho_{n-1}{\cdot}X_{n-2}), \\
			(f~X_{n-2}~\rho_{n-1}{\cdot}X_{n-2})& \doteq&\rho_n^{-1}{\cdot}\pi{\cdot} \rho_n{\cdot}(f~X_{n-2}~\rho_{n-1}{\cdot}X_{n-2}), \\
			X_{n-2} &\doteq &(f~X_{n-3}~\rho_{n-2}{\cdot}X_{n-3}), \ldots\\
			\end{array}
			\right\}
			$$
		\end{minipage}
	\end{tabular}\\
	\begin{tabular}{ll}
		Decomposition and inversion:&~~~
		\begin{minipage}{0.58\textwidth}
			$$
			\left\{
			\begin{array}{rcl}  X_{n-2}   &\doteq& \pi{\cdot}  X_{n-2}, \\
			X_{n-2}  &\doteq& \rho_{n-1}^{-1}{\cdot}\pi{\cdot}  \rho_{n-1}{\cdot}X_{n-2}, \\
			X_{n-2}  &\doteq &\rho_n^{-1}{\cdot}\pi{\cdot} \rho_n{\cdot} X_{n-2}, \\
			X_{n-2} & \doteq &\rho_{n-1}^{-1}{\cdot}\rho_n^{-1}{\cdot}\pi{\cdot} \rho_n{\cdot} \rho_{n-1}{\cdot}X_{n-2}, \\
			X_{n-2} &\doteq& (f~X_{n-3}~\rho_{n-2}{\cdot}X_{n-3}), \ldots\\
			\end{array}
			\right\}
			$$
		\end{minipage}
	\end{tabular}\medskip
 	~\\
    Now it is easy to see that all equations $X_1 \doteq \pi'{\cdot}X_1$ are generated,  with $\pi' \in \{\rho^{-1}\pi\rho$ where $\rho$ is a composition
	of a subsequence of $\rho_n,\rho_{n-1},\ldots, \rho_2\}$, which makes $2^{n-1}$ equations.
	The permutations are pairwise different using an appropriate choice of $\rho_i$ and $\pi$.
	The starting equations can be constructed  using the decomposition rule of abstractions.

\smallskip\noindent
	Without (ElimFP) all elements of the generated group of permutations have to be implicitly stored in $\Gamma$.  The rule (ElimFP) would permit to only keep a set
	of generators of the group of permutations. The explicit algorithmic treatment of generators and group operations is standard and not explained in this paper.
\end{example}

\section{Soundness, completeness, and complexity of \letrecunify}\label{sec:letrecunify:complete}

	\subsection{NP-Hardness of nominal letrec unification and matching}
	First we show that a restricted problem class of nominal letrec unification is already NP-hard.
	If the equations for unification are of the form $s_1 \doteq t_1, \ldots,s_n \doteq t_n$, and  the expressions $t_i$ do not contain variables $X_i$,
	then this is a nominal letrec-matching problem (see also Section \ref{sec:matching}).

\begin{theorem}\label{thm:matching-NP-hard}
	Nominal letrec matching (hence also unification) in $\LRL$ is $\NP$-hard,
	for two letrec expressions, where subexpressions are free of letrec.
\end{theorem}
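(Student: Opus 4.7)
The plan is to reduce a classical NP-complete problem, most naturally 3-SAT (or alternatively Exact Cover by 3-Sets), to nominal letrec matching where each side is a single letrec whose binding bodies and body-in-part are letrec-free. Given a 3-CNF formula $\phi$ with variables $x_1,\ldots,x_n$ and clauses $C_1,\ldots,C_m$, I would build in polynomial time two expressions $s$ (with matching variables) and $t$ (ground) such that $s\sigma \sim t$ is solvable iff $\phi$ is satisfiable.

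The construction would exploit the two sources of combinatorial freedom in rule~(6) of \letrecunify: the don't-know permutation $\rho$ of bindings (forced on us because the letrec environment is a multiset), and the freedom to substitute matching variables. Concretely, I would introduce a pair of atoms $a_i^{+}, a_i^{-}$ for each propositional variable $x_i$, together with ``truth'' atoms or ground terms, and place $2n$ bindings on each side of the matching so that choosing a permutation $\rho$ amounts to selecting a truth assignment: pairing $a_i^{+}$ with the ``true'' slot corresponds to $x_i=\top$, pairing it with the ``false'' slot corresponds to $x_i=\bot$. A matching variable $X_i$ per propositional variable, appearing both inside a binding body and in the subterm representing each clause $C_j$ in which $x_i$ occurs, would enforce global consistency: once $\rho$ is fixed, the substitution of $X_i$ is determined, and then the body of the letrec (encoding $C_1\wedge\cdots\wedge C_m$ via function symbols, atoms and the $X_i$) forces each clause to evaluate to a ``true'' ground pattern, i.e.\ to have at least one literal correctly oriented. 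Every satisfying assignment yields a valid $(\rho,\sigma)$ and conversely every successful match projects to a satisfying assignment, so the two problems are polynomial-time equivalent.

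Soundness of the reduction is then a routine check using Definition~\ref{def:alpha-equivalence-sim}: given a satisfying assignment, pick the corresponding $\rho$, the obvious atom-permutation $\pi$ extending $\rho$ on the $a_i^{\pm}$, and the forced substitution for the $X_i$, and verify $s\sigma \sim t$. Completeness is proved contrapositively: assuming $s\sigma \sim t$, read off from $\rho$ the assignment of $x_i$, read off from $X_i\sigma$ the consistent truth value, and deduce from the matching of the clause subterm that every $C_j$ is satisfied. Size of the construction is clearly linear in $|\phi|$, and $\NP$-hardness follows.

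The main obstacle, and the only truly delicate part, will be engineering the clause-encoding gadget so that \emph{no spurious solutions} arise: the atom-permutation $\pi$ in rule~(6) may renaming atoms beyond the $a_i^{\pm}$, and the freshness constraints emitted on decomposition of the letrec could interact unexpectedly with $X_i$-occurrences; I would guard against this by using distinct dedicated atoms per role, adding ground ``witness'' function symbols that pin down the meaning of each $a_i^{\pm}$, and ensuring every free atom of $t$ already appears in $s$ so that $\pi$ is forced to stabilise the set $\{a_i^{+},a_i^{-}\}_i$. Once the gadget is correctly designed, the rest of the argument is a direct verification against Definition~\ref{def:alpha-equivalence-sim}.
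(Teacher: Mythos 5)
There is a genuine gap: the reduction is only promised, not constructed, and the one concrete design decision you do commit to is the one that fails. You propose to check the clauses in the in-expression of the letrec (``the body of the letrec (encoding $C_1\wedge\cdots\wedge C_m$ \ldots) forces each clause to evaluate to a `true' ground pattern, i.e.\ to have at least one literal correctly oriented''). But matching the body of the pattern against the fixed ground body of the target is purely structural and conjunctive: once $\rho$, $\pi$ and the substitution are fixed, a subterm $(c_j\,X_i\,X_k\,X_l)$ must equal (up to $\pi$, which is anyway essentially determined by $\rho$ on the binding atoms and cannot move other atoms or function symbols) one specific ground term, so all three literal variables are forced to specific values. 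There is no mechanism in the body by which ``at least one of three'' can be expressed; the only don't-know choice available in nominal letrec matching is the correspondence $\rho$ between bindings of the (multiset) environments. Hence every disjunctive ingredient of 3-SAT --- both the truth-value choice per variable and the choice of witnessing literal per clause --- must be realized through environment bindings, e.g.\ by listing, per clause, target bindings for the satisfying truth-value combinations and adding pattern bindings with fresh variables that absorb the remaining ones. Since this gadget is exactly what you defer as ``the only truly delicate part,'' the proposal does not yet prove the theorem; the acknowledged worry about spurious solutions (the extension of $\pi$, interaction with freshness constraints emitted by decomposition) also remains unaddressed and can only be checked once a concrete gadget is on the table.

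For comparison, the paper avoids these issues by reducing from Hamiltonian cycle in 3-regular graphs and putting all the combinatorics into a single top-level environment: the ground target encodes the graph as node bindings and two bindings per edge, while the pattern has node bindings $a_i.(\mathit{node}~X_i)$, cycle bindings $b_i.(f~X_i~X_{i+1})$, and dummy bindings $b.(f~Z~Z')$ with fresh variables that absorb the edges not used by the cycle; the in-expression is just the constant $0$. A 3-SAT reduction in your style can be made to work, but it must follow the same pattern (choice via binding correspondence, dummy bindings with fresh variables to absorb unused ground bindings), not via evaluation-like constraints in the letrec body.
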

\begin{proof}
	We encode the $\NP$-hard problem of finding a Hamiltonian cycle in a 3-regular graph
	\cite{picouleau:94,garey-johnson-tarjan:76}, which are graphs where  all nodes have the same degree $k = 3$.
	Let $G$ be a graph, $a_1,\ldots, a_n$ be the vertexes of the graph $G$, and $E$ be the set of edges of $G$.
	The first environment part is $\env_1 = a_1.(node~a_1);\ldots;a_n.(node~a_n)$, and a second environment part
	$\env_2$ consists of bindings $b. (f~a ~a')$ and $b'. (f~a' ~a)$  for every edge $(a,a') \in E$ for fresh names $b,b'$.
	Then let $t := $
	$(\tletr~\env_1;\env_2~\tin~0)$ which is intended to represent the graph. \quad
	Let the second expression encode the question whether there is a Hamiltonian cycle in a regular graph as follows:
	The first part of the environment is $\env_1' = a_1.(node~X_1),$ $\ldots,$ $a_n.(node~X_n)$. The second part is
	$\env_2'$ consisting of $b_1.(f~X_1~X_2)$; $b_2.(f~X_2~X_3)$; $\ldots;$ $b_n.(f~X_n~X_1)$, where all $b_i$ are different atoms.
	This part encodes the Hamiltonian cycle. We also need a third part that matches the edges that are not part of the Hamiltonian cycle.
	The third part $\env_3'$ consists
	of  entries of the form  $b.(f~Z~Z')$, where $b$ is always a fresh atom  for every binding, and
	$Z, Z'$  are fresh variables for every entry. Thus every such entry matches one edge. The number of these dummy entries can be computed as
	$3*n - n$ due to the assumption that the
	degree of $G$ is $3$, and the edges in the cycle are already covered by the second part.
	Let $s := $
	$(\tletr~\env_1';\env_2';\env_3'~\tin~0)$, representing the question of the existence of the Hamiltonian cycle.
	Then the  matching problem  $s\matcheq t$ is solvable iff the graph has a Hamiltonian cycle.
   	The degree  is $3$, hence it is not possible that there are shortcuts in the cycle.
\end{proof}

\subsection{Properties of the nominal unification algorithm {\letrecunify}}

We will use  $\mathit{size}(\Gamma)$ for estimating the runtime of {\letrecunify}, which is the sum of the sizes of the equated expressions, and where the size of an
expression is its size as a term tree, where we assume that names have size 1. We do not count the size of permutations.

For a non-deterministic algorithm outputting unifiers, we explain the deterministic variant that follows all proper choices, but not the don't-care choices,
and prints all solutions one after the other. This is called the {\em collecting} (and deterministic) version of the algorithm. 

\begin{theorem}\label{thm:unification-terminates}
	The decision variant of the algorithm \letrecunify\
	runs in nondeterministic polynomial time, where a single unifier
     is represented in polynomial space.   The number of rule applications is  $O(S^3\log(S))$ where $S$ is the size of the input.
	 The collecting version of \letrecunify\ returns exponentially many unifiers, where their number is bounded by $O(\exp(S^4)\log(S)))$.
\end{theorem}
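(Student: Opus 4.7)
The plan is to exhibit a lexicographically ordered measure on states $(\Gamma,\nabla,\theta)$ that strictly decreases under every non-Output rule, then bound the number of fixpoint equations using the permutation-group facts recalled in Section~\ref{subsec:permutation-groups}, and finally count branching.

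First I would define the measure
$\mu(\Gamma)=(|\Var(\Gamma)|,\;W(\Gamma),\;N(\Gamma),\;F(\Gamma))$
compared lexicographically, where $W(\Gamma)$ is the sum of top-symbol occurrences on non-suspension sides of equations, $N(\Gamma)$ is the number of non-fixpoint equations, and $F(\Gamma)$ is the number of fixpoint equations. After the initial flattening every non-suspension side has depth $1$, so the decomposition rules (3)--(6) strictly decrease $W$ while leaving $|\Var|$ unchanged. Rule (2) eliminates a variable, and (FPS), thanks to its side condition $X\notin\Var(\Gamma,e)$, also eliminates $X$ entirely from the problem, so both decrease $|\Var|$. (MMS) does not raise $|\Var|$ and strictly decreases either $W$ or $N$ (since it applies decomposition to $\pi_1^{-1}{\cdot}e_1\doteq\pi_2^{-1}{\cdot}e_2$). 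Rule (ElimFP) decreases $F$ only. Since each rule can be executed in polynomial time (permutation composition/inversion and subgroup membership are polynomial by Section~\ref{subsec:permutation-groups}, and freshness constraints are kept simplified by Lemma~\ref{lemma:simplify-freshness-constraint}), the whole algorithm runs in nondeterministic polynomial time provided $\mu$ admits a polynomial bound.

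The main obstacle, and the reason (ElimFP) is indispensable (cf.\ Example~\ref{example:exponential-FPS}), is bounding $F$. I would argue that for each variable $X$ the permutations $\pi_i^{-1}\pi'_i$ of the current fixpoint equations $\pi_i{\cdot}X\doteq\pi'_i{\cdot}X$ form an irredundant set of generators of a subgroup of the symmetric group on the atoms appearing in the problem: any new fixpoint equation produced by (FPS) or (MMS) whose permutation already lies in the generated subgroup is erased by (ElimFP). Since every subgroup of $\Sigma_n$ has a generating set of size at most $n^2$, the number of surviving fixpoint equations per variable is $O(S^2)$, hence $F=O(S^3)$ in total. Combining this with $|\Var|,W,N=O(S)$ and the standard Martelli--Montanari analysis (where the $\log S$ factor comes from union-find bookkeeping for variable identification in (2) and (MMS)), one obtains the $O(S^3\log S)$ bound on the number of rule applications.

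A single output unifier $(\theta,\nabla',\FIX)$ fits in polynomial space because $\theta$ is built as a list of assignments $X\mapsto e$ with each $e$ of constant depth (after flattening), giving a DAG representation of size $O(S)$; $\nabla'$ stays of size $O(S^2)$ since only atomic freshness constraints $a{\freshdot}X$ survive simplification; and $|\FIX|=O(S^3)$ as above. For the collecting version, the only don't-know branching is in rule~(6), which on environments of size $n$ produces at most $n!$ choices of $\rho$; rule~(6) can be triggered at most $O(S)$ times (each firing consumes a letrec constructor on both sides), so the total number of computation branches is bounded by $(S!)^{O(S)}\le \exp(O(S^2\log S))$, and multiplied by the $O(S^3\log S)$ cost along each branch yields the stated $O(\exp(S^4)\log S)$ bound on the number of produced unifiers.
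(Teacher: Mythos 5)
Your overall strategy (a lexicographic measure, using (ElimFP) plus permutation-group facts to tame the fixpoint equations, then counting the branching of rule (6)) is the same as the paper's, but the quantitative core of your argument has a genuine gap. You bound the number of surviving fixpoint equations per variable by invoking the fact that every subgroup of $\Sigma(\{o_1,\ldots,o_n\})$ \emph{has some} generating set of size at most $n^2$; that fact does not bound the size of the particular set of equations that (ElimFP) leaves behind, since an arbitrary irredundant family of generators is not the same object as a smallest generating set. The argument that actually works (and that the paper uses) is a chain-length argument: (ElimFP) guarantees that each retained equation $\pi{\cdot}X\doteq\pi'{\cdot}X$ contributes a permutation $\pi^{-1}\pi'$ lying outside the subgroup generated by the previously retained ones, so the generated subgroups form a strictly increasing chain in a symmetric group of order at most $S!$, whence at most $\log_2(S!)\le S\log S$ equations per variable. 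This sharper $S\log S$ bound is also exactly where the $\log S$ in the theorem comes from: (FPS) is applied at most $S$ times and each application adds at most $2S\log S$ to $\numberSize$, so $\numberSize\le 2S^2\log S$, and since (MMS) strictly decreases $\numberSize$ the number of equations stays below $4S^3\log S$, giving the $O(S^3\log S)$ bound on rule applications. Your attribution of the $\log S$ factor to ``union-find bookkeeping'' is a non sequitur (data-structure costs affect time per step, not the number of rule applications), and with your weaker $O(S^2)$-per-variable bound the chain of estimates you sketch does not yield $O(S^3\log S)$.

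A second, smaller gap is in the collecting version: you claim rule (6) fires at most $O(S)$ times because ``each firing consumes a letrec constructor,'' but (MMS) and (FPS) duplicate right-hand sides, so letrec-equations can be created up to the $\numberSize$ bound ($O(S^2\log S)$), and the paper simply bounds the number of (6)-firings by the total number of rule applications $O(S^3\log S)$, with at most $\exp(S)$ alternatives per firing. Your final exponential bound happens to stay below the stated one, but the counting step as written is not justified. Finally, note that the paper's measure has only the three components $(\numberVar,\numberSize,\numberEquations)$; your fourth component is harmless but unnecessary once the fixpoint equations are bounded as above.
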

\begin{proof}
   Let $(\Gamma_0,\nabla_0)$ be the  input, where $\Gamma_0$ is assumed to be flattened, and $S_{all}$ be $\mathit{size}(\Gamma_0,\nabla_0)$.
    We use $S = \mathit{size}(\Gamma_0)$ to argue on the number of steps of \letrecunify.  
	The execution of a single rule can be done in polynomial time depending on the size of the intermediate state, thus we  have to show that
	the size of the intermediate states remains polynomial and that the number of rule applications is at most polynomial.
	
	The number of fixpoint-equations for every variable $X$ is at most $S*\log(S)$
	since the  number of atoms is never increased, and
	since we assume that \mbox{(ElimFP)}  is applied whenever possible.
	The size of the permutation group on the set of all atoms in the input is at most
	$S!$, and so the length of proper subset-chains and hence the maximal number of (necessary) generators of a subgroup is at most
	$\log(S!) \leq  S*\log(S)$.  
	The redundancy of generators  can be tested in polynomial time depending on the number of atoms.
	Note also that applicability of (ElimFP) can be tested in polynomial time by checking the maximal possible subsets.
	
	\medskip
	The lexicographically ordered termination measure $(\numberVar, \numberSize,  \numberEquations)$ is used: 
\begin{enumerate}
\itemsep=0.9pt
  \item  $\numberVar$ is the number of different variables in $\Gamma$,
  \item  $\numberSize$ is the number of letrec-, $\lambda$, function-symbols and atoms
	              in $\Gamma$, but not in permutations,
 \item	 $\numberEquations$ is the number of equations in $\Gamma$.  
  \end{enumerate}

	Since shifting permutations down and simplification of freshness constraints both terminate in polynomial time, and do not increase the measures, we
	only compare states which are normal forms for shifting down permutations and simplifying freshness constraints.
	The following table shows the effect of the rules:

    The entries $+m$ represents an increase of at most $m$ in the relevant measure component. This form of table permits an easy check that
    the complexity of a single run is polynomial. Note that we omit the failure rules in the table, since these stop immediately.

{\small{
 $$
 \begin{array}{l||l|l|l|l}
              & \numberVar&\numberSize  &\numberEquations \\ \hline  
   (2)            & <    & \leq         &    <     \\
  \mbox{(FPS)}    & <    &  +2S\log(S)  &   <       \\
  \mbox{(MMS)}    & =    & <            &   +2S     \\
 (3),(4),(5),(6)  & =    & <            &   +S      \\
  \mbox{(ElimFP)} & =    &  =           & <         \\
   (1)            & \leq &  \leq        & <          \\
 \end{array}
 $$ } }

\eject

 \normalsize
The table shows  that every rule application strictly decreases the lexicographic  measure as a combination of the three basic measures.
	The entries can be verified by checking the rules, and using the argument that there are not more than $S\log(S)$ fixpoint equations for a single variable $X$.
	We use the table to argue on the (overall)  number of rule applications and hence the complexity:	
   The rules (2) and (FPS) strictly reduce the number of variables in $\Gamma$ and can be applied at most $S$ times.
	(FPS) increases the second measure at most by $2*S\log(S)$, since the number of symbols may be increased as often as there  are
	 fixpoint-equations and there are at most $S\log(S)$.  Since no other rule increases the measure, $\numberSize$ will never be greater than $2S^2\log(S)$.
	 The rule (MMS) strictly decreases $\numberSize$. Hence$ \numberEquations$, i.e. the number of equations is bounded by $4S^3\log(S)$.
	 Thus, the number of rule applications is $O(S^3\log(S))$.   \\
	 The complexity of applications of single rules is polynomial, in particular (FPS), see Section \ref{subsec:permutation-groups}.
	 The complexity of the constraint simplification (Lemma \ref{lemma:simplify-freshness-constraint}) is also polynomial.
	 We also have to argue on the failure rules. These detect all fail cases, and the size of the state part $\nabla$ remains polynomial. The checks within the failure rules
	 can be done in
	 polynomial time in $S_{all}$, where the argument for polynomiality of the check in (FailFS) is an algorithm that iteratively applies parts of $\theta$ and checks.
	
	 The arguments for the complexity of the size  a single solution are already given. Additional arguments are needed for an upper bound on the
	 number of solutions for the collecting version of  \letrecunify. An upper bound on the number of different possibilities that have to be explored at a
	 single rule application of (6) is $O(\exp(S))$.
	 In a single run of
	 \letrecunify\, the number of executions of (6) is at most  $O(S^3\log(S))$, hence we obtain an exponential bound $\exp(S^4)\log(S))$ for the number of solutions that are
	 outputted by the collecting version.
  \end{proof}

\begin{theorem}\label{thm:unification-sound-and-complete}
	The algorithm \letrecunify\  is sound and complete. I.e., every computed unifier is a unifier of the input problem (soundness),   and for every ground unifier of the
	input problem, there is a run of the (non-deterministic) algorithm that produces a unifier that has the ground unifier as  an instance.
\end{theorem}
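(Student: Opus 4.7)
The plan is to prove soundness and completeness by rule-by-rule analysis, treating each transformation as a local step between configurations $(\Gamma,\nabla,\theta)$ and showing the solution set is preserved in the appropriate direction. Since termination is already established by Theorem \ref{thm:unification-terminates}, the proof reduces to verifying rule correctness plus handling the final (Output) step.

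For \textbf{soundness}, I would proceed by backwards induction along a successful run ending in (Output). The base case uses that $(\theta,\nabla,\FIX)$ with only fixpoint equations in $\FIX$ trivially satisfies Definition \ref{def:LRLX-unifier}(i)--(ii) whenever the associated freshness constraints and fixpoint equations are jointly satisfiable (e.g.\ by instantiating remaining variables with sufficiently fresh atoms). The inductive step shows that each rule is \emph{sound in reverse}: any ground instantiation $\rho$ solving the post-state can be transformed into a ground solution of the pre-state. For the standard rules (1)--(5) this is routine from the definition of $\sim$; for (MMS) and (ElimFP) it is immediate since the new equations are logical consequences of the old; for (FPS) one sets $X \mapsto (\pi^{-1}\cdot e)\rho$ and verifies the fixpoint equations translate correctly under this substitution; for the letrec rule (6) soundness follows because the chosen $\rho$ and $\pi$ together with the generated freshness constraints exactly match the conditions of Definition \ref{def:alpha-equivalence-sim} for $\alpha$-equivalence of the two letrec expressions.

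For \textbf{completeness}, I would fix a ground unifier $\mu$ of the input $(\Gamma_0,\nabla_0)$ and show by induction on the length of a \letrecunify-run that there exists a run along which $\mu$ is covered in the sense of Definition \ref{def:LRLX-unifier}. For each don't-care rule I exhibit a unique natural successor state that still admits $\mu$ (suitably composed with the partial substitution $\theta$). The key point is rule (6): given that $\mu$ satisfies the equation between two letrec expressions of equal binding length $n$, Definition \ref{def:alpha-equivalence-sim} guarantees the existence of a binding-permutation $\rho$ and an atom-permutation $\pi$ extending $\{b_{\rho(i)}\mapsto a_i\}$ on the $2n$ involved atoms, such that the decomposed equations and freshness constraints hold under $\mu$; I select precisely this $(\rho,\pi)$ among the don't-know alternatives. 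For (FPS) I verify that $\mu(X) \sim (\pi^{-1}\cdot e)\mu$ whenever $\mu$ solves the equation $\pi{\cdot}X\doteq e$, so the assignment $X\mapsto \pi^{-1}\cdot e$ is consistent with $\mu$; the remaining fixpoint equations translate correctly by applying $\pi_i\pi^{-1}$ and $\pi'_i\pi^{-1}$ and using $\alpha$-invariance under permutations. For (ElimFP) completeness is clear because any $\mu$ satisfying the generating fixpoint equations automatically satisfies every composed one, in particular the removed $\pi{\cdot}X\doteq\pi'{\cdot}X$ since $\pi^{-1}\pi'\in\langle \pi_i^{-1}\pi_i'\rangle$. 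Finally, the failure rules are shown contrapositively: if any of (Clash), (Cycle), (FailF), (FailFS) fires, no ground solution of the current state exists, hence none exists for the input.

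The \textbf{main obstacle} I anticipate is rule (6) combined with (FPS). For (6), one must argue that the nondeterministic choice of $\pi$ with $\dom(\pi)\subseteq\{a_1,\ldots,a_n,b_1,\ldots,b_n\}$ suffices — even though the $\alpha$-equivalence witness on ground terms may in principle use a larger permutation; here one exploits that outside $\{a_1,\ldots,a_n,b_1,\ldots,b_n\}$ both expressions share the same free atoms, so any witness permutation can be restricted/completed to one with the required small domain, and the freshness constraints $a_i\#(\tletr\;b_1.t_1;\ldots;b_n.t_n\;\tin\;r')$ of bound atoms in the other expression cover exactly the needed non-capture conditions. For (FPS), the subtle point is that the rule is applied with multiple fixpoint equations on the same $X$ simultaneously, and one has to check that substituting $X\mapsto\pi^{-1}\cdot e$ correctly propagates \emph{all} of them into equations on $e$; this uses the identity $\pi_i\pi^{-1}\cdot((\pi^{-1})\cdot e \text{-image})$ compared with the $\pi_i'\pi^{-1}$ analogue, together with the fact that $X\notin\Var(\Gamma,e)$ prevents any circular dependency. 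Once these two rules are handled, the remaining cases and the assembly into a full inductive proof follow the pattern of standard nominal unification completeness arguments from \cite{urban-pitts-gabbay:03,DBLP:journals/tcs/UrbanPG04}.
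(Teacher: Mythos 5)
Your proposal is correct and follows essentially the same route as the paper's own (much terser) argument: per-rule preservation of solution sets, soundness of rule (6) via Definition \ref{def:alpha-equivalence-sim}, completeness by choosing the branch of the don't-know rule (6) dictated by the given ground unifier, non-applicability of the failure rules to solvable states, and satisfiability of the final state by instantiating the remaining variables with a fresh atom. Your extra worries about rule (6) and (FPS) are reasonable but resolve exactly as you indicate, so there is no substantive divergence from the paper.
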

\begin{proof}
	Soundness of the algorithm holds, by easy arguments for every rule, similar as in  \cite{urban-pitts-gabbay:03}, and since
	the letrec-rule follows the definition of $\sim$ in Def. \ref{def:alpha-equivalence-sim}. 
	A further argument is that the failure rules are sufficient to detect
	states without solutions.
	
	Completeness requires more arguments. The decomposition and standard rules, with the exception of rule (6),
	retain the set of solutions. The same for (MMS), (FPS),  and \mbox{(ElimFP)}.
	Note  that the nondeterminism in  (ElimFP) does not affect the current set of solutions.
	The nondeterministic rule (6) provides all possibilities for potential ground solutions.
	Moreover, the failure rules are not applicable to states that are solvable.
	
	A final output of  \letrecunify\ for a solvable input has at least one ground solution as instance:   
	we can instantiate all variables that remain in
	$\Gamma_{\mathit{out}}$ by a fresh atom. Then all fixpoint equations are satisfied, since the permutations cannot change this atom,
	and since the (atomic) freshness constraints hold. This ground solution can be represented in polynomial space by using $\theta$,
	plus an instance $X \mapsto a$ for all remaining  variables $X$ and a fresh atom $a$, and removing all fixpoint equations and
	freshness constraints. 
\end{proof}

\begin{theorem}\label{thm:letrec-unification-in-NP}
	The nominal letrec-unification problem is $\NP$-complete.
\end{theorem}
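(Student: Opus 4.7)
The proof will be a short combination of results already established. The plan is to treat the two directions of NP-completeness separately and then combine them.

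First, for NP-hardness, I would simply invoke Theorem~\ref{thm:matching-NP-hard}. That theorem already shows that nominal letrec \emph{matching}, as a problem restricted to the case where one side of each equation is ground (or at least variable-free), is NP-hard via a reduction from Hamiltonian cycle in 3-regular graphs. Since every matching problem is syntactically a unification problem (the right-hand sides happen to contain no unification variables), the reduction transfers verbatim, so nominal letrec unification is NP-hard as well.

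Second, for membership in NP, I would combine Theorems~\ref{thm:unification-terminates} and~\ref{thm:unification-sound-and-complete}. By Theorem~\ref{thm:unification-sound-and-complete}, a given problem $(\Gamma_0,\nabla_0)$ is solvable if and only if some run of the non-deterministic algorithm \letrecunify\ terminates successfully (i.e., reaches a state of the form $(\theta,\nabla',\FIX)$ via (Output) without triggering a failure rule), since completeness guarantees such a run exists whenever a ground solution exists, and soundness guarantees the converse. By Theorem~\ref{thm:unification-terminates}, the decision variant of \letrecunify\ runs in nondeterministic polynomial time in the input size, with the total number of rule applications bounded by $O(S^3\log S)$, each application being polynomial-time executable, and every intermediate state of polynomial size. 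Therefore, an NP verifier can guess the sequence of non-deterministic choices (essentially the permutations $\rho$ selected by rule (6) at each application) and simulate \letrecunify\ in polynomial time, accepting if and only if the run produces a successful output.

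Combining these two observations yields NP-completeness. The main subtle point — and really the only thing that needs care — is ensuring that all sources of nondeterminism in \letrecunify\ are polynomially bounded; this was already handled in Theorem~\ref{thm:unification-terminates}, where it was shown that (6) is invoked only $O(S^3\log S)$ times and each invocation has at most $\exp(O(S))$ choices, so a single successful branch is describable by a polynomial-size certificate. No additional work on fixpoint equations, freshness simplification, or permutation-group membership checks is needed, as their polynomial complexity has already been established. Hence the theorem follows.
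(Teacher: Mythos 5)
Your proof is correct and follows essentially the same route as the paper: the paper's own proof is a one-line citation of Theorems~\ref{thm:unification-terminates}, \ref{thm:unification-sound-and-complete}, and \ref{thm:matching-NP-hard}, combining NP-hardness of the matching special case with NP membership via the soundness, completeness, and nondeterministic polynomial runtime of \letrecunify. Your additional remarks on certifying a single branch of the nondeterministic choices are just an explicit unpacking of what the paper leaves implicit.
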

\begin{proof}
	This follows from Theorems \ref{thm:unification-terminates}  and \ref{thm:unification-sound-and-complete},
	and  Theorem \ref{thm:matching-NP-hard}.
\end{proof}

\section{Nominal matching with letrec: \letrecmatch}\label{sec:matching}

Reductions using reductions rules of the form $l \to r$ in higher order calculi with letrec, for example in a core-language of Haskell, 
require a (nominal) matching algorithm, matching the rules'  left hand side  to an expression or subexpression that is to be transformed.
An example is the beta-reduction (see the example below), but also a lot of other transformation rules can serve as examples.
For the application it is sufficient if the instance of the  right hand side $r\sigma$ is ground if $l\sigma$ is ground, and
the variable convention holds for $r\sigma$. 
In \cite{Fernandez-Gabbay:07} nominal rewriting (without letrec) is discussed, where more examples can be found, and where nominal matching is derived from
the nominal unification algorithm.  In this work also rewriting using freshness contexts is investigated,  and matching is defined using terms-in-context .
We only concentrate on the nominal matching part, since adding constraints is not problematic.

\begin{example}
	Consider the (lbeta)-rule, which is the version of (beta) used in call-by-need  calculi with sharing
	\cite{ariola:95,moran-sands-carlsson:99,schmidt-schauss-schuetz-sabel:08}. Note that in this case, the binding is used, but not the property ``recursive binding``.
	\begin{alignat*}{1}
	(\mathit{lbeta}) \quad & (\lambda x.e_1)~e_2 \to ~\tletr~x. e_2~\tin~e_1.
	\end{alignat*}
	An (lbeta) step, for example, reducing  $((\lambda x.x)~(\lambda y.y))$ to $(\tletr~x = (\lambda y.y)~\tin~x)$ is performed by  
	representing the target in $LRL$ and the beta-rule in the language LRLX, where $e_1,e_2$ are represented as variables $X_1,X_2$,
	and then matching
	$(\mathit{app}~(\lambda c.X_1)~X_2) \matcheq  (\mathit{app}~ (\lambda a.a)~(\lambda b.b))$, where $\mathit{app}$
	is the explicit representation of the binary application operator. This  results in $\sigma := \{X_1 \mapsto c; X_2 \mapsto \lambda b.b\}$,
	and the reduction result is the instance (using $\sigma$) of   $(\tletr~c. X_2~\tin~X_1)$, which is  $(\tletr~c. (\lambda b.b)~\tin~c)$.
	Note that this form of reduction sequences permits $\alpha$-equivalence as intermediate steps.
\end{example}

We derive a nominal letrec matching algorithm as a specialization of \letrecunify.
We use non symmetric  equations written $s \matcheq t$, where $s$ is an $\LRLX$-expression (also with permutations), and $t$ is ground, i.e., does not contain free variables.
It is easy to see that we can also assume that $t$  does not contain permutations, since these can immediately be simplified.
We assume that the input is a set $s_1 \matcheq t_1,\ldots,s_n \matcheq t_n$ of match equations of expressions, where for all $i$: $s_i$ may contain variables, and $t_i$ is ground.
We omit freshness constraints in the input, since these could be checked after the run of the algorithm.
Note that suspensions are not necessary  in the solution, and hence also no fixpoint equations.

\begin{figure}[!ht]\small
\vspace*{-1mm}
$\gentzen{\Gamma \dotcup\{e \matcheq e\}}{\Gamma}$ \qquad
$\gentzen{\Gamma\dotcup\{(f~s_1 \ldots s_n)\matcheq (f~s_1' \ldots s_n')\}}
{\Gamma \cup \{s_1 \matcheq s_1',\ldots,s_n \matcheq s_n'\}}$ \qquad
$\gentzen{\Gamma\dotcup\{\lambda a.s \matcheq \lambda a.t\}}
{\Gamma \cup \{s\matcheq t\}}$
\\[2mm]

$\gentzent{\Gamma\dotcup \{\lambda a.s\matcheq \lambda b.t\} \hspace*{6mm} a \# t}
{\Gamma \cup \{s \matcheq (a~b){\cdot}t\}}$
\qquad
$\gentzen{\Gamma\dotcup \{\pi{\cdot}X \matcheq e\}}
{\Gamma \cup  \{X \matcheq \pi^{-1}{\cdot}e\}}$
\qquad
$\gentzen{\Gamma\dotcup \{X \matcheq e_1, X \matcheq e_2\} \hspace*{6mm}  e_1 \sim e_2}
{\Gamma \cup  \{X \matcheq e_1\}}$
\\[2mm]

$\gentzen{\Gamma \dotcup \left\{\begin{array}{@{}ll@{}}  & \tletr ~a_1.s_1;\ldots;a_n.s_n~\tin~r \\ \matcheq & \tletr ~b_1.t_1;\ldots;b_n.t_n~\tin~r'
   \end{array}\right\}
   \hspace*{2.5mm}  a_i \freshdot (\tletr ~b_1.t_1;\ldots;b_n.t_n~\tin~r') \text{ for } i = 1,\ldots,n
   }
{\mathop{|}\limits_{\{\rho\}}~\Gamma \cup\{s_1 \matcheq \pi{\cdot}t_{\rho(1)},\ldots,  s_n \matcheq \pi{\cdot}t_{\rho(n)},
   r \matcheq \pi{\cdot}r'\}}$\\[2mm]
   \hspace*{6mm}\begin{minipage}{0.95\textwidth}  
       { where $\rho$ is a  permutation on $\{1,\ldots,n\}$,
    and $\pi$ is an (atom-) permutation that extends  \mbox{$\{b_{\rho(i)} \mapsto a_i \mid i = 1,\ldots,n)\}$}   with $\dom(\pi) \subseteq \{a_1,\ldots,a_n,b_1,\ldots,b_n\}$
   }
   \end{minipage}
\caption{Rules of the matching algorithm \letrecmatch}\label{fig:letrec-match}
\end{figure}

\begin{definition}
The rules of the nondeterministic algorithm \letrecmatch\ w.r.t. the language LRLX  are in Fig. \ref{fig:letrec-match} and the corresponding failure rules
are in Fig. \ref{fig:letrec-match-failure}.
The result is either a fail, or a substitution in the form $X_1 \matcheq s_1,\ldots,X_n \matcheq s_n$.
\end{definition}

\noindent
Note that the rules are designed such that permutations are moved to the rhs, and conditions can be immediately evaluated.

\begin{figure}[!h]\small
\vspace{1mm}
$\gentzen{\Gamma \dotcup \left\{\begin{array}{@{}ll@{}}  & \tletr ~a_1.s_1;\ldots;a_n.s_n~\tin~r \\ \matcheq & \tletr ~b_1.t_1;\ldots;b_n.t_n~\tin~r'
   \end{array}\right\}
   \hspace*{2.5mm} \begin{array}{l} a_i \text{ is fresh in } (\tletr ~b_1.t_1;\ldots;b_n.t_n~\tin~r')\\
    \text{ for some } i \in \{1,\ldots,n\}
   \end{array}
   }
{ \bot }$
 \\[2mm]
$\gentzen{s \matcheq t \in \Gamma, \text{ and } s \text{ is not a suspension, but }
	 \tops(s) \not= \tops(t)} {\bot}$
	  \hspace*{5mm}
 $\gentzent{\Gamma\dotcup \{\lambda a.s\matcheq \lambda b.t\} \hspace*{6mm}  a \text{ is fresh in } t)} {\bot}$
	 \\[2mm]
  $\gentzen{\Gamma\dotcup \{X \matcheq e_1, X \matcheq e_2\} \hspace*{6mm}  e_1 \not\sim e_2}
{\bot}$\vspace*{-1mm}
\caption{Failure rules of the matching algorithm \letrecmatch}\label{fig:letrec-match-failure}
\end{figure}\vspace*{-4mm}

The $\alpha$-equivalence test  $e_1 \sim e_2$ may for example  be performed as a subroutine call to this (nondeterministic) matching procedure in the collecting version,
i.e., the test succeeds if there is a nondeterministic execution with success as result.    
\begin{example} We illustrate the failure rule that signals fail, if $\lambda a.s\matcheq \lambda b.t$ is a match-equations and $a$ is fresh in $t$.
 In the case
$\lambda c.c \matcheq \lambda b.a$, the failure rule signals ``fail''. We also could proceed and replace it using the lambda-rule by $c \matcheq (c~ b){\cdot}a$,
which immediately reduces to $c \matcheq a$, which is fail sind the top symbols are different.
\end{example}

\noindent

 Standard arguments show:
\begin{theorem}\label{thm:matching-in-NP}
	\letrecmatch\  is sound and complete for nominal letrec matching.
	It decides nominal letrec matching in
	nondeterministic polynomial time.
	Its collecting  version  returns a finite complete set of an at most exponential number of matching substitutions,
	which are of at most polynomial size.
\end{theorem}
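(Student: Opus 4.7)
The plan is to mirror the soundness, completeness, and complexity proofs of {\letrecunify} (Theorems \ref{thm:unification-sound-and-complete} and \ref{thm:unification-terminates}), but simplified by the fact that one side of every equation is ground, which removes the need for fixpoint equations and for the rules (MMS), (FPS), and (ElimFP). For soundness, I would go through the rules of Fig.~\ref{fig:letrec-match} one at a time and check that each preserves the set of ground solutions: the identity, function decomposition and $\lambda$-rules are routine, the suspension rule is sound because $\pi{\cdot}X \matcheq e$ iff $X \matcheq \pi^{-1}{\cdot}e$ when $e$ is ground, the merge rule is sound by definition of $\sim$, and the letrec rule is sound by the characterization of $\sim$ for letrec in Def.~\ref{def:alpha-equivalence-sim}. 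The failure rules in Fig.~\ref{fig:letrec-match-failure} exhaust all configurations in which no matcher exists, so unreachability of $\bot$ corresponds exactly to solvability.

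For completeness, I would argue that the deterministic rules strictly preserve the full solution set, while the don't-know nondeterministic letrec rule enumerates all position-permutations $\rho$; thus any ground matcher $\sigma$ fixes a particular $\rho$ and a compatible extending atom-permutation $\pi$, and the corresponding branch reproduces $\sigma$ up to the remaining freedom. Together with the absence of fixpoint phenomena (since the rhs is ground, no nontrivial suspension can survive), this shows every matcher appears in the output of some branch.

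For the complexity bound I would introduce a lexicographic termination measure on $\Gamma$, for example $(\numberVar,\numberSize,\numberEquations)$ restricted to the lhs of match-equations (since the rhs is ground and is only permuted, never enlarged beyond a constant factor). The function/$\lambda$/letrec decomposition rules strictly reduce $\numberSize$ on the lhs; the suspension rule reduces $\numberVar$ or $\numberEquations$; the merge rule reduces $\numberEquations$; the identity rule reduces $\numberEquations$. A standard counting as in the proof of Theorem~\ref{thm:unification-terminates} shows the number of rule applications is polynomial in the input size $S$. Since each rule application is itself polynomial (permutation composition and application are polynomial, as recalled in Sec.~\ref{subsec:permutation-groups}), the decision procedure runs in nondeterministic polynomial time.

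The main obstacle is the $\sim$-test in the merge rule and in the failure rule for $X \matcheq e_1, X \matcheq e_2$ with $e_1 \not\sim e_2$, since $\alpha$-equivalence of ground letrec expressions is itself at least graph-isomorphism hard and lies in NP. I would handle this by observing, as the paper suggests immediately after Fig.~\ref{fig:letrec-match-failure}, that the test may be delegated to the collecting version of {\letrecmatch} itself (or to a single guess of the $\rho$-permutations occurring in the nested letrecs); this keeps the whole procedure inside NP because we only need a single existential guess that covers both the top-level matching derivation and all internal $\sim$-checks. Finally, to bound the number of matchers produced by the collecting variant, I note that the only branching rule is the letrec rule, which splits into $n!$ alternatives for an environment of size $n$, that at most polynomially many letrec rule applications occur along each branch, and that each surviving branch outputs at most one substitution of size polynomial in $S$ (since the substitution images are subterms of the ground rhs up to a bounded permutation prefix). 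This yields an at most exponential complete set of polynomial-size matchers, as claimed.
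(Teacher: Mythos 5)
Your proposal is correct and follows essentially the route the paper intends: the paper itself offers no detailed proof (it only appeals to ``standard arguments'', i.e.\ specializing the soundness/completeness/complexity reasoning of Theorems~\ref{thm:unification-sound-and-complete} and \ref{thm:unification-terminates} to ground right-hand sides, with the $\sim$-tests delegated as remarked after Fig.~\ref{fig:letrec-match-failure}), and your write-up fills in exactly those steps. The only nitpick is your claim that the suspension rule decreases $\numberVar$ or $\numberEquations$ --- it decreases neither, but this is harmless since after one application the left-hand side is a plain variable (so the rule cannot be iterated on the same equation), or one can simply add the number of non-trivial suspensions on left-hand sides as a further measure component.
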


\begin{theorem}
	Nominal letrec matching is NP-complete.
\end{theorem}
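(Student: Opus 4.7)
The plan is to prove \emph{NP-completeness} by separately establishing membership in $\NP$ and $\NP$-hardness, both of which are essentially already available from earlier results in the paper, so the task reduces to citing and combining them correctly.

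For the upper bound, I would appeal directly to Theorem~\ref{thm:matching-in-NP}, which states that the algorithm \letrecmatch\ decides nominal letrec matching in nondeterministic polynomial time. Concretely, a nondeterministic Turing machine can, on input a matching problem $(s_i \matcheq t_i)_{i=1}^n$, guess the sequence of disjunctive choices made in the rule for letrec-decomposition (the permutations $\rho$ and the compatible atom-permutations $\pi$), run the deterministic portion of \letrecmatch, and verify the resulting substitution in polynomial time. This gives membership in $\NP$.

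For the lower bound, I would invoke Theorem~\ref{thm:matching-NP-hard}, which reduces Hamiltonian cycle in $3$-regular graphs to nominal letrec matching in $\LRL$ in polynomial time. Since the matching problem produced by that reduction is a legitimate instance of nominal letrec matching (the left-hand side contains the $X_i$ and $Z,Z'$ variables while the right-hand side is ground), and since Hamiltonian cycle in $3$-regular graphs is $\NP$-hard~\cite{picouleau:94,garey-johnson-tarjan:76}, nominal letrec matching is $\NP$-hard.

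Combining the two parts yields $\NP$-completeness. There is no real obstacle here; the only thing to be careful about is that the two cited results really address the same problem formulation used in the statement (matching on $\LRL$-expressions with ground right-hand sides), which is indeed the case by the setup in Section~\ref{sec:matching}. Thus the proof is essentially a one-line combination: membership by Theorem~\ref{thm:matching-in-NP} and hardness by Theorem~\ref{thm:matching-NP-hard}.
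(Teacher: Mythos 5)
Your proposal is correct and matches the paper's own proof, which likewise combines Theorem~\ref{thm:matching-in-NP} for membership in $\NP$ with Theorem~\ref{thm:matching-NP-hard} for hardness. The additional detail you give about guessing the disjunctive choices in \letrecmatch\ is consistent with how the paper establishes the upper bound.
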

\begin{proof}
	The problem is in NP, which follows from Theorem \ref{thm:matching-in-NP}. It is also NP-hard,
	which follows from Theorem \ref{thm:matching-NP-hard}.
\end{proof}

\subsection{Remarks on letrec-matching with DAGs}\label{subsec:nom-dag-matching}

A  more general situation for nominal letrec matching occurs, when the matching equations $\Gamma_0$ are compressed using a DAG.
We construct a practically more efficient algorithm \letrecdagmatch\ from \letrecunify\ as follows. First we generate $\Gamma_1$ from $\Gamma_0$,
which only contains  flattened  expressions by   
encoding the DAG-nodes as variables $X$ together with an equation $X \doteq r$ for the subexpression, which results in a unification equation.
This representation is no longer a letrec-matching problem, since there may be variables in the left and right-hand side of equations.
However, it has structural properties inherited from the sharing.
An expression is said $\Gamma_0$-ground, if it does not
reference variables from $\Gamma_0$ (also via equations).
In order to avoid suspension on the rhs (i.e. to have nicer results), the decomposition rule for $\lambda$-expressions with different binder names
is modified as follows :

\begin{flushleft}
	$\gentzen{\Gamma\dotcup(\lambda a.s \doteq \lambda b.t\},\nabla }
	{\Gamma \cup \{s \doteq (a~b){\cdot}t\}, \nabla \cup  \{a {\freshdot} t\}}$
	~~ $\lambda b.t$  is $\Gamma_0$-ground
	\\[2mm]
\end{flushleft}

The extra conditions $a {\freshdot} t$ and $\Gamma_0$-ground  can be tested in polynomial time.
The equations $\Gamma_1$ are  processed applying  \letrecunify\ (with the mentioned modification)
with the  guidance  that the right-hand sides of match-equations are also right-hand sides of equations in the decomposition rules.
The resulting matching substitutions can be interpreted as the instantiations into the variables of $\Gamma_0$.
Since $\Gamma_0$ is a matching problem, the result will be free of  fixpoint equations, and there will be no freshness constraints in the
solution.

\medskip
This construction would permits better performance than simply treating the DAG-matching problem as a unification problem.

\section{Graph-isomorphism-hardness of nominal letrec matching and unification without garbage}\label{sec:letrec-unif-match-hardness}

We will show in this section that Nominal Letrec Matching is at least as hard as Graph-Isomorphism.
  Graph-Isomorphism is known to have complexity between $\mathit{PTIME}$ and $\NP$. There are arguments that it is strictly weaker than the class of NP-complete problems
\cite{schoening:88}. There is also a claim that it is quasi-polynomial \cite{Babai:2016}, which means that it requires less than exponential time.
The general conjecture is that Graph-Isomorphism is properly between $\mathit{PTIME}$ and $\NP$.

First we clarify the notion of garbage, which is a notion from a programming language point of view.

\begin{definition}\label{def:garbage}
We say that an expression $t$ {\em contains garbage}, iff there is a subexpression $(\tletr~\env$ $ \tin ~r)$,
and the environment $\env$ can be split into two environments
$\env = \env_g;\env_{ng}$, such that $\env_g$ (the garbage) is not empty, and the atoms from
$\LA(\env_{g})$ occur neither free in $\env_{ng}$ nor in $r$.
Otherwise, the expression $t$ is {\em free of garbage} (or {\em garbage-free}).
\end{definition}

An example illustrating the notions is $\tletr~a.b;b.c~\tin~b$, where $a,b,c$ are (different) atoms. The binding $a.b$ is not used in the in-expression $b$,
hence it can be classified as garbage, whereas the binding  $b.c$ is used. In a programming language, the garbage collector would remove this binding and replace the expression
by $\tletr~b.c~\tin~b$.

Since  $\alpha$-equivalence of $\LRL$-expressions is Graph-Isomorphism-complete \cite{schmidtschauss-sabel-rau-RTA:2013}, but
$\alpha$-equivalence of garbage-free $\LRL$-expressions is polynomial, it is useful to look for improvements of unification and matching
for garbage-free expressions.

We will show that even very restricted nominal letrec matching problems are Graph-Isomorphism complete, which makes it very unlikely that there is a
polynomial algorithm.
\begin{theorem}\label{thm:matching-GI-hard}
	Nominal letrec matching with one occurrence of a single variable and a garbage-free target expression is Graph-Isomorphism-hard.
\end{theorem}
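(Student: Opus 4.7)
\emph{Proof plan.} The plan is to reduce the Graph Isomorphism problem to this restricted nominal letrec matching problem in polynomial time. Given two graphs $G_1=(V_1,E_1)$ and $G_2=(V_2,E_2)$, first check in polynomial time that $|V_1|=|V_2|=n$ and $|E_1|=|E_2|=m$; otherwise $G_1\not\cong G_2$ trivially. The goal is to build a source $s$ containing exactly one occurrence of exactly one variable $X$, and a garbage-free target $t$, so that $s \matcheq t$ is solvable in $\LRLX$ iff $G_1\cong G_2$.

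The construction uses disjoint families of atoms: $v_1,\ldots,v_n$ for the vertices of $G_1$ and $b_1,\ldots,b_m$ for its edges, and analogously $v'_1,\ldots,v'_n,b'_1,\ldots,b'_m$ for $G_2$. With a nullary function symbol $\mathit{nil}$, a binary $c$, and an $(n+m)$-ary $d$, define
\begin{align*}
  s &:= \tletr~v_1.\mathit{nil};\ldots;v_n.\mathit{nil};b_1.(c~u_1~w_1);\ldots;b_m.(c~u_m~w_m)~\tin~X,\\
  t &:= \tletr~v'_1.\mathit{nil};\ldots;v'_n.\mathit{nil};b'_1.(c~u'_1~w'_1);\ldots;b'_m.(c~u'_m~w'_m)~\tin~(d~v'_1~\ldots~v'_n~b'_1~\ldots~b'_m),
\end{align*}
where $(u_j,w_j)$ enumerates $E_1$ and $(u'_j,w'_j)$ enumerates $E_2$. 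The target $t$ is garbage-free because every atom bound in its only letrec occurs free in the body tuple. The source $s$ contains only the variable $X$, once.

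For correctness, rely on rule~(6) of \letrecmatch, which chooses a permutation $\rho$ on the $n+m$ bindings and an atom-permutation $\pi$ extending $b'_{\rho(i)}\mapsto a_i$ on the binders, with $\dom(\pi)$ contained in the binding atoms. Since $\mathit{nil}$ and $(c~\cdot~\cdot)$ have different top symbols, the (Clash) failure rule forces $\rho$ to pair vertex bindings with vertex bindings and edge bindings with edge bindings. The right-hand-side match on each edge pair $(c~u_j~w_j)\matcheq\pi\cdot(c~u'_{\rho(j)}~w'_{\rho(j)})$ yields $u_j=\pi(u'_{\rho(j)})$ and $w_j=\pi(w'_{\rho(j)})$. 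Therefore $\pi$ restricted to $V(G_2)$ is a bijection onto $V(G_1)$ preserving adjacency, i.e., a graph isomorphism. Conversely, any isomorphism $f\colon V_2\to V_1$ induces a suitable $\rho$ and $\pi$, and the body match $X\matcheq\pi\cdot(d~\ldots)$ always succeeds by instantiating $X$. The freshness side conditions $a_i\freshdot t$ hold since the atom sets of $s$ and $t$ are disjoint.

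The hard part will be verifying that the construction introduces no spurious matchings beyond those corresponding to isomorphisms. Two points must be checked: first, the vertex/edge separation has to be forced purely by the shapes of right-hand sides, which follows from the (Clash) rule; and second, the freedom of $\pi$ on the edge-label atoms $b_j,b'_k$ must be irrelevant to the graph structure, which holds because those atoms occur only as binders and inside the body tuple, while the body match imposes no constraint other than instantiating $X$. Once these are established, the reduction is polynomial and yields the desired Graph-Isomorphism-hardness.
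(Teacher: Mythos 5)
Your overall strategy is the paper's: encode the two graphs as letrec environments, put the single variable $X$ as the body of the pattern, and force garbage-freeness of the target by making every binder reachable from the body. In fact your variant of the garbage-freeness argument is slightly cleaner than the paper's: by listing \emph{all} binder atoms $v'_1,\ldots,v'_n,b'_1,\ldots,b'_m$ in the body tuple you get a garbage-free target for arbitrary graphs, whereas the paper's body mentions only the edge atoms and therefore has to restrict to regular graphs of degree $\ge 1$ (citing GI-hardness for that subclass) so that every node atom occurs in some edge binding. The clash-based separation of vertex bindings from edge bindings and the extraction of an isomorphism from a successful match are also sound.

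There is, however, a genuine gap in the completeness direction of your reduction: you encode each undirected edge by a \emph{single} binding $b_j.(c~u_j~w_j)$ with an arbitrarily chosen orientation. The matching $(c~u_j~w_j)\matcheq \pi\cdot(c~u'_{\rho(j)}~w'_{\rho(j)})$ forces $u_j=\pi(u'_{\rho(j)})$ \emph{and} $w_j=\pi(w'_{\rho(j)})$, i.e.\ it only accepts isomorphisms that respect the chosen orientations. Two isomorphic undirected graphs can fail this: take $G_1=G_2$ the triangle on $\{1,2,3\}$, orient $E_1$ as $(1,2),(2,3),(3,1)$ (a directed $3$-cycle) and $E_2$ as $(1,2),(2,3),(1,3)$; no vertex bijection maps the second oriented edge set onto the first, so your instance is unsolvable although $G_1\cong G_2$, and the claimed equivalence ``$s\matcheq t$ solvable iff $G_1\cong G_2$'' breaks. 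This is exactly why the paper's encoding introduces \emph{two} bindings per edge, $b_i.(\mathit{edge}~a_k~a_l)$ and $b'_i.(\mathit{edge}~a_l~a_k)$, making the encoding orientation-independent. The fix is easy (symmetrize as the paper does, or state explicitly that you reduce from directed GI, which is polynomially equivalent to GI), but as written the step ``any isomorphism $f$ induces a suitable $\rho$ and $\pi$'' is false.
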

\begin{proof}
	Let $G_1, G_2$ be two regular graphs with degree $\ge 1$. Let $t$ be $(\tletr~\env_{G_1}~\tin~ g~ b_1~b_1' \ldots b_m~b_m')$ the encoding of an arbitrary graph
	$G_1$ where $\env_{G_1}$ is the encoding
	as in the proof of Theorem \ref{thm:matching-NP-hard}: nodes are
	encoded as $a_1,\ldots,a_n$,i.e., the bindings in the environment are $a_1.(\mathit{node}~a_1),\ldots,a_n1.(\mathit{node}~a_n)$,
	and the edges are encoded as  follows: The $i$-th edge  $(a_k~a_l)$ is encoded as $b_i.(\mathit{edge}~a_k~a_l),b_i'.(\mathit{edge}~a_l~a_k)$.
	Then $t$ is free of garbage, since the graph $G_1$ is regular.
	Let the environment $\env_{G_2}$  be the encoding of $G_2$ in $s = (\tletr~ \env_{G_2}~\tin~X)$, where $\env_{G_2}$ is constructed in the same way from the graph $G_2$.
	Then $s$ matches $t$ iff the graphs $G_1,G_2$ are isomorphic.
	If there is an isomorphism of $G_1$ and $G_2$, then it is easy to see that this bijection leads to an equivalence of the environments,
	and we can instantiate $X$ with $(g ~b_1~b_1' \ldots b_m~b_m')$.
	Since the graph-isomorphism problem for regular graphs of degree $\ge 1$ is GI-hard \cite{Booth:78},
	we have shown GI-hardness.
\end{proof}

\section{On fixpoints and garbage}\label{sec:no-garbage-fixpoints}

We will show in this section that $\LRLX$-expressions without garbage only have trivial fix-pointing permutations.
Looking at Example \ref{example:fixpoints-possible}, the  $\alpha$-equivalence  $(a~b)\cdot (\tletr~c.a;d.b~\tin ~\mathit{True})$ $\sim$
$(\tletr~\allowbreak c.a;d.b ~\tin ~\mathit{True})$ holds, where $\dom((a ~b)) \cap \FA(\tletr~c.a;d.b~\tin ~\mathit{True}) = \{a,b\} \not= \emptyset$.
 However, we see that the expressions is equivalent (in a programming language semantics) to $\mathit{True}$, and that the whole environment in this example is garbage,
 since no binding is
 referenced from the in-expression
  (see Def.  \ref{def:garbage}).

\medskip
As a helpful information, we write the $\alpha$-equivalence-rule for letrec-expressions in the ground language $\LRL$ as an extension
of the rule for lambda-abstractions.\smallskip

\begin{flushleft}
  $\gentzen  {r \sim \pi{\cdot}r',\, s_i \sim \pi{\cdot}t_{\rho(i)},\, i = 1,\ldots,n , \hspace*{4mm}
        M {\freshdot}(\tletr ~b_1.t_1;\ldots;b_n.t_n~\tin~r') }
   {\tletr ~a_1.s_1;\ldots;a_n.s_n~\tin~r \sim \tletr ~b_1.t_1;\ldots;b_n.t_n~\tin~r'}$ \\[1mm]
  \vspace{1mm}\hspace*{5mm} \begin{minipage}{0.9\textwidth}
       where $\rho$ is a  permutation on $\{1,\ldots,n\}$,  $M = \{a_1,\ldots,a_n\} \setminus \{b_1,\ldots,b_n\}$, and $\pi$
     is an atom-permutation-extension of    
      the bijective function $\{b_i \mapsto a_{\rho(i)},i = 1,\ldots,n \}$ such that
      $\dom(\pi) \subseteq (\{b_1,\ldots,b_n\} \cup \{a_1,\ldots,a_n\})$.
   \end{minipage}
     \\[3mm]
 \end{flushleft}

 Note that $\alpha$-equivalence of $s,t$  means structural equivalence of $s,t$ as trees, and a justification always comes with a bijective relation between
the positions of $s,t$ where only the names of atoms at nodes may be different.

A further example of garbage is $(\tletr~a.0; b.1~\tin~(f~b))$, where $a$ is unused, but $b$ is used in the right hand side. In this case  $a.0$
is garbage.
 Another example is $e := (\tletr~a.d; b.1; c.d~\tin~(f~b))$, which is an example with a free atom $d$, and the garbage consists of two bindings, $\{a.d;c.d\}$.
 It is $\alpha$-equivalent to $(\tletr~a'.d; b.1; c'.d~\tin~(f~b)) =: e'$.  Note that in this case, there are two
  different permutations (bijective functions) mapping $e'$ to (the $\alpha$-equivalent) $e$:
 $\{a' \mapsto a;c' \mapsto c\}$ and $\{a' \mapsto c;c' \mapsto a\}$.

   The next lemma shows that  this situation is only possible if the expressions contain garbage.

\begin{lemma}\label{lemma:gcfree-unique}
If $s\sim t$, and $s$ is free of garbage, then $\alpha$-equivalence provides a unique correspondence of the positions of $s$ and $t$.
\end{lemma}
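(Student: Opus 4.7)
The plan is to proceed by structural induction on $s$, the critical case being when $s$ is a $\tletr$-expression. For atoms, function applications, and abstractions the position correspondence is forced by the tree structure, and a direct appeal to the induction hypothesis on the immediate subterms (which are themselves garbage-free) does the job. So the whole content lies in showing that, for a garbage-free letrec, the binding-permutation $\rho$ and the accompanying atom-permutation $\pi$ that witness $\sim$ (as in the rewritten $\alpha$-rule displayed above the lemma) are uniquely determined.

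First I would reformulate what is to be shown in the letrec case: given
$s = (\tletr~a_1.s_1;\ldots;a_n.s_n~\tin~r)$ garbage-free and $s \sim t = (\tletr~b_1.t_1;\ldots;b_n.t_n~\tin~r')$, if two witnesses $(\rho_1,\pi_1)$ and $(\rho_2,\pi_2)$ both justify $s\sim t$, then $\rho_1=\rho_2$ and $\pi_1$, $\pi_2$ agree on $\{a_1,\ldots,a_n\}$. Once this is established, the induction hypothesis applied to $r \sim \pi\cdot r'$ and to each $s_i \sim \pi\cdot t_{\rho(i)}$ delivers the unique position correspondence on every subterm, and these glue together into the unique position correspondence between $s$ and $t$.

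The core argument is a propagation along the dependency graph of the environment. I would define a set $A \subseteq \{1,\ldots,n\}$ of indices for which $\rho_1(i)=\rho_2(i)$ and $\pi_1^{-1}(a_i)=\pi_2^{-1}(a_i)$ have already been established, and show $A = \{1,\ldots,n\}$. Seeding: if $a_i$ occurs free at some position $p$ in $r$, then $r \sim \pi_k\cdot r'$ combined with the induction hypothesis yields a unique position correspondence on $r$ with $\pi_k\cdot r'$; the atom sitting at the position matched with $p$ on the $r'$-side is therefore forced to be $\pi_k^{-1}(a_i)$, independently of $k$, hence $\pi_1^{-1}(a_i)=\pi_2^{-1}(a_i)$, and since the $b_j$ are pairwise distinct and $\pi_k(b_{\rho_k(i)})=a_i$, also $\rho_1(i)=\rho_2(i)$. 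Propagation: if $i\in A$ and $a_j$ occurs free in $s_i$, then because $\rho_1(i)=\rho_2(i)=:\rho(i)$, both witnesses force $s_i \sim \pi_k\cdot t_{\rho(i)}$ with the \emph{same} right-hand side (modulo $\pi_k$), and the induction hypothesis on $s_i$ again pins down $\pi_k^{-1}(a_j)$ uniquely, placing $j$ into $A$. Garbage-freeness is exactly the statement that every index is reachable from $r$ by such a chain of free occurrences, so iterating the propagation exhausts $\{1,\ldots,n\}$.

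I expect the main technical friction to be bookkeeping rather than ideas: one has to be careful that the induction hypothesis is applied to objects that are themselves garbage-free (which holds because every subterm of a garbage-free term is garbage-free), and that the ``position correspondence'' is formally identified with the bijection on positions read off from the $\sim$-derivation, so that ``atom at the corresponding position'' is a well-defined notion. Once these points are set up, the reachability argument closes the case, and the outer structural induction yields the lemma.
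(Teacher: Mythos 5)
Your proposal is correct and follows essentially the same route as the paper's proof: a structural induction whose only nontrivial case is letrec, seeded by the unique correspondence on the in-expression (which fixes the bindings of atoms free there) and then propagated exhaustively through free occurrences of binder atoms in the right-hand sides of already-matched bindings, with garbage-freeness guaranteeing that this reachability process covers every binding. Your version is merely more explicit about the bookkeeping (the set $A$ of settled indices and the two witnesses $(\rho_1,\pi_1)$, $(\rho_2,\pi_2)$) than the paper's terser sketch.
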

\begin{proof}
The proof is by induction on the structure and size of expressions.
For the structure, the only nontrivial case is  letrec:
Let  $s  = (\tletr~a_1. e_1,\ldots,a_n.e_n~\tin~e)$ $\sim$  $(\tletr~b_1. f_1,\ldots,b_n.f_n~\tin~f) = t$. Note that due to syntactic
equality all permutations of the environments are also to be considered.
Then there is bijective mapping $\varphi$,
with $\varphi(b_i) = a_{\rho(i)}, i=1,\ldots,n$, where $\rho$ is a permutation on $\{1,\ldots,n\}$, and such that
$e_i \sim \varphi(f_{\rho(i)}), i=1,\ldots,n$, $e \sim \varphi(f)$, and $(\{a_1,\ldots,a_n\}\setminus\{b_1,\ldots,b_n\})\# t$ holds.
Let $\overline{\varphi}$ be the atom-permutation
that extends $\varphi$, mapping $(\{a_1,\ldots,a_n\}\setminus\{b_1,\ldots,b_n\})$ to $(\{b_1,\ldots,b_n\} \setminus \{a_1,\ldots,a_n\})$.

The induction hypothesis implies a unique position correspondence
of $e$ and $f$, since $e \sim \overline{\varphi}(f)$. This implies that the bindings for  $\{a_1,\ldots,a_n\} \cap \FA(e)$ have a unique
correspondence to the bindings in $t$. This is continued by exhaustively following free occurrences of atoms
$a_i$ in the right hand sides of the top bindings in $s$. Since there is no garbage in $s$, all bindings can be reached by this process, hence
we have uniqueness of the correspondence of positions.
\end{proof}

\begin{proposition}\label{prop:no-gc-implies-fp-condition}
Let $e$ be an expression that does not have garbage, and let $\pi$ be a permutation. Then $\pi{\cdot}e \sim e$ implies
   $\dom(\pi) \cap \FA(e) = \emptyset$.
\end{proposition}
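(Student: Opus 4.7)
The plan is to proceed by structural induction on $e$. The base case $e=a$ is immediate since $\pi\cdot a \sim a$ forces $\pi(a)=a$. The function-symbol case reduces to the induction hypothesis on each subexpression. For an abstraction $e=\lambda a.e'$, either $\pi(a)=a$ and the IH on $e'$ applies directly (and $\FA(e)=\FA(e')\setminus\{a\}$), or $\pi(a)\neq a$, in which case the definition of $\sim$ forces $\pi(a)\freshdot e'$ and $(a\,\pi(a))\pi\cdot e'\sim e'$; applying the IH to $\mu:=(a\,\pi(a))\pi$ on $e'$ gives $\mu(b)=b$ for $b\in\FA(e')$, and the freshness $\pi(a)\freshdot e'$ rules out the case $\pi(b)=a$, letting me transfer this to $\pi(b)=b$ for $b\in\FA(e)$.

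The substantive case is $e=\tletr~a_1.e_1;\ldots;a_n.e_n~\tin~e_0$. First I observe that permutations preserve garbage-freeness, so $\pi\cdot e$ is again garbage-free. Unfolding $\pi\cdot e\sim e$ via Definition~\ref{def:alpha-equivalence-sim} yields an index permutation $\sigma$ and an atom permutation $\tau$ with $\tau(a_{\sigma(i)})=\pi(a_i)$, $\dom(\tau)\subseteq\{a_1,\ldots,a_n\}\cup\{\pi(a_1),\ldots,\pi(a_n)\}$, the freshness $\{\pi(a_1),\ldots,\pi(a_n)\}\freshdot e$, and the body equations $\pi\cdot e_0\sim\tau\cdot e_0$ and $\pi\cdot e_i\sim\tau\cdot e_{\sigma(i)}$. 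Setting $\mu:=\tau^{-1}\pi$ these rearrange to $\mu\cdot e_0\sim e_0$ and $\mu\cdot e_i\sim e_{\sigma(i)}$. A preliminary observation is that $\tau$ already fixes every $b\in\FA(e)$: such $b$ is neither in $\{a_j\}$ (being free in $e$) nor in $\{\pi(a_j)\}$ (by the freshness condition above), hence $b\notin\dom(\tau)$.

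The crux, and the step I expect to be the main obstacle, is to show $\sigma=\mathit{id}$. The plan is a reachability propagation starting from $e_0$. Applying the IH to $\mu\cdot e_0\sim e_0$ (the sub-expression $e_0$ is garbage-free) yields $\mu(a_j)=a_j$ for every letrec binder $a_j\in\FA(e_0)$; unwinding $\mu=\tau^{-1}\pi$ together with the relation $\tau(a_k)=\pi(a_{\sigma^{-1}(k)})$ (which follows directly from $\tau(a_{\sigma(m)})=\pi(a_m)$) gives $\pi(a_j)=\pi(a_{\sigma^{-1}(j)})$, whence $\sigma(j)=j$. Once $\sigma(j)=j$ is established for some $j$, the equation $\mu\cdot e_j\sim e_{\sigma(j)}=e_j$ makes the IH applicable to $e_j$, giving $\mu(a_k)=a_k$ for every $a_k\in\FA(e_j)$ and therefore $\sigma(k)=k$ by the same calculation. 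Since garbage-freeness of $e$ is precisely the condition that every binder $a_k$ is reachable from $e_0$ along the reference chain in the environment (a binder $a_k$ with no such path would form a nonempty garbage block together with all unreachable bindings), this propagation sweeps through every index and forces $\sigma=\mathit{id}$.

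With $\sigma=\mathit{id}$ in hand, $\mu\cdot e_i\sim e_i$ holds for every $i$, so applying the IH to each garbage-free $e_i$ (and again to $e_0$) gives $\dom(\mu)\cap\FA(e_i)=\emptyset$ and $\dom(\mu)\cap\FA(e_0)=\emptyset$. Any $b\in\FA(e)=\bigl(\bigcup_i\FA(e_i)\cup\FA(e_0)\bigr)\setminus\{a_1,\ldots,a_n\}$ therefore satisfies $\mu(b)=b$, and combined with $\tau(b)=b$ from the preliminary observation this yields $\pi(b)=\tau(\mu(b))=b$, finishing the letrec case and hence the induction.
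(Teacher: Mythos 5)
Your proof is correct. The skeleton (structural induction, the same case split, and the same swapping trick $(a\;\pi(a))\pi$ in the abstraction case) coincides with the paper's, but you handle the crucial letrec case by a genuinely different mechanism. The paper splits $\pi$ and invokes Lemma~\ref{lemma:gcfree-unique} (unique position correspondence for garbage-free $\alpha$-equivalent expressions) to pin down the binding correspondence $\varphi$, and then extends it to an atom permutation; the details there are terse. You instead unfold Definition~\ref{def:alpha-equivalence-sim} directly, factor $\pi=\tau\circ\mu$, and prove that the index permutation $\sigma$ is the identity by propagating the induction hypothesis along the reference chain starting from the in-expression $e_0$: the IH on $\mu\cdot e_0\sim e_0$ fixes $\sigma$ on the binders free in $e_0$, each newly fixed index $j$ unlocks the IH on $e_j$, and garbage-freeness is used exactly where it belongs, namely to guarantee that this reachability sweep covers every binding. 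In effect you inline a sharpened form of Lemma~\ref{lemma:gcfree-unique} into the induction, which makes the argument self-contained and makes explicit a fact the paper leaves implicit (that the binding correspondence is forced to be the identity, not merely unique). The paper's route buys modularity, since the uniqueness lemma is stated and reused independently; yours buys a more transparent and verifiable letrec case, and it absorbs the paper's separate $n=1$ subcase into the general argument. All the supporting steps check out: the freshness condition $\{\pi(a_1),\ldots,\pi(a_n)\}\freshdot e$ correctly yields that $\tau$ fixes $\FA(e)$, the identity $\tau(a_j)=\pi(a_{\sigma^{-1}(j)})$ plus injectivity of $\pi$ gives $\sigma(j)=j$, and the unreachable bindings would indeed form a garbage block under Definition~\ref{def:garbage}.
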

\begin{proof}
The proof is by induction on the size of the expression.
\begin{itemize}
\itemsep=0.9pt
  \item  If $e$ is an atom, then this is trivial.
  \item If $e = f~e_1 \ldots.e_n$, then no $e_i$ contains garbage, and $\pi{\cdot}e_i \sim e_i$   implies $\dom(\pi) \cap \FA(e_i) = \emptyset$,
  hence also $\dom(\pi) \cap \FA(e) = \emptyset$.
  \item If $e = \lambda a.e'$, then there are two cases:

     \begin{enumerate}
     \itemsep=0.9pt
       \item $\pi(a) = a$. Then $\pi{\cdot}e' \sim e'$, and we can apply the induction hypothesis.
       \item $\pi(a) = b \not= a$. Then $(a~b){\cdot}\pi$ fixes $e'$, and $b \# e'$.
       The induction hypothesis implies $\dom((a~b){\cdot}\pi) \cap \FA(e') = \emptyset$.
       We have $ \dom(\pi) \subseteq \dom((a~b){\cdot}\pi))  \cup \{a,b\}$, hence $\dom(\pi) \cap \FA(\lambda a.e') = \emptyset$.
     \end{enumerate}
   \item First a simple case with one binding in the environment: $t = (\tletr~a_1.e_1~\tin~e)$,
     $\pi{\cdot} t \sim t$.   
     If $\pi(a_1) =  a_1$, then $\pi{\cdot}(e,e_1) \sim (e,e_1)$,
      and the induction hypothesis implies $\dom(\pi) \cap \FA(e,e_1) = \emptyset$, which in turn implies $\dom(\pi) \cap \FA(t) = \emptyset$.

     If $\pi(a_1) = b \not=  a_1$, then $b \# (e,e_1)$  and for $\pi' := (a_1~b){\cdot}\pi$, it holds $\pi'{\cdot}(e,e_1) \sim (e,e_1)$, and so
      $\dom((a_1~b){\cdot}\pi) \cap \FA(e,e_1) = \emptyset$.  since $\dom(\pi) \subseteq \dom((a_1~b){\cdot}\pi) \cup \{a_1,b\}$,
      we obtain $\dom(\pi) \cap t  = \emptyset$.
      In the case of one binding, it is irrelevant whether the binding is garbage or not.

   \item Let $t = (\tletr~a_1.e_1;\ldots;a_n.e_n~\tin~e)$, and $t$ is a fixpoint of $\pi$, i.e. $\pi(t) \sim t$.  Note that no part of the  environment is garbage.
      The permutation $\pi$ can be split into $\pi =\pi_1{\cdot}\pi_2$, where $\dom(\pi_1) \subseteq \FA(t)$ and $\dom(\pi_2) \cap \FA(t) = \emptyset$.
      From $t \sim \pi{\cdot}t$ and Lemma \ref{lemma:gcfree-unique} we obtain  that there is a unique permutation $\rho$ on $\{1,\ldots,n\}$, such that
      there is an injective mapping $\varphi: \pi(a_1) \mapsto a_{\rho(1)}, \ldots, \pi(a_n) \mapsto a_{\rho(n)}$,  and $e \sim \varphi\pi(e)$,
      $e_{\rho(i)} \sim \varphi\pi(e_i)$. Then $\alpha$-equivalence implies that $\varphi\pi$ can be extended to a atom-permutation $\overline{\varphi}\pi$ by mapping the
      atoms in $\{a_1, \ldots, a_n\} \setminus \{\pi(a_1),\ldots,\pi(a_n)\}$ bijectively to  $\{\pi(a_1),\ldots,\pi(a_n)\} \setminus \{a_1, \ldots, a_n\}$.
      By the freshness constraints for $\alpha$-equivalences of letrec-expressions,   $\overline{\varphi}\pi(e) = \varphi\pi(e)$ and $\overline{\varphi}\pi(e_i) = \varphi\pi(e_i)$
      which in turn implies that $e \sim  \overline{\varphi}\pi(e)$ and $e_i \sim  \overline{\varphi}\pi(_ie)$, and we can apply the induction hypothesis.

       This shows that  $\FA(e) \setminus \{a_1,\ldots,a_n\}$  are not moved by  $\overline{\varphi}\pi$, and the same for all $e_i$, hence this also holds
    for $t$.   
\end{itemize}

\vspace*{-6mm}
\end{proof}
\begin{corollary}\label{cor:no-garbage-no-fixpointeqs}
Let $e$ be an expression that does not have garbage, and let $\pi$ be a permutation.
Then $\pi{\cdot}e \sim e$ is equivalent to
      $\dom(\pi) \cap \FA(e) = \emptyset$.
\end{corollary}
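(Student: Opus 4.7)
The statement is a biconditional, and one direction is already in hand. The implication $\pi\cdot e \sim e \Rightarrow \dom(\pi) \cap \FA(e) = \emptyset$ is exactly the content of Proposition~\ref{prop:no-gc-implies-fp-condition}, so no further work is needed for it. Thus the plan reduces to establishing the converse $\dom(\pi) \cap \FA(e) = \emptyset \Rightarrow \pi\cdot e \sim e$.

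For the converse, I would not use the garbage-freeness hypothesis at all: this is a general property of the action of atom-permutations on $\LRL$-expressions modulo $\alpha$-equivalence, provable by structural induction on $e$. The atom and function-application cases are immediate from the definition of the permutation action and from $\FA(f\,e_1\ldots e_n) = \bigcup_i \FA(e_i)$, so the induction hypothesis applies componentwise. For $e = \lambda a.e'$, note $\FA(e) = \FA(e')\setminus\{a\}$; if $\pi(a) = a$ then $\dom(\pi) \cap \FA(e') \subseteq \dom(\pi)\cap \FA(e) = \emptyset$ and we apply IH directly to $e'$, yielding $\pi\cdot e' \sim e'$ and hence $\lambda a.\pi\cdot e' \sim \lambda a.e'$. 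If $\pi(a) = b \neq a$, then $b \notin \FA(e)$ so $b \# e'$, and the composed permutation $(a\ b)\pi$ has $\dom((a\ b)\pi) \cap \FA(e') = \emptyset$ (because the only atoms possibly introduced into the domain are $a,b$, neither of which lies in $\FA(e')$ under our hypotheses), so IH gives $(a\ b)\pi\cdot e' \sim e'$, which combined with $b\#e'$ is exactly the condition for $\lambda b.\pi\cdot e' \sim \lambda a.e'$ via the abstraction clause of Definition~\ref{def:alpha-equivalence-sim}.

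The letrec case proceeds analogously, using the letrec-clause of Definition~\ref{def:alpha-equivalence-sim} with the identity choice $\rho = \mathit{id}$ on $\{1,\ldots,n\}$. Write $e = (\tletr~a_1.e_1;\ldots;a_n.e_n~\tin~e_0)$ and split the binders into those fixed by $\pi$ and those moved. For each $a_i$ with $\pi(a_i) \neq a_i$, the image $\pi(a_i)$ cannot lie in $\FA(e)$ (otherwise $\pi$ would move a free atom when we consider $\pi^{-1}$, or we can argue directly using $\FA(e) = (\bigcup_j \FA(e_j) \cup \FA(e_0))\setminus\{a_1,\ldots,a_n\}$). After composing $\pi$ with the swappings $(a_i\ \pi(a_i))$ for the moved binders — which is precisely what the letrec clause asks for to witness $\sim$ — the resulting permutation fixes all binders and has domain disjoint from $\FA(e_j)$ and $\FA(e_0)$, so IH applies to each $e_j$ and to $e_0$, and the required freshness conditions $\{a_1,\ldots,a_n\}\setminus\{\pi(a_1),\ldots,\pi(a_n)\}\#\pi\cdot e$ hold by the same disjointness argument.

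The only delicate bookkeeping is this last step: carefully verifying that the composed permutation used as the witness $\pi$ in the letrec clause of Definition~\ref{def:alpha-equivalence-sim} really has the required domain, and that the freshness side-conditions on the excluded binders follow from $\dom(\pi) \cap \FA(e) = \emptyset$. I expect this to be the only nontrivial bit, and once formulated carefully it is routine. Note again that garbage-freeness plays no role in the converse; it is only Proposition~\ref{prop:no-gc-implies-fp-condition} that uses it, which is precisely why the corollary reads as an \emph{equivalence} rather than a mere implication.
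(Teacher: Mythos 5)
Your proposal is correct and takes essentially the same route as the paper: the paper's own proof likewise obtains the hard implication from Proposition~\ref{prop:no-gc-implies-fp-condition} and dismisses the converse as easy, which you simply spell out (correctly noting that garbage-freeness plays no role in that direction). One small slip in your lambda case: $a$ may well lie in $\FA(e')$, so your parenthetical justification is off, but the claim $\dom((a~b)\pi)\cap\FA(e')=\emptyset$ still holds because $(a~b)\pi$ maps $a\mapsto (a~b)(b)=a$, and the analogous care (which you flag yourself) settles the letrec case.
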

 \begin{proof}
 This follows from Proposition \ref{prop:no-gc-implies-fp-condition}. The other direction is easy.
 \end{proof}

The proof also shows a slightly more general statement:
\begin{corollary}
Let $e$ be an expression such that in all environments with at least two bindings there are no garbage bindings, and let $\pi$ be a permutation.
Then $\pi{\cdot}e \sim e$ is equivalent to
      $\dom(\pi) \cap \FA(e) = \emptyset$.
\end{corollary}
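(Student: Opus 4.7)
The plan is to mirror the argument behind Corollary \ref{cor:no-garbage-no-fixpointeqs} and the underlying Proposition \ref{prop:no-gc-implies-fp-condition}, scrutinising \emph{where} the garbage-freeness hypothesis was actually used and observing that it can be relaxed to the condition stated here. The reverse implication ($\dom(\pi)\cap \FA(e)=\emptyset\Rightarrow \pi{\cdot}e\sim e$) is immediate and does not involve garbage at all: a permutation fixing every free atom of $e$ leaves every occurring atom unchanged and merely renames bound ones, which $\sim$ absorbs.

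For the forward direction I would proceed by induction on the size of $e$, literally following the case split from the proof of Proposition \ref{prop:no-gc-implies-fp-condition}. The cases for atoms, function applications, and $\lambda$-abstractions carry over verbatim, since none of them invokes any condition on letrec-environments. The single-binding letrec case $e=(\tletr~a_1.e_1~\tin~e')$ also transfers literally: the original proof explicitly notes that "in the case of one binding, it is irrelevant whether the binding is garbage or not", which is precisely what the weakened hypothesis allows.

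The only subcase requiring real attention is a letrec $t=(\tletr~a_1.e_1;\ldots;a_n.e_n~\tin~e')$ with $n\geq 2$. Our hypothesis gives exactly what is needed at this level: this particular environment has no garbage bindings, while the subexpressions $e'$ and the $e_i$ still satisfy the same hypothesis and are thus available for the induction. The rest of the original argument then goes through: from $\pi{\cdot}t\sim t$ one extracts a permutation $\rho$ on $\{1,\ldots,n\}$ together with an atom-permutation $\overline{\varphi}\pi$ realising the $\alpha$-equivalence, and applies the induction hypothesis to $e'$ and to each $e_i$ to conclude that $\dom(\pi)\cap \FA(t)=\emptyset$.

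The main obstacle is checking that Lemma \ref{lemma:gcfree-unique}, used to obtain the uniqueness of $\rho$ at the $n\geq 2$ step, still applies under the weakened hypothesis. I expect this to be routine: inspecting its proof, non-uniqueness of the position correspondence could only be caused by garbage bindings being freely permutable \emph{within} a single environment, which requires at least two garbage bindings in one environment; a single-binding environment, garbage or not, admits only one possible matching of its unique position. Hence a mild restatement of that lemma, "if $s\sim t$ and $s$ has no garbage in environments with at least two bindings, then $\alpha$-equivalence gives a unique position correspondence", holds by the same structural induction, and this is what powers the multi-binding letrec step in our induction.
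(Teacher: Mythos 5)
Your proposal is correct and takes essentially the same route as the paper, which proves this corollary merely by remarking that the proof of Proposition \ref{prop:no-gc-implies-fp-condition} already shows it: the non-letrec cases and the single-binding letrec case never use garbage-freeness, and the multi-binding case only needs it for the environment at hand. You are in fact slightly more explicit than the paper in observing that Lemma \ref{lemma:gcfree-unique} must be weakened accordingly, and your justification that the weakened lemma still holds (non-uniqueness of the position correspondence needs interchangeable garbage bindings within one environment, which single-binding environments cannot provide) is sound.
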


 In case that  the input does not represent garbage-parts, and the semantics is defined such that only ground garbage free expressions are permitted,
 the set of rules in the case without atom-variables can be optimized as follows: (ElimFP) can be omitted and instead of (FPS) there are two rules:

 \begin{flushleft}

  \mbox{(FPS2)}~$\gentzen{\Gamma\dotcup \{X \doteq \pi{\cdot}X\}, \nabla}
   {\Gamma,   \nabla \cup \{a{\#}X \mid a \in \dom(\pi)\}}$, \quad
   \\[2mm]
 \mbox{(ElimX)}~$\gentzen{\Gamma\dotcup \{X \doteq e\}, \theta}
   {\Gamma,  \theta \cup \{X \mapsto e\}}$, \quad
    \begin{minipage}{0.6 \textwidth}  if    
     $X \not\in \Var(\Gamma)$,
   and $e$ is not a suspension of $X$.
   \end{minipage}
   \\[2mm]

 \end{flushleft}

 \begin{example} It cannot be expected that the letrec-decomposition rule (7) can be turned into a deterministic rule, and
 to obtain a  unitary nominal unification, under the restriction that  input expressions are garbage-free, and also instantiations are garbage-free.
 Consider the equation:
 $$(\tletr ~a_1.e_1; a_2.e_2~\tin~ ((a_1,a_2),X)) \doteq (\tletr~b_1.f_1; b_2.f_2~ \tin~(X’,(b_1,b_2))).$$
 Then the in-expressions do not enforce a unique correspondence between the bindings of the left  and right-hand bindings.
 An example also follows from the proof of Theorem \ref{thm:matching-GI-hard}, which shows that  even nominal matching may have several incomparable
 solutions for garbage-free expressions.
 \end{example}

\section{Nominal unification with letrec and atom-variables}\label{section:LRA}

  In this section we extend the unification algorithm to the language $\LRLXA$, which is an extension of  $\LRLX$ with atom variables.
  Atom-variables increase the expressive power of a term language  with atoms alone. If in an application example it is known that in a pair
  $(x_1,x_2)$ the expressions $x_1,x_2$  can only be atoms,
  but $x_1 = x_2$ as well as $x_1 \not= x_2$ is possible, then two different unification problems have to be formulated. If atom variables are possible,
  then the notation
  $(A_1,A_2)$ covers both possibilities.

  It is known that the nominal unification problem with atom-variables but without letrec is NP-complete \cite{schmidt-schauss-sabel-kutz:19}.
  An algorithm and corresponding rules and discussions can be found  in \cite{schmidt-schauss-sabel-kutz:19}.
  An implication is NP-hardness of nominal unification with atom variables and letrec.

 \subsection{Extension with atom-variables}

As an extension of  $\LRLX$, we define the language$\LRLXA$ as follows:
Let $A$ denote atom variables, $V$ denote atom variables or atoms, $W$ denote suspensions of atoms or atom variables, $X$ denotes expression variables,
$\pi$ a permutation, and $e$ an expression.
The syntax of the language $\LRLXA$ is
\[\begin{array}{lcl}
V  & ::=&   a \mid A   \\
W  & ::=&   \pi \cdot V\\
\pi & ::=&  \emptyset \mid  (W~W)  \mid \pi{\circ}\pi\\
 e & ::=&  \pi{\cdot}X  \mid W  \mid  \lambda W.e \mid (f~e_1~\ldots e_{\ari(f)})~| ~(\tletr~W_1.e_1; \ldots;W_n.e_n~\tin~e)
    \end{array}
\]
Let $\Var(e)$ be the set of atom or expression variables occurring in $e$, and let $\AtVar(e)$ be the set of atom  variables occurring in $e$.
 Similarly for sequences of expressions or permutations.\\   
The expression $\pi{\cdot}e$ for a non-variable expression  $e$ means an operation, which is performed by shifting $\pi$ down in the expression,
  using the simplifications $\pi_1 {\cdot} (\pi_2 {\cdot} X)$ $\to $ $(\pi_1 \circ  \pi_2) {\cdot}X$,
where only  expressions $\pi{\cdot}X$ and $\pi{\cdot}V$ remain, where the latter are called  {\em suspensions} and where $\pi{\cdot}V$ is abbreviated as $W$.
\begin{remark} An {\bf alert} for the reader: In this section the use of atom-variables induces generalizations and changes in the
$\LRLXA$-formulation of problems: binders may now be
suspensions of atom-variables, and also
  ``nested" permutation representations are permitted, which is due to atom variables, since in general, this permutation representation
 cannot be simplified. \\
 Several simple facts and intuitions that are used for $\LRLX$ no longer hold.
\end{remark}

A {\em freshness constraint} in our unification algorithm is of the form $V{\freshdot}e$  where $e$ is an $\LRLXA$-expression.
The justification for the slightly more complex form as usual ($a \freshdot X$) is that atom variables prevent a simplification to this form.
The notation $\pi^{-1}$ is
defined as the reversed list of swappings of $\pi$, where the single (perhaps complex) swappings are not modified.  A semantical justification for this
inverse is by checking the ground instantiations. 

We also view $\pi\cdot V \freshdot e$ as identical to the constraint $V \freshdot \pi^{-1} \cdot e$.

Naively applying ground substitutions may lead to syntactically invalid ground expressions, since instantiation may make binding atoms in letrecs equal,
which is illegal. This will be prevented by freshness constraints.

\begin{example}
The equation
 $$(\mathit{app}~(\tletr~A.a, B.a  ~\tin~ B)~A) \doteq (\mathit{app}~(\tletr~A.a, B.a~\tin~ B)~B)$$
 enforces that $A,B$ are instantiated with the same atom, which contradicts the syntactic assumption on distinct atoms for the
  binding names in letrec-expressions. This will be dealt with by adding the freshness constraint $A\#B$.
  However,
 $$(\mathit{app}~(\tletr~A.a, C.a  ~\tin~ C)~A) \doteq (\mathit{app}~(\tletr~A.a, D.a~\tin~ D)~B)$$
is solvable. Note that the additional freshness constraints are $A\#C,A\#D$.
%
\end{example}

\begin{remark}\label{remark:prevent-illegal-letrecs}
We circumvent the problem of illegal ground instances by adding  for every letrec-expression in the input of the unification  algorithm
sufficiently many freshness constraint
that prevent these illegal expressions (see below). It is sufficient to prevent equal binding names in every single letrec-environment.
\end{remark}

\begin{definition}
An {\em $\LRLXA$-unification problem}  is a pair $(\Gamma, \nabla)$, where  $\Gamma$ is a set of equations $s \doteq t$,
 and $\nabla$ is a set of freshness constraints $V \# e$.
 In addition, for every letrec-subexpression $\tletr~W_1.e_1$, $\ldots,$ $W_m.e_m~\tin~e$, which occurs in $\Gamma$ or $\nabla$,
 the set $\nabla$ must also contain the freshness constraint  $W_i \freshdot W_j$ for all $i,j = 1,\ldots,m$ with $i \not= j$.

A {\em (ground) solution} of $(\Gamma, \nabla)$  is a substitution $\rho$ (mapping variables in $\Var(\Gamma, \nabla)$ to ground expressions),
 such that $s\rho \sim t\rho$ for all equations $s \doteq t$ in $\Gamma$, and
 for all $V{{\freshdot}}e \in \nabla$:  $V\rho \# (e\rho)$ holds.

The {\em decision problem}  is whether there is a solution for a given $(\Gamma, \nabla)$.
\end{definition}

\begin{proposition}\label{prop:LRLXA-in-NP}
The  $\LRLXA$-unification problem is in NP, and hence NP-complete.
\end{proposition}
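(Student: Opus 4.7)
The plan is to decompose the claim into hardness and membership. NP-hardness is immediate: Theorem \ref{thm:matching-NP-hard} shows that nominal letrec matching is already NP-hard in $\LRL$, and since $\LRL \subseteq \LRLX \subseteq \LRLXA$, this hardness is inherited. (Alternatively, hardness already follows from the cited NP-hardness of nominal unification with atom variables alone.) So the only substantive work is the upper bound.

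To show $\LRLXA$-unification is in NP, I would reduce it nondeterministically to $\LRLX$-unification and then invoke Theorem \ref{thm:letrec-unification-in-NP}. Let $(\Gamma,\nabla)$ be the input, $\{A_1,\ldots,A_k\} = \AtVar(\Gamma,\nabla)$ and $\{a_1,\ldots,a_m\} = \AT(\Gamma,\nabla) \cap \MBA$; both $k$ and $m$ are bounded by the input size $S$. The first step is to guess, in polynomial space, an equivalence relation ${\sim_A}$ on $\{A_1,\ldots,A_k\}$ together with an assignment $\phi$ which, for each ${\sim_A}$-class $C$, either picks an atom from $\{a_1,\ldots,a_m\}$ or introduces a fresh representative atom $a_C^{\ast}$; distinct classes assigned to fresh representatives receive distinct atoms. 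Since there are at most $k$ classes and at most $m+k$ candidate targets, the guess can be encoded in $O(S\log S)$ bits.

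The second step is to extend $\phi$ to a ground substitution $\sigma_A$ on atom variables, apply it to $(\Gamma,\nabla)$, and simplify: every suspension $\pi{\cdot}V$ whose $V$ is an atom variable becomes $\pi{\cdot}a$ for a concrete atom, every binder $W_i$ becomes a concrete atom, and every freshness constraint $V \freshdot e$ becomes either $a \freshdot e$ or, when $V$ is already concrete, can be checked. Crucially, the global freshness constraints $W_i \freshdot W_j$ that Remark \ref{remark:prevent-illegal-letrecs} enforces for binders within each letrec are now ground inequalities between atoms, and we verify all of them in polynomial time; if any fails, reject. If all hold, the resulting problem $(\Gamma',\nabla')$ is a well-formed $\LRLX$-unification problem of size polynomial in $S$. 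Finally, run the NP procedure of Theorem \ref{thm:letrec-unification-in-NP} on $(\Gamma',\nabla')$ and accept iff it accepts.

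Soundness is straightforward since $\sigma_A$ together with any solution of $(\Gamma',\nabla')$ yields a solution of $(\Gamma,\nabla)$. The step I expect to require the most care is completeness of the reduction: given a ground solution $\rho$ of $(\Gamma,\nabla)$, one must exhibit a guess $(\sim_A,\phi)$ that the algorithm can make so that $\rho$ factors as $\sigma_A$ followed by an $\LRLX$-solution. The equivalence $A_i \sim_A A_j$ is defined by $A_i\rho = A_j\rho$; classes whose common image under $\rho$ lies in $\{a_1,\ldots,a_m\}$ map to that atom, and the remaining classes map to pairwise distinct fresh atoms. The argument that this choice loses no generality is the standard equivariance/freshness observation: atoms outside $\AT(\Gamma,\nabla)$ act symmetrically, so any bijective renaming of them preserves the set of solutions, and hence replacing the $\rho$-images of the remaining classes by the chosen fresh representatives still yields a solution of $(\Gamma',\nabla')$. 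Combining the polynomial-size guess with the NP check of Theorem \ref{thm:letrec-unification-in-NP} places $\LRLXA$-unification in NP, and together with hardness gives NP-completeness.
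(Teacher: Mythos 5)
Your proposal is correct and follows essentially the same route as the paper: guess the instantiation of each atom variable as either an atom already occurring in the problem or one of at most $k$ fresh atoms, instantiate to remove all atom variables, and then invoke the NP procedure for the atom-only case (Theorem \ref{thm:letrec-unification-in-NP}), with hardness inherited from Theorem \ref{thm:matching-NP-hard} (or the cited atom-variable result). Your write-up merely spells out details the paper leaves implicit, such as the equivalence relation induced by the solution, the equivariance argument for fresh representatives, and the check of the binder-distinctness freshness constraints.
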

\begin{proof}
The argument is that every ground instantiation of an atom variable is an atom, which can be guessed and checked in polynomial time:
guess the images of atom variables under a ground solution $\rho$ in the set of atoms in the current state, or in an arbitrary
 set of fresh atom variables
of cardinality at most the number of different atom variables in the input.  Then  instantiate accordingly 
thereby removing all atom-variables.
The resulting problem can be decided (and solved) by an NP-algorithm as shown in this paper (Theorem \ref{thm:unification-terminates}).
\end{proof}

 \begin{remark}\label{remark-equalityAB} Note that the equation  $A = \pi{\cdot}B$ for atom variables $A,B$ can be encoded as the freshness constraint $A \# \lambda \pi{\cdot}B. A$.
 In the following we may use equations $V_1 =_{\#} \pi{\cdot}V_2$ as a more readable version of $V_1 \# \lambda \pi{\cdot}V_2. V_1$.
\end{remark}

 \begin{figure}[!b]\small
 \vspace{2mm}
\begin{flushleft}
 $(1)~\gentzen{\Gamma \dotcup\{e \doteq e\}}{\Gamma}$  \qquad
%
    (2)$~\gentzen{\Gamma \dotcup\{\pi_1{\cdot}V_1 \doteq \pi_2{\cdot}V_2\},\nabla, \theta}{\Gamma,  \nabla \cup \{V_1 =_{\#} \pi_1^{-1}\pi_2{\cdot}V_2\}, \theta}$ \\[2mm]
%
  $(3a)~\gentzen{\Gamma \dotcup\{\pi_1{\cdot}X \doteq \pi_2{\cdot}Y\},\nabla, \theta  \qquad X \not= Y}
     {\Gamma[\pi_1^{-1}\pi_2{\cdot}Y/X],  \nabla[\pi_1^{-1}\pi_2{\cdot}Y/X], \theta \cup \{X \mapsto \pi_1^{-1}\pi_2Y\}}$
   \quad \\[2mm]
     $(3b)~\gentzen{\Gamma \dotcup\{\pi_1{\cdot}X \doteq \pi_2{\cdot}V\},\nabla, \theta  }
     {\Gamma[\pi_1^{-1}\pi_2{\cdot}V/X],  \nabla[\pi_1^{-1}\pi_2{\cdot}V/X], \theta \cup \{X \mapsto \pi_1^{-1}\pi_2V\}}$
   \quad \\[2mm]

$(4)~ \gentzen{\Gamma\dotcup\{(f~(\pi_1{\cdot}X_1) \ldots (\pi_n{\cdot} X_n)) \doteq (f~(\pi'_1{\cdot} X'_1) \ldots (\pi'_n{\cdot} X'_n))\}}   
   {\Gamma \cup \{\pi_1 {\cdot} X_1 \doteq \pi'_1 {\cdot} X_1',\ldots,\pi_n{\cdot} X_n \doteq \pi'_n{\cdot} X'_n\}}$\\[2mm]        

$(5)~ \gentzen{\Gamma\dotcup\{(\lambda W.\pi_1 {\cdot} X_1 \doteq \lambda W.\pi_2 {\cdot} X_2\}}   
      {\Gamma \cup \{\pi_1 {\cdot} X_1\doteq \pi_2 {\cdot} X_2\}}$    
    \\[2mm]
     $(6)~ \gentzen{\Gamma\dotcup(\lambda W_1.\pi_1 {\cdot} X_1 \doteq \lambda W_2.\pi_2 {\cdot} X_2\},\nabla}
      {\Gamma \cup \{\pi_1 {\cdot} X_1\doteq (W_1~W_2){\cdot}\pi_2 {\cdot} X_2\},\nabla \cup  \{W_1 {\freshdot}(\lambda W_2.\pi_2 {\cdot} X_2)\}}$
     \\[3mm]

$(7)~ \gentzen{\Gamma \dotcup \left\{
	\begin{array}{l}
	\tletr ~W_1.\pi_1{\cdot}  X_1;\ldots;W_n.\pi_n {\cdot} X_n~\tin~\pi {\cdot} Y \doteq \\
	\tletr ~W'_1.\pi'_1 {\cdot} X'_1;\ldots;W'_n.\pi'_n {\cdot} X'_n~\tin~\pi' {\cdot} Y'
	\end{array}
	 \right\}, \nabla}
       {\begin{array}{l}
         {\begin{array}{ll}
         \hspace*{-1mm}
         {\begin{array}{c}
               \left|\begin{array}{c}
                ~\\~\\~\\
               \end{array}\right.
               \\
           {\{\rho\}}
           \end{array}}
         &
         \left( \begin{array}{l}
             \Gamma \,\mathop{\cup}\, \left\{
              \begin{array}{l}
                 ~ \mathrm{decompose}(n\!+\!1,  \lambda W_1 \ldots \lambda W_n.(\pi_1 {\cdot} X_1,\ldots,\pi_n {\cdot} X_n, \pi {\cdot} Y) \\
                    \hspace*{0.7cm}\doteq  \lambda W'_{\rho(1)}.\ldots \lambda W'_{\rho(n)}.(\pi'_{\rho(1)} {\cdot} X'_{\rho(1)},\ldots,\pi'_{\rho(n)} {\cdot} X'_{\rho(n)}, \pi' {\cdot} Y'))
              \end{array}\right\},   \\
              ~ \\[-5mm]
              \nabla \cup
              \left\{
              \begin{array}{l}
                 ~ \mathrm{decompfresh}(n\!+\!1,  \lambda W_1 \ldots \lambda W_n.(\pi_1 {\cdot} X_1,\ldots,\pi_n {\cdot} X_n, \pi {\cdot} Y) \\
                    \hspace*{0.7cm}\doteq  \lambda W'_{\rho(1)}.\ldots \lambda W'_{\rho(n)}.(\pi'_{\rho(1)} {\cdot} X'_{\rho(1)},\ldots,\pi'_{\rho(n)} {\cdot} X'_{\rho(n)}, \pi' {\cdot} Y'))
              \end{array}\right\}
              \end{array}     \right)
             \end{array}}
                  ~\\[2mm]
             \mbox{where $\rho$ is a  permutation on $\{1,\ldots,n\}$ and $\mathrm{decompose}(n,.)$ is the equation part of $n$-fold } ~\\
               \mbox{ application of rules (4), (5) or (6)  and $\mathrm{decomposefresh}(n,.)$ is the freshness constraint} ~\\
               \mbox{ part of the $n$-fold application of rules (4), (5) or (6); (in both cases after flattening).}  \\
         \end{array}}
      $
  \end{flushleft}\vspace*{-4mm}
 \caption{Standard  and decomposition rules with atom variables of \letrecunifyA.}\label{LRA-Fig1}\vspace{2mm}
 \end{figure}

 \begin{figure}[!th]\small
  \begin{flushleft}
 (MMS), (FPS), (ElimFP) and (Output) are almost the same as the ones in Fig \ref{lrunify-rules-2}. \\[2mm]
 \mbox{(MMS)} $\gentzen{\Gamma\dotcup \{\pi_1{\cdot}X \doteq e_1, \pi_2{\cdot}X \doteq e_2\},\nabla}
	{\Gamma \cup  \{\pi_1{\cdot}X \doteq e_1\} \cup \Gamma', \nabla \cup \nabla'}$,~~~
	\begin{minipage}{0.50 \textwidth} if  $e_1, e_2$ are not suspensions,
	  where $\Gamma'$ is the set of equations generated by decomposing $\pi_1^{-1}{\cdot}e_1 \doteq \pi_2^{-1}{\cdot}e_2$ using (1)--(7),
	  and where $\nabla'$ is the corresponding resulting set of freshness constraints. \\
	\end{minipage}
	\\[-1mm]
	\mbox{(FPS)}~$\gentzen{\Gamma\dotcup \{\pi_1{\cdot}X \doteq \pi'_1{\cdot}X, \ldots,\pi_n{\cdot}X \doteq \pi'_n{\cdot}X, \pi{\cdot}X \doteq e\}, \theta}
	{\Gamma \cup  \{\pi_1\pi^{-1}{\cdot}e \doteq \pi'_1\pi^{-1}{\cdot}e, \ldots,\pi_n\pi^{-1}{\cdot}e \doteq \pi'_n\pi^{-1}{\cdot}e\},  \theta \cup \{X \mapsto \pi^{-1}{\cdot}e\}}$, \\
	\hspace*{1.1cm} \begin{minipage}{0.9 \textwidth}  
		If $X \not\in \Var(\Gamma,e)$,
		and $e$ is not a suspension, and (Cycle) (see Fig.\ref{def:failure-rules}) is not applicable.
	     \end{minipage}
	\\[2mm]
		\mbox{(ElimFP)}~~$\gentzen{\Gamma\dotcup \{\pi_1{\cdot}X \doteq \pi'_1{\cdot}X, \ldots,\pi_n{\cdot}X \doteq \pi'_n{\cdot}X, \pi{\cdot}X \doteq \pi'{\cdot}X\}, \theta}
	{\Gamma\cup \{\pi_1{\cdot}X \doteq \pi'_1{\cdot}X, \ldots,\pi_n{\cdot}X \doteq \pi'_n{\cdot}X\}, \theta}$,   \\
	\hspace*{1.5cm} \text{If } $\pi^{-1}\pi' \in \gen{\pi_1^{-1}\pi_1,\ldots,\pi_n^{-1}\pi_n},$ \\
	\hspace*{1.5cm} \text{and } $\pi_i,\pi_i', \pi,\pi'$ are ground, i.e. do not contain atom variables.
	\\[2mm]
	
	\mbox{(Output)}~~$\gentzen{\Gamma, \nabla, \theta}
	{(\theta, \nabla, \Gamma)}$     
	\begin{minipage}{0.45\textwidth} if $\Gamma$ only consists of fixpoint-equations.
	\end{minipage}

      \mbox{(ElimA)}~~$\gentzen{\Gamma,\nabla, \theta}
       {\begin{array}{ll}
         {\begin{array}{c}    
              {\left|\begin{array}{c}
                     \end{array}\right.}
               \\
           {\left\{ \begin{array}{ll}   \text{atoms in } \Gamma,\nabla,\theta \\ \text{and a fresh atom } a   \end{array}\right\}}
           \end{array}}
         &
            \Gamma[a/A], \nabla[a/A], \theta \cup \{A \mapsto a\}
       \end{array}
      }$
 \end{flushleft}\vspace*{-3mm}
 \caption{Main rules   of \letrecunifyA}\label{LRA-Fig2}
 \end{figure}

 \begin{figure}[!h]\small
 \vspace*{2mm}
	\begin{flushleft}
	 \mbox{(Clash)}~~ $\gentzen{\Gamma\dotcup \{s \doteq t\},\nabla,\theta~~ ~tops(s) \not= \tops(t) \mbox{ and $s$ and $t$ are not suspensions}}{\bot}$ \\[1mm]
	  \mbox{(ClashA)}~~ $\gentzen{\{s \doteq t\} \mbox{ is in } \Gamma,\mbox{ and }  ~~
	    \begin{array}{l}  s \text{ is a suspension of an atom or atom variable}\\
	     \mbox{ and  $\tops(t)$ is a function symbol, $\lambda$ or letrec }
	     \end{array}}
	     {\bot}$
	     \\[1mm]
	      \mbox{(Clashab)}~~ $\gentzen{\Gamma\dotcup \{a \doteq b\},\nabla,\theta~\hspace*{5mm} a \not= b}{\bot}$ \\[1mm]
	   \mbox{(Cycle)} ~~ $\gentzen{     
		\begin{array}{l}\text{If } \pi_1{\cdot}X_1 \doteq s_1, \ldots, \pi_n{\cdot}X_n \doteq s_n  \text{ in }  \Gamma
		\text{ where }   s_i \text{ are not suspensions} \\
		\text{ and }  X_{i+1}  \text{ occurs in }  s_i  \text{ for } i = 1,\ldots,n-1 \text{ and }  X_1  \text{ occurs in } s_n.
		 \end{array}}{ \bot }$ \\[3mm]
	  \mbox{(FailF)} ~~ $\gentzen{ a{\freshdot} a \in \nabla}{\bot}$   ~~~~~  
 	  \mbox{(FailFS)}     
 	     ~~  $\gentzen{ a{\freshdot} X \in \nabla ~~~ \text{ and }   a  \text{ occurs free in }  (X\theta)}{\bot}$ 
   \end{flushleft}\vspace*{-6mm}
\caption{Failure Rules of \letrecunifyA}\label{lrunify-rules-A}\label{def:failure-rules-with-A}
\end{figure}

\subsection{Rules of the algorithm \letrecunifyA}

Now we describe the nominal unification algorithm {\letrecunifyA} for $\LRLXA$.
It will extend the algorithm {\letrecunify} by a treatment of atom variables that extend the expressibility.
It has flexible rules, such that a strategy can be added
to control the nondeterminism and such that it is an improvement over a brute-force guessing-algorithm that first guesses all atom instances of atom-variables
 and then uses  Algorithm \letrecunify\ (see  Algorithm \ref{alg:lrunifAB}
for such an improvement). 
The simple idea is to only make these guesses if a certain space-bound of the whole state is exceeded and then use the guesses and further rules to shrink
the size of the problem representation.  
%
\noindent Note that permutations with atom variables may lead to an exponential blow-up of their size due to iterated application of rules, which is defeated by a compression mechanism.
 Note also that
equations of the form $A \doteq e$, in particular $A \doteq \pi{\cdot}A'$, cannot be solved by substitutions ($A \mapsto \pi{\cdot}A'$) for two reasons:
(i) the atom variable $A$ may occur in the right hand side, and (ii) due to our compression mechanism (see below),
the substitution may introduce cycles into the compression, which is forbidden.

Atoms in the input are permitted. In the rules an extra mention of atoms is only in (2), (3), (ElimFP), (ElimA), (Clashab), (FailF), (FailFS)   and  in (ElimFP).

\begin{definition} The algorithm
{\letrecunifyA}  operates on a tuple $(\Gamma, \nabla, \theta)$, where the rules are defined in Figs. \ref{LRA-Fig1}  and  \ref{LRA-Fig2},
and failure rules are in Fig. \ref{def:failure-rules-with-A}.

\medskip
\noindent The rules (7)  and (ElimA) are don't know non-deterministic, whereas the other ones are don't care non-deterministic.
The following explanations are in order:
\begin{enumerate}
\itemsep=0.9pt
\item $\Gamma$ is assumed to be a set of flattened equations $e_1 \doteq e_2$ (see the remarks after Definition \ref{def:LRLX-unifier}).
\item We assume that  $\doteq$ is symmetric,
\item $\nabla$ contains freshness constraints, like $a\freshdot e$, $A\freshdot e$,  which in certain cases may be written as equations of the form
$A =_{\#} \pi{\cdot}A'$ (see Remark \ref{remark-equalityAB}, for better readability and simplicity).    
\item $\theta$ represents the already computed substitution as a list of replacements of the form $X \mapsto e$.
We assume that the substitution is the iterated replacement.
Initially $\theta$  is empty.
\end{enumerate}
The final state will be reached, i.e. the output,  when $\Gamma$  only contains fixpoint equations of the form $\pi_1{\cdot}X \doteq \pi_2{\cdot}X$, and the rule (Output) fires.

In the notation of the rules, we will use $[e/X]$  as substitution that replaces $X$ by $e$. We  may omit $\nabla$ or $\theta$ in the notation of a rule, if they are not changed.
We will also use a  notation ``$|$'' in the consequence part of rule (6), where all possibilities for $\rho$ have to be considered (denoted as the set $\{\rho\}$), to denote
 disjunctive  (i.e. don't know) nondeterminism.  There are two nondeterministic rules with disjunctive nondeterminism: the letrec-decomposition rule (7)  exploring all alternatives of the correspondence between bindings;
 the other one is (ElimA) that guesses the instantiation of an atom-variable. In case it is guessed to be different from all currently used atoms, we
 remember  this fact (for simplicity) by selecting a  fresh atom for instantiation.
 The other rules can be applied in any order, where it is not necessary to explore alternatives.
\end{definition}

 We assume that permutations in the algorithm {\letrecunifyA} are compressed using a grammar-mechanism,  as a variation of
     grammar-compression in \cite{lohrey-maneth-schmidt-schauss:12,gascon-godoy-schmidt-schauss:12}.
     However, we do not mention it in the rules of the algorithm, but we will use it in the complexity arguments (see below).

  The use of the iterated decomposition in rule (7) appears clumsy at a first look, however, it is an easy algorithmic representation of the method to define the
  permutations (with atom variables) in a recursive fashion, where the introduction of permutation variables is avoided.

 \begin{definition} 
  The components of a {\em permutation grammar} $G$, used for compression, are:
 \begin{itemize}
 \itemsep=0.85pt
   \item Nonterminals $P_i$.
   \item For every nonterminal $P_i$ there is an associated inverse $P_j$, which can also be written as $\overline{P}_i$.
   \item Rules of the form $P_i \to w_1 \ldots w_n$, $n \geq 1$ where $w_i$ is either a nonterminal or a terminal. At all times
              $\overline{P}_i \to \overline{w}_n \ldots \overline{w}_1$ holds, i.e., if a nonterminal is added its inverse is added accordingly.
              Usually, $n \le 2$, but also another fixed bound for $n$ is possible.
   \item Terminal elements are $\emptyset$,  $(V_1~V_2)$.
 \end{itemize}
 The grammar is deterministic: every nonterminal is on the left-hand side of exactly one rule. It is also non-recursive: the terminal index is such that $P_i$ can only be in right-hand sides of the nonterminal $P_j$ with $j<i$.
 The function  $inv$, mapping $P_i \to \overline{P}_i$ and $T \to T$ for terminals $T$ computes the inverse in constant time.
 This is true by construction, because if $P \to w_1 \ldots w_n$ then $inv(P) \to \inv(w_n) \ldots \inv(w_1)$ and $\inv(T) = T$ for terminals.
 Every nonterminal $P$ represents a permutation $\val(P)$, which is  computed from the grammar as follows:
 \begin{enumerate}
 \itemsep=0.85pt
 \item $\val(P) = \val(w_1)~\ldots~\val(w_n)$ (as a composition of permutations),  if $P \to w_1 \ldots w_n$.
   \item $\val(\emptyset) = \mathit{Id}$.
   \item $\val((P_1{\cdot}V_1~P_2{\cdot}V_2)) = (\val(P_1){\cdot}V_1~~\val(P_2){\cdot}V_2)$.
 \end{enumerate}
 \end{definition}

 \begin{lemma} For nonterminals $P$ of a permutation grammar $G$, the  permutation $\val(\inv(P))$ is the inverse of $\val(P)$.
  \end{lemma}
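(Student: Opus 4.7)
The plan is to prove the statement by well-founded induction on the level of the nonterminal $P$, exploiting that the grammar is non-recursive, i.e., if $P_i \to w_1 \ldots w_n$ then every nonterminal appearing among $w_1,\ldots,w_n$ has strictly smaller index than $i$. Thus ``level'' is a well-founded order on nonterminals, and it suffices to prove $\val(\inv(P))=\val(P)^{-1}$ assuming the identity for all nonterminals occurring strictly below $P$.

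First I would dispose of the terminal cases. For $T=\emptyset$ we have $\val(\emptyset)=\mathit{Id}$, which is self-inverse, so $\val(\inv(T)) = \val(T) = \mathit{Id} = \val(T)^{-1}$. For $T=(W_1~W_2)$ (a single swapping of two suspensions), the semantic value is a transposition, and every transposition satisfies $\tau^{-1}=\tau$; hence setting $\inv(T)=T$ is consistent with $\val(\inv(T)) = \val(T)^{-1}$, under any ground instantiation of the atom variables inside $W_1,W_2$. This justifies the clause $\inv(T)=T$ of the construction.

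For the inductive step, let $P\to w_1 \ldots w_n$, so by construction $\inv(P)\to \inv(w_n)\ldots \inv(w_1)$. Unfolding the definition of $\val$,
\[
\val(\inv(P))=\val(\inv(w_n))\circ \val(\inv(w_{n-1}))\circ \cdots \circ \val(\inv(w_1)).
\]
By the induction hypothesis applied to each $w_k$ (which is either a terminal, handled above, or a nonterminal of strictly smaller level), we have $\val(\inv(w_k))=\val(w_k)^{-1}$ for every $k$. Using the standard group-theoretic identity $(\sigma_1\circ\cdots\circ\sigma_n)^{-1}=\sigma_n^{-1}\circ\cdots\circ\sigma_1^{-1}$ for permutations, we conclude
\[
\val(\inv(P)) = \val(w_n)^{-1}\circ \cdots \circ \val(w_1)^{-1} = \bigl(\val(w_1)\circ\cdots\circ\val(w_n)\bigr)^{-1} = \val(P)^{-1},
\]
as required.

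The main (mild) obstacle is not the induction itself but checking that the two ``views'' of permutations agree: the grammar manipulates symbolic objects that may still contain atom variables, whereas $\val$ is naturally a semantic (ground) map. One must observe that the identities $\val(T)^{-1}=\val(T)$ for transpositions and $(\sigma\tau)^{-1}=\tau^{-1}\sigma^{-1}$ hold uniformly under any ground instantiation of the atom variables occurring in the suspensions $W_i$; since these identities are valid on the level of the symmetric group for every concrete instance, the lemma holds as an identity of the symbolic representation as well. No further machinery is needed.
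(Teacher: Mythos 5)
Your proof is correct and follows essentially the same route the paper takes: the paper justifies the lemma ``by construction'' via the reversal rule $\inv(P) \to \inv(w_n)\ldots\inv(w_1)$ together with $\inv(T)=T$ for the self-inverse terminals, and your well-founded induction on the non-recursive grammar is just the explicit formalization of that remark. Nothing further is needed.
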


 \subsection{Arguments for correctness and completeness}

 Let $S$ denote in the following the size of the initial unification problem.
 \begin{proposition}\label{prop:grammar-simplify-poly} Let $G$ be a permutation grammar, and let $P$ be a nonterminal, such that $\val(P)$  contains $n$ atoms,
 and does not contain any atom variables.
     Then $\val(P)$ can be transformed into a permutation of length at most $n$ in polynomial time.
 \end{proposition}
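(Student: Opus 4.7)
The plan is to exploit the fact that, as a set-theoretic bijection, $\val(P)$ has domain of size at most $n$, so by the general fact mentioned earlier (every finite permutation is a product of at most $|\dom(\pi)|-1$ swappings) the desired short representation exists in principle. The task is to \emph{compute} it from $G$ in polynomial time without fully unfolding $\val(P)$, which could blow up exponentially. The main tool will be a bottom-up dynamic program over the non-recursive grammar $G$.

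First I would fix the finite set $\mathcal{A}$ of atoms that syntactically occur in terminals of $G$ reachable from $P$; its size is bounded by the grammar size, hence polynomial, and by hypothesis every atom moved by $\val(P)$ lies in $\mathcal{A}$. Then, processing the nonterminals of $G$ in topological (increasing-index) order, I would tabulate, for every nonterminal $Q$ and every $a\in\mathcal{A}$, the value $\val(Q)(a)$. For a production $Q\to w_1\cdots w_k$ this is obtained by composing the already-tabulated actions of $w_1,\ldots,w_k$ in the correct order using the definitional equation $\val(Q)=\val(w_1)\cdots\val(w_k)$. For a terminal swapping $(P_1\cdot V_1~P_2\cdot V_2)$ the action on $a$ is computed by first reading off the two concrete atoms $b_j := \val(P_j)\cdot V_j$ from the table (the hypothesis that no atom variables occur in $\val(P)$ guarantees that $b_1,b_2$ are atoms, not atom variables) and then applying the elementary swap $(b_1~b_2)$ to $a$. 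Each table entry is filled in constant time given its predecessors, so the entire tabulation is polynomial in $|G|$ and $|\mathcal{A}|$.

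Having the table for $P$, I would read off the bijection $\val(P)$ restricted to $\mathcal{A}$, discard the fixed points to isolate the non-fixed atoms (at most $n$ of them by hypothesis), compute the disjoint-cycle decomposition in linear time, and rewrite each cycle $(a_1~a_2~\ldots~a_k)$ as the composition $(a_1~a_2)(a_2~a_3)\cdots(a_{k-1}~a_k)$ of $k-1$ transpositions. Since the disjoint cycles partition the at most $n$ moved atoms, the total number of swappings is at most $n-c\le n$, where $c$ is the number of non-trivial cycles, which gives a representation of length at most $n$ as required.

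The main obstacle is the nested structure whereby terminal swappings themselves contain suspensions $P_j\cdot V_j$ over smaller nonterminals; a careless evaluation could loop or blow up. This is neutralized by the non-recursiveness of the grammar (which yields a well-founded evaluation order) together with the atom-variable-freeness hypothesis (which ensures that every lookup $\val(P_j)\cdot V_j$ terminates at a concrete atom of $\mathcal{A}$). A minor additional care point is the orientation of composition and the correct use of the constant-time $\inv$ function when inverses appear inside the grammar, but both are forced by the definitions already stated and introduce only constant-factor overhead.
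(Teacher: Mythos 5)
Your proposal is correct and takes essentially the same route as the paper's proof: both compute the atom-to-atom map induced by $\val(P)$ on the polynomially many atoms occurring in the grammar (you via a bottom-up tabulation over the non-recursive grammar, the paper by directly evaluating $P\cdot a$ for each atom $a$ in $O(S^2)$) and then read off a swapping list of length at most $n$. Your explicit cycle-decomposition step merely spells out what the paper leaves implicit in ``the construction of the permutation list can be done in linear time.''
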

 \begin{proof}
 For every $P$ the size of the set $At(P)$ has an upper bound $S$ and can be computed in time $O(S \cdot \log(S))$
 For every such atom $a \in At(P)$ we compute its image $P\cdot a$ and save the result in a mapping from atoms to atoms. The computation of $P \cdot a$ can be done in
  $O(S^2)$, yielding a total of $O(S^3)$ for the construction of this map, which has size $O(S)$. At last, the construction of the permutation list can be done
  in linear time, i.e. $O(S)$.
 \end{proof}

 Now we consider the operations to extend the grammar during the unification algorithm. By extension we mean to add non-terminals and rules to the grammar,
 where the grammar is used as a compression device.   

\begin{proposition}\label{prop:grammar-poly}
Extending  $n$ times the grammar $G$  can be performed in polynomial time in $n$, and the size of the initial grammar $G$.
\end{proposition}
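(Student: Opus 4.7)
The plan is to analyze what a single grammar extension costs, and then sum up over $n$ extensions. First I would pin down the primitive extension operations needed by the unification algorithm: composition of two previously defined permutations (producing a rule $P \to P_1 P_2$), adjoining a new swapping as a terminal (producing $P \to (W_1~W_2)$), and combining these to express inverses, powers, or conjugations that arise in rules such as (FPS), (MMS) and (7). Each such primitive extension introduces a bounded number of new nonterminals together with their associated inverses, because the grammar rules allow right-hand sides of length at most the fixed bound (typically $2$), so the total syntactic addition per extension is $O(1)$.

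Next I would verify that the two grammar invariants---determinism and non-recursivity---are trivially preserved at each step: determinism holds because each new nonterminal is created fresh and given exactly one production; non-recursivity holds because every new $P$ is assigned an index strictly greater than the indices of the nonterminals appearing on its right-hand side, using the fact that those referenced nonterminals already exist in the current grammar. For the inverse bookkeeping, I would use that whenever we add $P \to w_1 \ldots w_k$ we simultaneously install $\overline{P} \to \inv(w_k) \ldots \inv(w_1)$, where $\inv$ is computed in constant time by the mapping already guaranteed to exist by induction on the number of prior extensions. Thus each extension costs $O(1)$ time and adds $O(1)$ symbols to $G$.

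Summing over $n$ extensions gives a final grammar of size $|G| + O(n)$ constructed in total time $O(|G| + n)$, which is certainly polynomial in $|G|$ and $n$. If one wants to also bound auxiliary operations that the algorithm may wish to perform along the way (for example checking whether a newly introduced nonterminal happens to duplicate an existing one, or verifying that the indexing remains well-founded), these are polynomial in the current grammar size and therefore contribute an additional overhead of $O((|G|+n)^c)$ for some small constant $c$, still polynomial in the desired parameters.

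The only subtle point, and the step I expect to require the most care, is the interaction between extension and the inverse invariant: one must make sure that whenever a right-hand side references a nonterminal $P_j$, its inverse $\overline{P}_j$ is already present, so that $\inv(\cdot)$ remains the constant-time lookup assumed by the preceding lemma. This is handled by a simple induction on the extension step: the invariant holds initially for $G$ by hypothesis, and every extension rule is of a shape (composition, swapping, inverse) whose inverse can be written down immediately using only symbols already known to have inverses, so the invariant is restored without any additional recursive work. Once that is in place, the polynomial bound on total time follows directly by linearity in $n$.
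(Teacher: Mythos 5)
Your proposal is correct and follows essentially the same route as the paper's own proof: you check that each extension operation (adding a nonterminal, a composition, and its inverse) costs constant time because the inverses of the referenced sub-permutations are already present, and then sum over the $n$ extensions. Your additional remarks on preserving determinism, non-recursivity, and the inverse invariant by induction are a more explicit spelling-out of what the paper leaves implicit, not a different argument.
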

 \begin{proof}
 We check the extension operations: \\
  Adding a nonterminal can be done in constant time.
    Adding an inverse of $P$ is in constant time, since the inverses of the sub-permutations are already available.
   Adding a composition $P = P_1 {\cdot} P_2$ and at the same time the inverse, can be done in constant time.
\end{proof}

This polynomial upper bound will be  used in the Proof of Theorem \ref{thm:LRLXA-strategies-complete}.

 \noindent As a summary we obtain:
Generating the permutation grammar  on the fly during the execution of the unification rules can be done in polynomial time, since (as we will show below)
the number of rule executions is polynomial in the size of the initial input.
Also the  operation
of applying a compressed ground permutation to an atom is polynomial.

      Note that (MMS) and (FPS), without further precaution, may cause an exponential blow-up in the number
     of fixpoint equations (see Example \ref{example:exponential-FPS}).    
  The rule (ElimFP) will limit the number of fix\-point equations for atom-only permutations by exploiting knowledge on operations on permutation groups.
  The rule (ElimA) can be used according to a dynamic strategy (see below): if the space re\-quire\-ment for the state is too high, then it can be applied until
  simplification rules make $(\Gamma,\nabla)$  smaller.

  The rule (Output) terminates an execution on $\Gamma_0$ by outputting a unifier $(\theta,\nabla',{\cal X})$,
  where the solvability of $\nabla'$ needs to be checked using methods as in the algorithm proposed in \cite{schmidt-schauss-sabel-kutz:19}.
  The method is to nondeterministically instantiate atom-variables by atoms, and then checking the freshness constraints, which is  in NP
  (see also Theorem \ref{thm:unification-terminates}).

\medskip
We will show that the algorithm runs in polynomial time  by applying (ElimA) following a strategy defined below.
  There are two rules, which can lead to a size increase of the unification problem if we ignore the size
  of the permutations:   (MMS) and (FPS):     
    \begin{itemize}
    \itemsep=0.95pt
    \item{(MMS)} Given the equations $X \doteq e_1, X \doteq e_2$, the increase of the size of $\Gamma$ after the application of the rule has an upper bound  $O(S)$.

    \item{(FPS)} Given $X\doteq \pi_1 {\cdot} X,  \ldots, X\doteq \pi_k {\cdot} X, X\doteq e$, the size increase has an upper bound $O(S)$. Disregarding the
     permutations of only atoms,
     it is not known whether there exists a polynomial upper bound of the number of independent permutations with atom variables - but it seems very unlikely.
   \end{itemize}

\begin{definition}
Let $p(x)$ be some easily computable function $\bbbr^+ \to \bbbr^+$.   
The rule $\ELIMAB(p)$ is defined as follows:
 \begin{quote}
  $\ELIMAB(p)$: ~~~ If there are $k>p(S)$ fixpoint equations $X \doteq \pi_1  {\cdot}X, \ldots,  X \doteq \pi_k  {\cdot}X$ in $\Gamma$ for some variable $X$,
      then apply (ElimA) for all $A \in \AtVar(\pi_1,\dots,\pi_k)$. Then immediately apply (ElimFP) exhaustively.     
 \end{quote}
\end{definition}

\begin{definition}\label{alg:lrunifAB}
The guided version  \letrecunifyAB$(p)$ of  {\letrecunifyA} is obtained by replacing  (ElimA) with $\ELIMAB(p)$ where $p(x)$ is some (easily computable) function $\bbbr^+ \to \bbbr^+$,
such that  $\forall x\in \bbbr^+: q(x) \ge p(x) \ge x*\log(x)$ holds for some polynomial $q$.
In addition the priority of the rules is as follows, where highest priority comes first:   (1), \ldots, (6), \mbox{(ElimFP)}, \mbox{(MMS)}, \mbox{(Output)}.
Then   $\ELIMAB(p)$, \mbox{(FPS)}, and the nondeterministic rule (7) with lowest priority.
\end{definition}

\begin{lemma} \label{lem:fpslimit}  Let $\Gamma, \nabla$ be a solvable input. For every function $p(x)$ with  $\forall x \in \bbbr^{+}: p(x)\geq x\log(x)$,
  the algorithm \letrecunifyAB$(p)$ does not get stuck,
 and for every intermediate state of the algorithm \letrecunifyAB$(p)$ it holds that
  the number of fixpoint equations per expression variable is
  bounded above  by $p(S)$. 
\end{lemma}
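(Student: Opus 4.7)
The plan is to prove the bound by an invariant argument: at every point in the run, for each expression variable $X$ the multiset of fixpoint equations $X \doteq \pi{\cdot}X$ is kept within $p(S)$ by the combined action of $\ELIMAB(p)$ and \mbox{(ElimFP)}. First I would identify the only rules that can push the per-variable count above $p(S)$: these are (MMS) and (FPS), since the other rules either preserve or reduce the number of fixpoint equations attached to a given $X$ (rules (1)--(6) operate on non-suspension headed equations, (2) merges suspension equations into $\nabla$, (3a/3b) eliminate $X$ entirely, and \mbox{(ElimFP)} only removes). So it suffices to show that whenever an application of (MMS) or (FPS) would leave more than $p(S)$ fixpoint equations for some $X$, the strategy forces $\ELIMAB(p)$ to intervene before any further (FPS) or (MMS) can be applied.

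Here the priority list in Definition \ref{alg:lrunifAB} is crucial. The rule $\ELIMAB(p)$ has higher priority than (FPS) and than the nondeterministic (7), so as soon as the count for some $X$ exceeds $p(S)$, $\ELIMAB(p)$ fires. Its action is to apply (ElimA) to every atom variable $A$ occurring in the permutations $\pi_1,\ldots,\pi_k$ of the offending fixpoint equations, and then apply \mbox{(ElimFP)} exhaustively. After the (ElimA) phase, all of these permutations are ground, i.e.\ members of the symmetric group on the (at most $S$) atoms currently present in $\Gamma$.

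At this point I would invoke the permutation-group facts from Subsection~\ref{subsec:permutation-groups}: every subgroup of $\Sigma(\{o_1,\ldots,o_n\})$ admits a generating set of cardinality at most $\log(n!) \le n\log n$, and membership in $\gen{H}$ is decidable in polynomial time. The exhaustive application of \mbox{(ElimFP)} eliminates any equation $\pi{\cdot}X \doteq \pi'{\cdot}X$ with $\pi^{-1}\pi' \in \gen{\pi_1^{-1}\pi_1',\ldots}$, so what remains is an independent generating set of the subgroup $\gen{\pi_1^{-1}\pi_1',\ldots,\pi_k^{-1}\pi_k'}$ of $\Sigma(\MBA_\Gamma)$ where $\MBA_\Gamma$ is the (ground) atom set of $\Gamma$. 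Hence the number of surviving fixpoint equations for $X$ is at most $|\MBA_\Gamma|\cdot\log|\MBA_\Gamma| \le S\log S \le p(S)$, which restores the invariant. Since $\ELIMAB(p)$ is only triggered when the threshold is crossed, the invariant $\#\mathrm{Fix}(X) \le p(S)$ is preserved throughout. The main obstacle in making this tight is verifying that the (ElimA) substitutions do not silently reintroduce new violations at another variable $Y$ before \mbox{(ElimFP)} runs; this is handled by noting that (ElimA) can only turn an existing suspension into a ground one and cannot create new fixpoint equations, so \mbox{(ElimFP)} applied afterwards to every affected variable (not only $X$) restores the bound everywhere.

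For the second assertion (not getting stuck), I would fix a ground solution $\rho$ of the solvable input and argue that the don't-know rules \mbox{(7)} and \mbox{(ElimA)} admit, at every state, a choice compatible with (an instance of) $\rho$: for (7) choose the $\rho$ witnessing $\alpha$-correspondence of environments, and for the $\ELIMAB(p)$ invocations of (ElimA) choose the atom $A\rho$ (or mark it fresh when appropriate). All other rules are don't-care and are unconditionally applicable whenever their left-hand-side pattern is present, and the failure rules do not trigger because soundness guarantees that a state reached along a $\rho$-consistent trace is still solvable. Hence along this branch no rule is blocked, every configuration either admits a rule or matches the premise of (Output), and the algorithm does not get stuck.
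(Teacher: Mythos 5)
Your proposal follows essentially the same route as the paper's proof: the bound comes from the fact that $\ELIMAB(p)$ grounds the atom variables in the offending permutations and then \mbox{(ElimFP)} prunes the fixpoint equations of $X$ down to a generating set of a subgroup of the symmetric group on the at most $S$ atoms present, hence to at most $S\log(S)\leq p(S)$ equations; and non-stuckness comes from following a ground solution, so that no occurs-check situation and no failure rule can arise, which is exactly the paper's argument. Two of the extra details you add are, however, not accurate as stated. First, $\ELIMAB(p)$ does \emph{not} fire ``as soon as'' the threshold is crossed: in Definition \ref{alg:lrunifAB} the rules (1)--(6), \mbox{(ElimFP)}, \mbox{(MMS)} and \mbox{(Output)} all have \emph{higher} priority than $\ELIMAB(p)$; what the strategy guarantees (and what the paper uses) is only that $\ELIMAB(p)$ intervenes before \mbox{(FPS)} or rule (7) can act on an oversized family of fixpoint equations. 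Second, your rule inventory is incomplete: rule (3a) does not simply ``eliminate $X$'' --- the substitution $[\pi_1^{-1}\pi_2{\cdot}Y/X]$ turns $X$'s fixpoint equations into fixpoint equations for $Y$, and the decomposition steps inside \mbox{(MMS)} and (7) can create new fixpoint equations (e.g.\ decomposing $f$-equations whose corresponding arguments carry the same variable). So the clean per-rule invariant you set up does not hold literally; the paper avoids this by arguing only that whenever $\ELIMAB(p)$ is applied it restores the bound $S\log(S)\leq p(S)$, which is the level of precision the lemma is used at. Finally, for non-stuckness, note that the delicate point is specifically the side condition $X\not\in\Var(\Gamma,e)$ of \mbox{(FPS)}, which is neither a ``left-hand-side pattern'' nor a failure rule; your remark that no occurs-check failure occurs on a solution-guided branch is the fact needed to conclude (as the paper does) that a maximal variable exists for which \mbox{(FPS)} is applicable, and it would strengthen your write-up to say this explicitly rather than claiming the remaining rules are unconditionally applicable.
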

\begin{proof}
The upper bound of the number of fixpoint equations is proved as follows:
Let $m$ be the number of atoms in the original unification problem. The rule (ElimA) (called by  (ElimAB)) introduces at most $S-m$ new atoms, which implies
at most $S$ atoms at any time.
If \letrecunifyAB$(p)$ exceeds its upper space bound and applies $\ELIMAB(p)$
on the fixpoint equations $X \doteq \pi_1 {\cdot} X, \ldots, X \doteq \pi_k {\cdot} X$, the number of fixpoint equations of $X$ can be reduced to at most $S\log(S) \leq p(S)$
(see the proof of Theorem \ref{thm:unification-terminates}).

\medskip
Since the input is solvable, the choices can be made accordingly, guided by the solution, and then it is not possible that
there is an occurs-check-failure for the variables. Hence if the upper line of the  preconditions of (FPS) is a part of $\Gamma$, there will also be a maximal variable $X$,
such that the condition $X \not\in \Var(\Gamma,e)$ can be satisfied. 
\end{proof}

The following theorem shows that the (non-deterministic) algorithm for nominal unification with letrec and atom-variables can be guided by a strategy that instantiates atom-variables
only if the number of fixpoint equations grows too large. The problem is that with atom-variables we could not exhibit a redundancy eliminating rule for fixpoint-constraints as for the case with atoms.
The algorithm \letrecunifyAB$(p)$ provides this compromise. It guesses the instantiation of certain atom-variables if the number of fixpoint equations is greater than a bound.
This strategy prevents for example an exponential growth of the number of fixpoint-equations. There is flexibility through the choice of a threshold-function. Thus Theorem
\ref{thm:LRLXA-strategies-complete} shows that
with a threshold function satisfying only weak conditions, the algorithm can be controlled  and that there is a chance to find a good practical compromise between too much non-determinism
and space-explosion.

The algorithm is sound and also complete, however, we do not provide  explicit arguments here.

\sloppy
\begin{theorem}\label{thm:LRLXA-strategies-complete} Let $\Gamma, \nabla$ be a solvable input.  For every function $p(x)$ such that there is a polynomial $q(x)$ with
 $\forall x: q(x) \geq p(x) \geq x \log(x)$, \letrecunifyAB$(p)$   
does not get stuck and runs in polynomial space and time.  
\end{theorem}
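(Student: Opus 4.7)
The plan is to adapt the termination and complexity analysis of Theorem \ref{thm:unification-terminates} to the atom-variable setting, leveraging Lemma \ref{lem:fpslimit} to keep the fixpoint count per expression variable polynomial, and using the permutation grammar to keep permutations compressed. First I would observe that on a solvable input the algorithm never gets stuck: by the completeness argument sketched around Theorem \ref{thm:unification-sound-and-complete}, the nondeterministic choices in rule (7) and in $\ELIMAB(p)$ can be made compatibly with a given ground solution, so no failure rule fires, and (FPS) can always be applied with an admissible maximal variable, as spelled out in the proof of Lemma \ref{lem:fpslimit}.

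Second I would bound the space of an intermediate state $(\Gamma,\nabla,\theta)$ along such a branch. The set of expression variables never grows, so $\numberVar \le S$. The number of atoms that ever appear is bounded by $S + |\AtVar(\text{input})| \le 2S$, since (ElimA), called from $\ELIMAB(p)$, introduces at most one fresh atom per atom variable, and atom variables are never created. By Lemma \ref{lem:fpslimit}, the number of fixpoint equations attached to any single expression variable stays $\le p(S) \le q(S)$, which is polynomial. The remaining part of $\Gamma$ is controlled by the lexicographic measure $(\numberVar,\numberSize,\numberEquations)$ exactly as in Theorem \ref{thm:unification-terminates}: the rule table there extends verbatim to rules (1)--(6), (MMS), (ElimFP) and (Output), and the only genuinely new growth comes from (FPS), which increases $\numberSize$ by $O(S \cdot p(S))$ because it duplicates $e$ once per surviving fixpoint equation. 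Finally, all permutations are stored in the grammar $G$; by Propositions \ref{prop:grammar-simplify-poly} and \ref{prop:grammar-poly} each grammar extension is polynomial in $S$ and constant per rule application, so the total space for permutations is polynomial.

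Third I would bound the number of rule applications. Rules (2), (3a), (3b) and (FPS) strictly decrease $\numberVar$, hence fire at most $S$ times. With the $p(S)$-bound on fixpoint equations, each (FPS) contributes at most $O(S \cdot p(S))$ to $\numberSize$; rules (4)--(7) and (MMS) strictly decrease $\numberSize$, so the total number of their applications is at most polynomial in $S$ and $p(S)$, hence in $q(S)$. The rule (ElimFP) strictly decreases $\numberEquations$, and $\ELIMAB(p)$ strictly decreases the number of remaining atom variables, so each fires at most $O(S)$ many times in total. Each individual rule application is polynomial: permutation composition, inversion and application to an atom are polynomial by the grammar machinery, the check in (ElimFP) is polynomial by the results on permutation groups cited in Section \ref{subsec:permutation-groups}, and the $\val$-computations required for instantiation in (ElimA) are polynomial by Proposition \ref{prop:grammar-simplify-poly}. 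Putting these together gives polynomial space and polynomial time along a chosen branch.

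The main obstacle I expect is the accounting of the feedback between (MMS), (FPS) and $\ELIMAB(p)$: a single (MMS) application can enable many new (FPS) applications, each of which can, in principle, grow the fixpoint collections of other variables. I would handle this by a potential argument that uses the invariant from Lemma \ref{lem:fpslimit} together with the observation that every crossing of the $p(S)$-threshold triggers $\ELIMAB(p)$, which both eliminates at least one atom variable and collapses the corresponding fixpoint block back below $S\log(S) \le p(S)$ via (ElimFP). Because the number of atom variables is at most $S$ and strictly decreases on each $\ELIMAB(p)$ call, the total ``credit'' paid to (FPS) over the whole run is polynomial in $S$ and $q(S)$, which closes the complexity bound and yields the theorem.
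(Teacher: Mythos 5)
Your proposal follows essentially the same route as the paper's proof: it reuses the lexicographic measure analysis of Theorem \ref{thm:unification-terminates} with Lemma \ref{lem:fpslimit} supplying both the non-stuck property and the $p(S)$ bound on fixpoint equations per variable, and Propositions \ref{prop:grammar-simplify-poly} and \ref{prop:grammar-poly} keeping the compressed permutations polynomial, exactly as the paper does (the paper simply tabulates the rule effects and concludes $O(S^2p(S))$ rule applications rather than invoking your potential/credit argument). The only slip is bookkeeping: rule (2) does not decrease $\numberVar$ (it merely moves an atom/atom-variable equation into $\nabla$), which is why the paper adds the fourth measure component $\numberEqsNonX$; this is trivially repaired and does not affect the argument.
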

\begin{proof}
The proof is inspired by the  proof of Theorem \ref{thm:unification-terminates}, and uses Lemma \ref{lem:fpslimit} that shows that the number of fixpoint-equations for a single variable is at most $p(S)$.

\medskip
Below we show some estimates on the size and the number of steps. 
	The termination measure $(\numberVar, \numberSize,  \numberEquations, \numberEqsNonX)$,
  	 which is ordered lexicographically,  is as follows:
\begin{description}
\itemsep=0.9pt
\item[$\numberVar$] is the number of different variables in $\Gamma$,
 \item[$\numberSize$] is the number of letrec-, $\lambda$, function-symbols and atoms
	              in $\Gamma$, but not in permutations,
  \item[$\numberEquations$] is the number of equations in $\Gamma$, and
  \item[$\numberEqsNonX$] is the number of equations where non of the equated expressions is a variable.
  \end{description}
		
	Since shifting permutations down and simplification of freshness constraints both terminate and do not increase the measures, we
	only compare states which are normal forms for shifting down permutations and simplifying freshness constraints.
	
	The following table shows the effect of the rules:
	Let $S$ be the size of the initial $(\Gamma_0, \nabla_0)$ where $\Gamma $ is already flattened.
Again,  the entries $+W$ represent a size increase of at most $W$ in the relevant measure component.

{\small{ $$
 \begin{array}{l||l|l|l|l}
              & \numberVar&\numberSize  &\numberEquations & \numberEqsNonX  \\ \hline   
   (3)            & <    & \leq         &    =        &   \leq\\
  \mbox{(FPS)}    & <    &  +2p(S)      &    <        &  + 2p(S)  \\
  \mbox{(MMS)}    & =    & <            &   +2S       &     =       \\
 (4),(5),(6),(7)   & =    & <            &   +S        &    \leq     \\
 \mbox{\ELIMAB(p)} & =   &  =           & <           &  \leq    \\
   (1)            & \leq &  \leq        & <           &   \leq     \\
   (2)            & =    &  =           & =           & <   \\
 \end{array}
 $$ } }
 	The table shows  that the rule applications strictly decrease the measure.
	The entries can be verified by checking the rules, and using the argument that there are not more than $p(S)$ fixpoint equations for a single
	variable $X$.
	We use the table to argue on the number of rule applications and hence the complexity:
	 The rules (3) and (FPS) strictly reduce the number of variables in $\Gamma$ and can be applied at most $S$ times.
	 The rule (FPS) increases the second measure at most by $2p(S)$, since the number of symbols may be increased as often as there are
	 fixpoint-equations, and there are at most $p(S)$.  Thus the measure $\numberSize$ will never be greater than $2Sp(S)$.
	
\medskip
	 The rule (MMS) strictly decreases $\numberSize$, hence $\numberEquations$, i.e. the number of equations, is bounded by $4S^2p(S)$.
	 The same bound holds for $\numberEqsNonX$.
	 Hence the number of rule applications is $O(S^2p(S))$.
       Of course, there may be a polynomial effort in executing a single rule, and by Proposition \ref{prop:grammar-poly}
        the contribution of the grammar-operations
        is also only polynomial.
      Finally, since $p(x)$ is polynomially bounded by $q(x)$, the algorithm can be executed in polynomial time.
\end{proof}

\section{Nominal letrec matching with environment variables}\label{sec:atom-letrec-match}

 We extend the language $\LRLXA$ by variables $\Env$ that may encode (partial) letrec-environments for a nominal matching algorithm,
 which leads to a larger coverage of
practically occurring nominal matching problems    
in reasoning about the (small-step operational) semantics of programming languages with letrec.  

\begin{example}
	Consider as an example a rule  (llet-e) of the operational semantics of a functional core language, which merges \tletr-environments
	(see \cite{schmidt-schauss-schuetz-sabel:08}):
	$(\tletr~\Env_1~\tin~ (\tletr~\Env_2~\tin~ X))~ \to~ (\tletr~\Env_1;\Env_2~\tin~ X).$
	It can be applied to an expression $(\tletr~a.0;b.1~\tin~(\tletr~c.(a,b,c)~\tin~c))$ as follows:
	The left-hand side $(\tletr~\Env_1~\tin$~ $(\tletr~\Env_2~\tin~ X))$ of the reduction rule   matches
	$(\tletr~a.0;b.1~\tin$ $(\tletr$ $c.(a,b,c)~\tin~c))$ with the match:
	$\{\Env_1 \mapsto \{a.0;b.1\}; \Env_2 \mapsto \{c.(a,b,c)\}; X \mapsto c\}$,
	producing the next expression as an instance of the right hand side $(\tletr~\Env_1;\Env_2~\tin~ X)$, which is
	$(\tletr~a.0;b.1;c.(a,b,c) ~\tin~c)$.
	Note that for application to extended lambda calculi,
	more care is needed \wrt scoping in order to get valid reduction results in all cases. The restriction
	that a single letrec environment binds different variables becomes more important.
	The reduction (llet-e) is  correctly applicable, if the target expression satisfies the so-called distinct variable convention,
	i.e., if all bound variables are different and if all free variables in the expression are different from all bound variables.
	In this section we will add freshness constraints that enforce different binders in environments.
	
	An alternative that is used for a  similar unification task in \cite{schmidt-schauss-sabel:16}
	requires the additional construct of non-capture constraints: $\mathit{NCC}(\env_1, \env_2)$,
	which means that for every valid instantiation $\rho$,
	variables occurring free in $env_1\rho$ are not captured by the top letrec-binders in  $env_2\rho$.
	In this paper we focus on nominal matching for the extension with environment variables, and leave the investigation of
	reduction rules and sequences for further work.
\end{example}

\begin{definition}
	The grammar for the extended language $\LRLXAE$ ({\bf L}et{\bf R}ec {\bf L}anguage e{\bf X}tended with {\bf A}tom variables and
	{\bf E}nvironment) variables $E$ is:

		\[\begin{array}{lcl}
  V  & ::=&   a \mid A   \\
  W  & ::=&   \pi \cdot V\\
   \pi & ::=&  \emptyset \mid  (W~W)  \mid \pi{\circ}\pi\\
    e & ::=&  \pi{\cdot}X  \mid W  \mid  \lambda W.e \mid (f~e_1~\ldots e_{\ari(f)})~| ~(\tletr~\env~\tin~e) \\
    \env & ::= & E \mid W.e \mid \env;\env  \mid \emptyset       
    \end{array}
\]
\end{definition}

\begin{figure}[!b]\small
\begin{flushleft}
 $(1)~\gentzen{\Gamma \dotcup\{e \matcheq  e\}}{\Gamma}$  \qquad
%
  $(2)~\gentzen{\Gamma \dotcup\{\pi_1{\cdot}A \matcheq a\},\nabla, \theta  }
     {\Gamma[\pi_1^{-1}{\cdot}a/A],  \nabla[\pi_1^{-1}{\cdot}a/A] , \theta \cup \{A \mapsto \pi_1^{-1}{\cdot}a\}}$     
   \quad \\[2mm]
     $(3)~\gentzen{\Gamma \dotcup\{\pi_1{\cdot}X \matcheq  e\},\nabla, \theta  }
     {\Gamma[\pi_1^{-1}{\cdot}e/X],  \nabla[\pi_1^{-1}{\cdot}e/X], \theta \cup \{X \mapsto \pi_1^{-1}{\cdot}e\}}$
   \qquad 
%
%
$(4)~ \gentzen{\Gamma\dotcup\{(f~e_1 \ldots e_n)) \matcheq (f~e_1' \ldots e_n')\}}   
   {\Gamma \cup \{e_1 \matcheq e_1',\ldots e_n \matcheq e_n'\}}$  
\\[2mm]
$(5)~ \gentzen{\Gamma\dotcup\{(\lambda a.e_1 \matcheq \lambda a.e_2\}}   
      {\Gamma \cup \{e_1 \matcheq e_2\}}$    
   \qquad %
    $(6)~ \gentzen{\Gamma\dotcup\{(\lambda W.e_1 \matcheq \lambda a.e_2\},\nabla}   
      {\Gamma \cup \{(W~a){\cdot}e_1 \matcheq e_2\}, \nabla \cup \{a \freshdot \lambda W.e_1\}}$    
     \\[3mm]
 $(7)~ \gentzen{\Gamma \dotcup \left\{
	\begin{array}{l}
	\tletr ~W_1.e_1;\ldots;W_n.e_n~\tin~ e \matcheq \\
	\tletr ~a_1.e'_1;\ldots;a_n.e'_n~\tin~e'
	\end{array}
	 \right\}, \nabla  \qquad
	    \begin{array}{l} \mbox{If the left hand side environment} \\
	                    \mbox{does not contain environment variables.}
	    \end{array}
	 }
       {\begin{array}{l}
         {\begin{array}{ll}
         \hspace*{-1mm}
         {\begin{array}{c}
               \left|\begin{array}{c}
                ~\\~\\~\\
               \end{array}\right.
               \\
          \hspace*{-3mm} {\{\rho\}}
           \end{array}}
         &
         \left( \begin{array}{l}
             \Gamma \,\mathop{\cup}\, \left\{
              \begin{array}{l}
                 ~ \mathrm{decompose}(n\!+\!1,  \lambda W_1 \ldots \lambda W_n.(e_1,\ldots,e_n, e)) \\
                    \hspace*{0.7cm}\matcheq  \lambda a_{\rho(1)}.\ldots \lambda a_{\rho(n)}.(e'_{\rho(1)},\ldots,e'_{\rho(n)}, e'))
              \end{array}\right\},   \\
              ~ \\[-5mm]
              \nabla \cup
              \left\{
              \begin{array}{l}
                 ~ \mathrm{decompfresh}(n\!+\!1,  \lambda W_1 \ldots \lambda W_n.(e_1,\ldots,e_n, e)) \\
                    \hspace*{0.7cm}\doteq  \lambda a_{\rho(1)}.\ldots \lambda a_{\rho(n)}.e'_{\rho(1)},\ldots,e'_{\rho(n)}, e'))
              \end{array}\right\}
              \end{array}     \right)
             \end{array}}
                  ~\\[2mm]
             \mbox{where $\rho$ is a  permutation on $\{1,\ldots,n\}$ and $\mathrm{decompose}(n,.)$ is the equation part of $n$-fold } ~\\
               \mbox{ application of rules (4), (5) or (6)  and $\mathrm{decomposefresh}(n,.)$ is the freshness constraint} ~\\
               \mbox{ part of the $n$-fold application of rules (4), (5) or (6).}  \\
         \end{array}}
      $
      \\[3mm]
  $(8)~ \gentzen{\Gamma\dotcup\{\tletr ~W_1.e_1;\ldots;E; \ldots;W_n.e_n~\tin~e \matcheq
	\tletr ~a_1.e'_1;\ldots;a_n.e'_m~\tin~e'\},\nabla,\theta}   
   {\begin{array}{@{}l@{}}
    \begin{array}{@{}l@{}l@{}}
         \hspace*{-1mm}
         {\begin{array}{@{}c@{}}
               \left|\begin{array}{@{}c@{}}
                ~\\
               \end{array}\right.
               \\
           {\hspace*{-3mm}\{\sigma\}}
           \end{array}}
         &
     ((\Gamma{\cup}\{\tletr ~W_1.e_1;\ldots;E;\ldots;W_n.e_n~\tin~e \matcheq
	 \tletr~a_1.e'_1;\ldots;a_n.e'_m~\tin~e'\})\sigma, \nabla \sigma, \theta{\cup}\sigma
	 ~\\[2mm]
           &  \mbox{where  $\sigma = \{E \mapsto A_1.X_1, \ldots  A_k.X_k\}$  where $A_i, X_i$ are fresh variables and $k \leq m-n$.}
      \end{array}
      \end{array}}    $
 \end{flushleft}\vspace*{-3mm}
\caption{Standard  and decomposition matching rules with environment variables of \letrecenvmatch.}\label{LRAE-match1}
 \end{figure}

We define a nominal matching algorithm, where in addition environment variables may
occur (also non-linear) in left hand sides, but not in the right hand sides.

\medskip
The matching algorithm with environment variables is described below. It can be obtained from the algorithm {\letrecunifyA} by adding a rule
that (nondeterministically) instantiates environment variables by environments of the form $W_1.X_1; \ldots;W_k.X_k$. This
   can eliminate all environment variables. After this operation of eliminating all environment variables,
   it is possible to use the algorithm {\letrecunifyA}. However, since the equations are match-equations, it is possible to
   derive simplified and optimized rules of {\letrecunifyA}.  We describe the rules explicitly, in order to exhibit the optimization
   possibilities of a matching algorithm compared with a unification algorithm.

\begin{definition}\label{def:matchalg-env}
	The matching algorithm  \letrecenvmatch\   is described in Fig. \ref{LRAE-match1}.
	Permitted inputs are
	matching equations between expressions, i.e. variables are only permitted in left hand sides of (matching) equations.
	The don't know-nondeterminism
	is indicated in the respective rules.\\
	It is assumed that in the input as well as after instantiating the $\env$-variables, the freshness constraints contain constraints
	that prevent that a letrec-environment contains bindings with the same binder (see Remark \ref{remark:prevent-illegal-letrecs}).
	The result is a substitution, a freshness constraint and a substitution.
\end{definition}

We omit failure rules, since these obviously follow from the nominal matching algorithm.
 Guessing the number of instances into environment variables may lead to clashes due to a wrong number of bindings in environments.
 An implementation can be more clever by checking the possible number of bindings before guessing.

 It is easy to see that the problem itself is in NP, by the following argument:
 Guess atom-variables in the left hand side, where we only have to choose from the already existing atom-variables in the problem and a fresh atom,
 and iterate this until all atom-variables are replaced. Then we guess the
   environment variables in a general way, as in rule (8), where the  number of bindings is at most the maximal number of bindings in the
   letrec-environments in the right hand side. This (non-deterministic) guessing and replacement is polynomial.
   Then we can apply Theorem \ref{thm:letrec-unification-in-NP}.
 Since the rules of  {\letrecenvmatch} are simplified rules of the algorithm {\letrecunifyA}, we obtain:

\begin{theorem}\label{thm:letrecenvmatch}
  The nominal matching algorithm  {\letrecenvmatch} is sound and complete and runs in NP time.
\end{theorem}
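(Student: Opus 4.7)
The plan is to reduce the three claims (soundness, completeness, NP membership) to the corresponding properties of {\letrecunifyA} (Theorem~\ref{thm:LRLXA-strategies-complete}) together with an analysis of the new rule (8) for environment variables. First I would observe that every rule in Fig.~\ref{LRAE-match1} other than (8) is either a direct specialisation of a rule of {\letrecunifyA} to the one-sided matching setting (where the right-hand side is ground and no variables of the target can be introduced on the right), or is the obvious matching counterpart of instantiation. Hence soundness of rules (1)--(7) follows by the same equational reasoning as for the unification algorithm: each transition preserves the set of ground matching substitutions that satisfy the current $(\Gamma,\nabla)$. Rule (8) preserves solutions because any ground match of $E$ is by definition of the form $W_1.X_1;\dots;W_k.X_k$ with $k \le m - n$ (the right-hand environment has only $m$ bindings, and the remaining $n$ top bindings on the left absorb the rest), so the don't-know branching on $\sigma$ covers all solutions.

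For completeness, I would show that every ground matching substitution $\rho$ of the input is covered by some choice sequence of the nondeterministic rules. The only essentially new nondeterminism compared with {\letrecunifyA} is (8). Given $\rho$, the instance $E\rho$ is a concrete environment of some size $k \le m - n$; choose that $k$ in (8) and let fresh $A_i,X_i$ be the ``skeleton'' of $E\rho$. After this substitution, the problem contains no environment variables, so completeness follows from the completeness argument for {\letrecunifyA}, specialised to match equations (the failure rules are the obvious restrictions, and since right-hand sides are ground, suspensions and fixpoint equations cannot arise on right-hand sides, which is why many rules simplify).

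For the NP bound, the cleanest route is the reduction sketched just before the theorem: guess all instances of atom-variables (each one from the finite pool of atoms occurring in the problem plus a single fresh atom, cf.\ Proposition~\ref{prop:LRLXA-in-NP}), and guess all instances of environment variables using rule (8); the total number of such guesses is linear in the input, since each environment variable is replaced once and the size $k$ for each is bounded by the total number of bindings on the right-hand side. After these polynomially many guesses the remaining problem contains no $\env$-variables and is a legal instance of {\letrecunifyA} of size polynomial in the original input, so it can be decided in nondeterministic polynomial time by Theorem~\ref{thm:LRLXA-strategies-complete} (applied with any admissible threshold $p$, noting that in the matching case $X$-variables are all eliminated by rule (3) so no fixpoint equations survive and \letrecunifyAB\ in fact specialises further).

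The main obstacle I expect is the interaction of rule (8) with the freshness constraints that forbid two bindings in the same letrec-environment from sharing a binder (Remark~\ref{remark:prevent-illegal-letrecs}): after instantiating $E$ with a fresh skeleton $A_1.X_1;\dots;A_k.X_k$ one must augment $\nabla$ with all pairwise disjointness constraints $A_i \freshdot A_j$ and with disjointness against the already-present $W_l$ in the same environment, and one has to check that these constraints remain solvable exactly when the intended ground match exists. This bookkeeping is what justifies that the post-(8) problem is a well-formed {\letrecunifyA}-instance, and is where the soundness-and-completeness equivalence ultimately rests; the remainder of the argument is a routine transcription of the arguments already given for {\letrecunify} and {\letrecunifyA}.
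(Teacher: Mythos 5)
Your proposal is correct and follows essentially the same route as the paper: the paper likewise argues NP membership by guessing atom-variable instances (from the existing atoms plus one fresh atom) and environment-variable skeletons bounded by the number of right-hand-side bindings, then appealing to the already-established NP results for letrec unification, and it justifies soundness and completeness by observing that the rules of \letrecenvmatch{} are simplified rules of \letrecunifyA. The one point you flag as an obstacle --- adding pairwise binder-disjointness freshness constraints after instantiating $E$ --- is handled in the paper by stipulation in Definition~\ref{def:matchalg-env} (via Remark~\ref{remark:prevent-illegal-letrecs}), so your extra bookkeeping is consistent with, and somewhat more explicit than, the paper's argument.
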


\section{Conclusion and future research}\label{sec:conclusion}

We construct  nominal unification algorithms for expressions with letrec, for the case where only atoms are permitted,
and also for the case
where in addition atom variables are permitted. We also describe several nominal letrec matching algorithms for variants, in particular also for
expressions with environment variables.
All algorithms run in (nondeterministic) polynomial time.
Future research is to investigate  extensions of nominal unification with environment variables $\Env$,
perhaps as an extension of the matching algorithm.

Future work is also an investigation into the connection with equivariant nominal unification
\cite{cheney-JAR:2010,cheney-diss:2004,aoto-kikuchi:16},
and  to investigate nominal matching together with equational theories.
Also applications of nominal techniques to reduction steps in operational semantics of calculi with letrec
and  transformations should be more deeply investigated. Also
 nominal unification in the combination of  letrec, environment variables and atom variables is subject to future research.

\subsection*{Acknowledgements}

The research of  Manfred Schmidt-Schau{\ss} was partially supported by the Deutsche Forschungsgemeinschaft (DFG)
under grant SCHM 986/11-1.

\smallskip
\noindent The research for the author Temur Kutsia was partially supported by the Austrian Science  Fund (FWF) project P 28789-N32.

\smallskip
\noindent The research of Jordi Levy was partially supported by the MINECO/FEDER
projects RASO (TIN2015-71799-C2-1-P) and LoCoS (TIN2015-66293-R).

\smallskip
\noindent The research of Mateu Villaret was partially supported by UdG project MPCUdG2016/055.

\smallskip
\noindent We thank the reviewers for their detailed comments that greatly helped to improve the paper.

\end{document}